\newcommand{\nmodels}{\ensuremath{\not\models}}
\newcommand{\mathtext}[1]{\ensuremath{\mathrm{\text{#1}}}}
\newcommand{\set}[1]{\ensuremath\left\{#1\right\}}
\newcommand{\complexityclassname}[1]{\ensuremath{\mathrm{#1}}}
\newcommand{\complementclass}[1]{\complexityclassname{co}#1}
\newcommand{\PTIME}{\complexityclassname{P}}
\newcommand{\PSPACE}{\complexityclassname{PSPACE}}
\newcommand{\NP}{\complexityclassname{NP}}
\newcommand{\LOGSPACE}{\complexityclassname{LOGSPACE}}
\newcommand{\CONP}{\complementclass{\NP}}
\newcommand{\redlogm}{\ensuremath{\leq_{m}^{\log}}}
\newcommand{\eqlogm}{\ensuremath{\equiv_{m}^{\log}}}
\newcommand{\redpm}{\ensuremath{\leq_{m}^{p}}}
\newcommand{\redeqpm}{\ensuremath{\equiv_{m}^{p}}}
\newcommand{\prblemname}[1]{\ensuremath{\mathsf{#1}}}
\newcommand{\clonename}[1]{\mathrm{#1}}
\newcommand{\cM}{\ensuremath{\clonename{M}}}
\newcommand{\cR}{\ensuremath{\clonename{R}}}
\newcommand{\cBF}{\ensuremath{\clonename{BF}}}
\newcommand{\cS}{\ensuremath{\clonename{S}}}
\newcommand{\cD}{\ensuremath{\clonename{D}}}
\newcommand{\cV}{\ensuremath{\clonename{V}}}
\newcommand{\cE}{\ensuremath{\clonename{E}}}
\newcommand{\cL}{\ensuremath{\clonename{L}}}
\newcommand{\cI}{\ensuremath{\clonename{I}}}
\newcommand{\cN}{\ensuremath{\clonename{N}}}
\newcommand{\clone}[1]{\ensuremath{\left[ #1\right]}}
\newcommand{\dual}[1]{\mathrm{dual}\left(#1\right)}
\newcommand{\card}[1]{\left| #1 \right|}
\newcommand{\pred}{\ensuremath{\text{pred}}} 
\newcommand{\wpred}{\ensuremath{\wedge\text{-pred}}}
\newcommand{\md}[1]{\text{\it md}\!\left(#1\right)}
\newcommand{\smd}[1]{\text{\scriptsize{\it md}}\left(#1\right)}
\newcommand{\algspace}{\vspace*{1mm}}
\newcommand{\multform}[3]{\mathrm{MFORM}_{#1}^{#2}\left(#3\right)}
\newcommand{\multcirc}[3]{\mathrm{MCIRC}_{#1}^{#2}\left(#3\right)}
\newcommand{\formsat}[4]{\ensuremath{#1\mathtext{-}\prblemname{FSAT}_{#2}^{#3}\left(#4\right)}}
\newcommand{\circsat}[4]{\ensuremath{#1\mathtext{-}\prblemname{CSAT}_{#2}^{#3}\left(#4\right)}}
\newcommand{\formtaut}[4]{\ensuremath{#1\mathtext{-}\prblemname{FTAUT}_{#2}^{#3}\left(#4\right)}}
\newcommand{\circtaut}[4]{\ensuremath{#1\mathtext{-}\prblemname{CTAUT}_{#2}^{#3}\left(#4\right)}}
\newcommand{\logicname}[1]{\ensuremath{\text{\upshape{\ensuremath{\mathrm{#1}}}}}}
\newcommand{\K}{\logicname{K}}
\newcommand{\T}{\logicname{T}}
\newcommand{\KD}{\logicname{KD}}
\newcommand{\Sfour}{\logicname{S4}}
\newcommand{\Sfive}{\logicname{S5}}
\newcommand{\Kfour}{\logicname{K4}}
\newcommand{\algname}{\ensuremath{\oplus\text{\sc -Sat}}}
\newcommand{\algorithmicprocedure}[1]{\textbf{Procedure #1}}
\newcommand{\algorithmicendproc}{\algorithmicend\ \algorithmicprocedure{}}
\newenvironment{ALC@proc}{\begin{ALC@g}}{\end{ALC@g}}
\newcommand{\BEGINPROC}[1][default]{\ALC@it\algorithmicprocedure%
\ALC@com{##1}\begin{ALC@proc}}
    \newcommand{\ENDIF}{\end{ALC@if}}
    \newcommand{\ENDFOR}{\end{ALC@for}}
    \newcommand{\ENDWHILE}{\end{ALC@whl}}
    \newcommand{\ENDLOOP}{\end{ALC@loop}}
    \newcommand{\ENDPROC}{\ed{ALC@proc}}
    \newcommand{\ENDPROC}{\end{ALC@proc}\ALC@it\algorithmicendproc}
\title{Generalized Modal Satisfiability
\thanks{Supported in part by the DAAD Postdoc Program, by grants NSF-CCR-0311021, NSF-IIS-0713061, and DFG VO 630/5-1, and by a Friedrich Wilhelm Bessel
Research Award. Work done in part while the second and third authors worked at the Leibniz Universit\"at Hannover. An earlier version of some of the results appeared as~\cite{bhss05b}.}}
\begin{document}

\pagestyle{plain}

\author{%
  Edith Hemaspaandra
  \and
  Henning Schnoor
  \and
  Ilka Schnoor
}

\institute{%
  Department of Computer Science,
  Rochester Institute of Technology,
  Rochester, NY 14623, U.S.A.
  \email{\{eh,hs,is\}@cs.rit.edu}
} 

\bibliographystyle{alpha}
\maketitle

\begin{abstract}
It is well known that modal satisfiability is \PSPACE-complete \cite{lad77}. However, the complexity may decrease if we restrict the set of propositional operators used. Note that there exist an infinite number of propositional operators, since a propositional operator is simply a Boolean function. We completely classify the complexity of modal satisfiability for every finite set of propositional operators, i.e., in contrast to previous work, we classify an infinite number of problems. We show that, depending on the set of propositional operators, modal satisfiability is \PSPACE-complete, \CONP-complete, or in \PTIME.  We obtain this trichotomy not only for modal formulas, but also for their more succinct representation using modal circuits. We consider both the uni-modal and the multi-modal case, and study the dual problem of validity as well.

\bigskip

\noindent{\bf Keywords:} computational complexity, modal logic
\end{abstract}

\section{Introduction}

Modal logics are valuable tools in computer science, since they are often a good compromise between expressiveness and decidability. Standard applications of modal logics are in artificial intelligence \cite{moo79,mcsahaig78}, and cryptographic and other protocols \cite{frhuje02,codofl03,hamotu88,lare86}. More recent applications include a new modal language called Versatile Event Logic \cite{bega04}, and the usage to characterize the relationship among belief, information acquisition, and trust \cite{lia03}. 

Applications of modal logic for solving practical problems obviously require a study of the computational complexity of various aspects of modal logics. A central computational problem related with any logic is the satisfiability problem, that is to decide whether a given formula has a model. The first complexity results for the modal satisfiability problem were achieved by Ladner~\cite{lad77}. He showed that the basic modal satisfiability problem is \PSPACE-complete. There is a rich literature on the complexity of variants of the modal satisfiability problem, important works include the paper by Halpern and Moses \cite{hamo92} on multi-modal logics. Recently, \PSPACE-algorithms for a wide class of modal logics were presented by Schr\"oder and Pattinson~\cite{schpa06}.

For modal logics to be used in practice, a lower complexity of the satisfiability problem than the aforementioned \PSPACE-hardness is desirable. It turns out that for many applications, the full power of modal logic is not necessary. There are various ways of defining restrictions of modal logics which potentially lead to a computationally easier version of the satisfiability problem that have been studied: Variations of modal logics are achieved by restricting the class of considered models, e.g., instead of allowing arbitrary graphs, classical examples of logics only allow reflexive, transitive, or symmetric graphs as models. Many complexity results for logics defined in this way have been achieved: Initial results for many important classes are present in the above-mentioned work by Ladner~\cite{lad77}. Recently, Hemaspaandra and Schnoor considered a uniform generalization of many of these examples~\cite{hem-schnoor-08:elementary-modal-logics-STACS}. It should be noted that such restrictions do not necessarily decrease the complexity; for many common restrictions, the complexity remains the same~\cite{lad77,hamo92} and it is even possible that the complexity increases. In \cite{hem96}, Hemaspaandra showed that the complexity of the global satisfiability problem increases from EXPTIME-complete to undecidable by restricting the graphs to those in which every node has at least two successors and at most three 2-step successors.

Another way of restricting modal logics is to change the syntax rather than the semantics, i.e., restrict the structure of the considered modal formulas. Syntactical restrictions are known to naturally reduce the complexity of many decision problems in logic. In propositional logic, well-known examples are the satisfiability problems for Horn formulas, 2CNF formulas, or formulas describing monotone functions: All of these can be solved in polynomial time, while the general propositional satisfiability problem is \NP-complete. Syntactical restrictions have been considered in the context of modal logics before: Halpern showed that the complexity of the modal satisfiability problem decreases to linear time when restricting the number of variables and nesting degree of modal operators \cite{hal95}. Restricted modal languages where only a subset of the relevant modal operators are allowed have been studied in the context of linear temporal logic (see, e.g.,~\cite{sicl85}). Some description logics can be viewed as modal logic with
a restriction on the propositional operators that are allowed.
For the complexity of description logics, see, e.g.,~\cite{ss91,dhlnns92,dlnn97}.
For the complexity of modal logic with other restrictions on the set
of operators, see~\cite{hem01}.

The approach we take in the present paper is to generalize the occurring propositional operators in the formulas. Instead of the operators $\wedge,\vee$ and negation, we allow the appearing operators to represent arbitrary Boolean functions. In particular, there are an infinite number of Boolean operators. We completely classify the complexity of modal satisfiability for every
finite set of propositional operators. The restriction on the propositional 
operators leads to a classification following the structure of Post's Lattice
\cite{pos41}, a tool that has been applied in similar contexts before: For propositional logic, Lewis showed that the satisfiability problem is dichotomic: Depending on the set of operators, propositional satisfiability
is either \NP-complete or solvable in polynomial time \cite{lew79}. For modal 
satisfiability, we achieve a trichotomy: For the modal logic \K, the satisfiability problem is \PSPACE-complete, \CONP-complete, or in \PTIME. We also achieve a full classification for the logic \KD\ (in this case, we show a \PSPACE/\PTIME-dichotomy), and almost complete classifications for the logics \T, \Sfour, and \Sfive.

When considering sets of operations which do not include negation, the
complexity for the cases where one modal operator is allowed
sometimes differs from the case where we allow both operator
$\Diamond$ and its dual operator $\Box$. With only one of these,
modal satisfiability is \PSPACE-complete exactly in those cases
in which propositional satisfiability is \NP-complete. When we
allow both modal operators, the jump to \PSPACE-completeness happens
earlier, i.e., with a set of operations with less expressive power.

We consider several generalizations of the problems outlined above. In particular, we introduce \emph{modal circuits} as a succinct way of representing modal formulas. We show that this does not give us a significantly different complexity than the formula case. We also consider multi-modal logics, in which several independant modal operators are introduced.

In addition to the satisfiability problem, we also study the validity problem, where we do not ask whether a formula is satisfiable, but whether it is true in every possible model. Since our restricted modal languages do not always include negation, the complexity of this problem turns out to be different from, but related to, the complexity of the satisfiability problem.

An interesting case in our classifications is the case where we only allow the propositional exclusive-or and constants as propositional operators. For purely propositional logics, it is very easy to see that satisfiability for these formulas (essentially linear equations over GF(2)) can be decided in polynomial time. In the case of modal logics, an analogous result holds, but the proof requires significantly more work. As in the propositional case, it yields an optimal solution to the \emph{minimization problem} as well: Given a modal formula or modal circuit using only these propositional operators, we can efficiently compute an equivalent formula or circuit of minimal size.

The structure of the paper is as follows: In Section~\ref{section:preliminaries}, we introduce the necessary definitions, recall results from the literature, and prove some basic facts about our problems. Section~\ref{section:satisfiability} contains our main results: The complete classification of the complexity of the modal satisfiability problem for every possible set of Boolean operators. In Section~\ref{section:validity} we prove a relationship between satisfiability and validity implying a full classification of this problem as well. We conclude in Section~\ref{section:conclusion} with some open questions for future research.

\section{Preliminaries}\label{section:preliminaries}

\subsection{Modal Logic}

Modal logic is an extension of classical propositional logic that talks about ``possible worlds.'' We first introduce the usual uni-modal logic, and then generalize it to the multi-modal case. Uni-modal logics enrich the vocabulary of propositional logic with an additional unary modal operator $\Diamond.$ A model for a given formula consists of a directed graph with propositional assignments. To be more precise, a \emph{frame} consists of a set $W$ of ``worlds,'' and a ``successor'' relation $R\subseteq W\times W$. For $(w,w')\in R$, we say $w'$ is a \emph{successor} of $w$. A \emph{model} $M$ consists of a frame $(W,R)$, a set $X$ of propositional variables, and a function $\pi\colon X\rightarrow {\cal P}(W).$ The intuition is that for $x\in X,$ $\pi(x)$ denotes the set of worlds in which the variable $x$ is true. The operator $\Box$ is the dual operator to $\Diamond,$ $\Box\varphi$ is defined as $\neg\Diamond\neg\varphi$. Intuitively, $\Diamond\varphi$ means ``there is a successor world in which $\varphi$ holds,'' and $\Box\varphi$ means ``$\varphi$ holds in all successor worlds.'' For a class $\cal F$ of frames, we say a model $M$ is an $\cal F$-model if the underlying frame is an element of $\cal F$.

In multi-modal logic, a finite number of these modal operators is considered, where each operator $\Diamond_i$ corresponds to an individual successor relation $R_i.$ For a modal logic with $k$ modalities, a frame again consists of a set $W$ of worlds, and successor relations $R_1,\dots,R_k\subseteq W\times W.$ If $(w,w')\in R_i,$ we say that $w'$ is a \emph{$i$-successor} of $w.$ For a formula $\varphi$ built over the variables $X,$ propositional operators $\wedge$ and $\neg,$ and the unary modal operators $\Diamond_1,\dots,\Diamond_k,$ we define what ``$\varphi$ holds at world $w$'' means for a model $M$ (or $M,w$ \emph{satisfies} $\varphi$) with assignment function $\pi,$ written as $M,w\models\varphi$.

\begin{itemize}
\item If $\varphi$ is a propositional variable $x,$ then $M,w\models\varphi$ if and only if $w\in\pi(x)$,
\item $M,w\models\varphi_1\wedge\varphi_2$ if and only if ($M,w\models\varphi_1$ and $M,w\models\varphi_2$),
\item $M,w\models\neg\varphi$ if and only if $M, w \not \models \varphi$, 
\item for $i\in\set{1,\dots,k}$, $M,w\models\Diamond_i\varphi$ if and only if there is a world $w'\in W$ such that $(w,w')\in R_i$ and $M,w'\models\varphi$.
\end{itemize}

Analogously to the unimodal case, the operator $\Box_i$ is defined as $\Box_i\varphi=\neg\Diamond_i\neg\varphi$. For a class $\mathcal F$ of frames, we say a formula $\varphi$ is
$\mathcal F$-satisfiable if there exists an $\cal F$-model $M=(W,R, \pi)$ and
a world $w\in W$ such that $M,w\models\varphi$. For modal formulas $\varphi$ and $\psi,$
we write $\varphi\equiv_{\cal F} \psi$ if for every world in every
$\cal F$-model, $\varphi$ holds if and only if $\psi$ holds.
Note that a formula $\varphi$ is $\cal F$-satisfiable iff $\varphi \not\equiv_{\cal F} 0$. Similarly, we say that $\varphi$ is an $\cal F$-tautology if $\varphi\equiv_{\cal F} 1$, and 
finally $\varphi$ is $\cal F$-constant if $\varphi\equiv_{\cal F} 0$ or $\varphi\equiv_{\cal F} 1$.

\begin{center}
\begin{table}
  \begin{tabular}{ll|ll}
    \K & \hspace*{1mm} & \hspace*{1mm} & All frames \\ \hline
    \KD &&& Frames in which every world has a successor \\ \hline
    \Kfour &&& Transitive frames \\ \hline
    \Sfour &&& Frames that are reflexive and transitive \\ \hline
    \Sfive &&& Frames that are reflexive, transitive, and symmetric \\ \hline
    \T &&& Reflexive frames \\
  \end{tabular}
   \caption{Classes of frames}
\end{table}
\end{center}
We now define the classes of frames that are most commonly used in applications
of modal logic.  To see how these frames correspond to
axioms and proof systems, see, for example,~\cite[Section 4.3]{blrive01}. Again, we first consider the uni-modal case and then present the natural generalizations to multi-modal logics. \K\ is the class of all frames, \KD\ is the class of frames in which every
world has a successor, i.e., for all $w\in W$, there is a $w'\in W$
such that $(w,w')\in R$. \T\ is the class of reflexive frames, 
\Kfour\ is the class of transitive frames,
\Sfour\ is the class of frames that are both reflexive and transitive, and \Sfive\ is the class of
reflexive, symmetric, and
transitive frames. The \emph{reflexive singleton} is the frame consisting
of one world $w$, and the relation $\{(w,w)\}.$ Note that all classes of frames $\cal F$ described above contain the reflexive singleton. Similarly, the \emph{irreflexive singleton} is the frame consisting of one world, and an empty successor relation.

For multi-modal logics, the generalizations are obvious: For a class of frames $\mathcal F$ as previously defined, we say that the class $\mathcal F_k$ contains those frames $(W,R_1,\dots,R_k),$ where $(W,R_i)\in\mathcal F$ for all $i\in\set{1,\dots,k}.$  In particular, a multi-modal reflexive singleton consists of the set of worlds $W=\set{w}$ where each successor relation consists of the pair $(w,w),$ and the multi-modal irreflexive singleton consists of the same set of worlds where all of the successor relations are empty. If the number $k$ of modal operators is clear from the context, we often simply write $\mathcal F$ instead of $\mathcal F_k,$ speak about the reflexive singleton, etc.

\subsection{Generalized Formulas and Circuits}

We now consider a more general notion of modal formulas, whose propositional analog has been studied extensively. We generalize the notion of a modal formula in two ways: First, instead of allowing the usual propositional operators $\wedge,\vee$, and $\neg$, we allow arbitrary Boolean functions. Second, we study circuits as succinct representations of formulas. Intuitively, a circuit is a generalization of a formula in the same way as a directed acyclic graph is a generalization of a tree, since formulas directly correspond to tree-like circuits. To be more precise, for a finite set $B$ of Boolean functions, a \emph{modal $B$-circuit} is a generalization of a propositional Boolean circuit (see e.g., \cite{vol99} for an introduction to Boolean circuits) with gates for functions from $B$ and additional gates representing the modal operators $\Diamond_i$ or $\Box_i.$ Boolean circuits are a standard way to succinctly represent Boolean functions. Formally, we define the following (recall that $X$ is the set of propositional variables):

\begin{definition}
Let $B$ be a finite set of Boolean functions, and let $M\subseteq\set{\Box,\Diamond}.$ A circuit in $\multcirc MkB$ is a tuple $C=(V,E,\alpha,\beta,\mathit{out})$ where $(V,E)$ is a finite directed acyclic graph, $\alpha\colon E\rightarrow\mathbb{N}$ is an injective function, $\beta\colon V\rightarrow B\cup\{\Box_1,\dots,\Box_k,\Diamond_1,\dots,\Diamond_k\}\cup X$ is a function, and $\mathit{out}\in V$, such that
\begin{itemize}
\item{If $v\in V$ has in-degree $0$, then $\beta(v)\in X$ or $\beta(v)$ is a $0$-ary function (a constant) from $B$.}
\item{If $v\in V$ has in-degree $1$, then $\beta(v)$ is a $1$-ary function from $B$ or, for some $i\in\set{1,\dots,k},$ one of the operators $\Box_i$ (if $\Box\in M$) or $\Diamond_i$ (if $\Diamond\in M$).}
\item{If $v\in V$ has in-degree $d>1$, then $\beta(v)$ is a $d$-ary function from $B$.}
\end{itemize}
\end{definition}

By definition, $\multcirc MkB$ contains the modal circuits that use the following as operators: functions from $B$, the modal operators $\Diamond_1,\dots,\Diamond_k$ if $\Diamond\in M$, and $\Box_1,\dots,\Box_k$ if $\Box\in M$. Nodes $v\in V$ are called \textit{gates} of $C$, $\beta(v)$ is the \emph{gate-type} of $v$. The node $\mathit{out}$ is the \textit{output-gate} of $C$. The function $\alpha$ is needed to define the order of arguments for non-commutative functions. The \emph{size} of a modal circuit $C$ is the number of gates: $\card{C}:=\card{V}$. 

In addition to circuits, we also study the special case of modal formulas. A \emph{modal $B$-formula} is a modal $B$-circuit where each gate has out-degree $\leq 1$. This corresponds to the intuitive idea of a formula: Such a circuit can be written down as a formula, e.g., in prefix notation, without growing significantly in size. Semantically we interpret a circuit as a succinct representation of its formula expansion. For a modal $B$-circuit $C$, the \emph{modal depth of $C,$} $\md C$, is the maximal number of gates representing modal operators on a directed path in the graph. If there are no modal gates (i.e., gates $v\in C$ such that $\beta(v)\in\{\Box_i,\Diamond_i\}$ for any $i$) then $\varphi_C$ is a \emph{propositional Boolean formula} and $C$ is a \emph{propositional Boolean circuit}. 

In order to define the semantics of the circuits defined above, we relate them to formulas in the following natural way: The circuit $C$ represents the modal formula $\varphi_C$ that is inductively defined by a modal $B$-formula $\varphi_v$ for every gate $v$ in $C:$

\begin{definition}
\begin{itemize}
\item{If $v\in V$ has in-degree $0$, then $\varphi_v:=\beta(v).$}
\item{Let $v\in V$ have in-degree $l>0$, and let $v_1,\dots,v_l$ be the predecessor gates of $v$ such that $\alpha((v_1,v))<\dots<\alpha((v_k,v))$. Then let $\varphi_v:= \beta(v)(\varphi_{v_1},\dots,\varphi_{v_l})$.}
\item{Finally, we define $\varphi_C$ as $\varphi_{\mathit{out}}$. We call $\varphi_C$ the \emph{formula expansion of $C$}.}
\end{itemize}
\end{definition}

Since every Boolean function can be expressed using only conjunction and negation, the semantics for circuits allowing arbitrary Boolean functions is immediate. It is obvious from the definition that for every modal circuit, there is an equivalent formula. Therefore, considering circuits instead of formulas does not increase the expressive power, but circuits are a succinct representation of formulas (there are circuits representing formulas where the size of the formula is exponential in the size of the circuit).

\subsection{Problem Definitions}

We now define the various modal satisfiability problems we are interested in. As usual in computational complexity, we define the problems as the sets of their yes-instances. 

\begin{definition}
  Let $B$ be a finite set of Boolean functions, $\mathcal F$ a class of frames, $k\ge 0$, and $M\subseteq\{\Diamond,\Box\}$. Then 
  
  \begin{itemize}
  \item $\multform MkB$ is the set of formula expansions of circuits in $\multcirc MkB$, i.e., the set of modal formulas using operators from $B$, and modalities $\Box_1,\dots,\Box_k$ (if $\Box\in M$) and $\Diamond_1,\dots,\Diamond_k$ (if $\Diamond\in M$).
  \item $\formsat{\mathcal F}{M}{k}{B}$ is the set of $\mathcal F_k$-satisfiable formulas from $\multform{M}{k}{B}$.
  \item $\circsat{\mathcal F}{M}{k}{B}$ is the set of $\mathcal F_k$-satisfiable circuits from $\multcirc{M}{k}{B}$.
  \item $\formtaut{\mathcal F}{M}{k}{B}$ is the set of $\mathcal F_k$-tautologies in  $\multform{M}{k}{B}$,
  \item $\circtaut{\mathcal F}{M}{k}{B}$ is the set of $\mathcal F_k$-tautologies in  $\multcirc{M}{k}{B}$.
  \end{itemize}
 \end{definition}

For readability, we often leave out the set brackets and write, for example, $\formsat{\K}{\Box}{1}{\oplus,1}$ instead of $\formsat{\K}{\set{\Box}}{1}{\set{\oplus,1}}$. In addition to specifying whether $\Diamond$ and $\Box$ are allowed ``globally,'' we could also allow our model to specify for each $i\in\set{1,\dots,k}$ whether $\Diamond_i$ and $\Box_i$ are allowed to appear in the circuits. However, our hardness results usually require only a single one of these operators to be present (and upper complexity bounds obviously transfer to the restricted setting). Therefore, the definition we gave captures the significant variations of the problems we study.

From the definitions, the following is immediate, which we will often use without reference. It is obvious that analogous results hold for the tautology problem as well. Due to this proposition, it is clear that it suffices to state lower complexity bounds for the problems involving formulas, and upper bounds for the problems involving circuits.

\begin{proposition}\label{prop:trivial reductions}
Let $B_1\subseteq B_2$ be finite sets of Boolean functions, let $\mathcal F$ be a class of frames, let $k_1\leq k_2$, and let $M_1\subseteq M_2\subseteq\set{\Box,\Diamond}.$ Then the following hold:
\begin{itemize}
\item $\formsat{\mathcal F}{M_1}{k_1}{B_1}\redlogm\formsat{\mathcal F}{M_2}{k_2}{B_2},$
\item $\formsat{\mathcal F}{M_1}{k_1}{B_1}\redlogm\circsat{\mathcal F}{M_2}{k_2}{B_2},$
\item $\circsat{\mathcal F}{M_1}{k_1}{B_1}\redlogm\circsat{\mathcal F}{M_2}{k_2}{B_2}.$
\end{itemize}
\end{proposition}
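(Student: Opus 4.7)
The plan is to use the identity map as the reduction in all three cases. A modal $B_1$-formula built over modalities in $M_1$ of arity up to $k_1$ is, syntactically, already a modal $B_2$-formula with modalities in $M_2$ of arity up to $k_2$, because $B_1\subseteq B_2$, $M_1\subseteq M_2$, and $k_1\le k_2$. Hence $\multform{M_1}{k_1}{B_1}\subseteq\multform{M_2}{k_2}{B_2}$, and likewise $\multcirc{M_1}{k_1}{B_1}\subseteq\multcirc{M_2}{k_2}{B_2}$. A formula is a special case of a circuit (one in which every gate has out-degree at most $1$), which takes care of the formula-to-circuit second bullet. The identity map is clearly computable in logarithmic space, so all three reductions are realized by the same trivial transducer; what remains is to verify that $\mathcal F_{k_1}$-satisfiability and $\mathcal F_{k_2}$-satisfiability agree on formulas (and circuits) from the smaller class.

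For the direction $\mathcal F_{k_2}$-satisfiable $\Rightarrow$ $\mathcal F_{k_1}$-satisfiable, I would simply project a witnessing model $(W,R_1,\dots,R_{k_2},\pi)$ onto its first $k_1$ successor relations; since $\varphi$ mentions no $\Diamond_i$ or $\Box_i$ with $i>k_1$, the truth value of $\varphi$ at every world is unchanged, and by definition each $(W,R_i)$ lies in $\mathcal F$, so the projection is an $\mathcal F_{k_1}$-model. For the converse direction, starting from an $\mathcal F_{k_1}$-model I would extend it by adjoining auxiliary successor relations $R_{k_1+1},\dots,R_{k_2}$ chosen so that each $(W,R_i)$ belongs to $\mathcal F$. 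For every frame class considered in the paper (\K, \KD, \Kfour, \T, \Sfour, \Sfive) such an extension is available uniformly: take $R_i=\emptyset$ for \K\ and \Kfour, and $R_i=\{(w,w):w\in W\}$ for \KD, \T, \Sfour, \Sfive. Again, since $\varphi$ does not refer to these new relations, satisfaction is preserved.

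The main (and only mildly subtle) obstacle is the extension step in the converse direction: the argument relies on $\mathcal F$ being closed under adjoining a trivial relation on the same carrier set. This is immediate for the standard classes listed in the paper, so the proof goes through without further hypotheses. Combining the two directions shows that a formula in $\multform{M_1}{k_1}{B_1}$ is $\mathcal F_{k_1}$-satisfiable if and only if it is $\mathcal F_{k_2}$-satisfiable, and the identical argument applies to circuits. This establishes all three reductions.
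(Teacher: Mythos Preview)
Your proof is correct and considerably more detailed than the paper's treatment, which simply declares the proposition ``immediate from the definitions'' and gives no argument at all. The identity map is indeed the intended reduction, and your verification that $\mathcal F_{k_1}$- and $\mathcal F_{k_2}$-satisfiability coincide on inputs from the smaller syntactic class is exactly the point that needs checking. One simplification worth noting for the extension step: instead of choosing $R_{k_1+1},\dots,R_{k_2}$ case-by-case depending on $\mathcal F$, you may simply set each new relation equal to $R_1$ (assuming $k_1\ge 1$); since $(W,R_1)\in\mathcal F$ already, this produces an $\mathcal F_{k_2}$-model for \emph{every} class $\mathcal F$, not just the six standard ones, and removes the case distinction you flagged as the only subtle point.
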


Initial complexity results can be found in the literature; we state them in our notation:

\begin{theorem}[\cite{hamo92},\cite{lad77}]
\begin{enumerate}
\item $\formsat{\Sfive}{\Box}{1}{\wedge,\neg}$ is \NP-complete.
\item Let $\mathcal F\in\set{\K,\KD,\Kfour,\T,\Sfour}.$ Then $\formsat{\mathcal F}{\Box}{1}{\wedge,\neg}$ is \PSPACE-complete.
\item Let $\mathcal F\in\set{\K,\KD,\Kfour,\T,\Sfour,\Sfive},$ and let $k\ge 2.$ Then $\formsat{\mathcal F}{\Box}{k}{\wedge,\neg}$ is \PSPACE-complete.
\end{enumerate}
\end{theorem}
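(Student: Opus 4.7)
My plan is to split the argument into the $\NP$ case and the $\PSPACE$ cases, using classical techniques that go back to Ladner and Halpern--Moses. For part~1, the $\NP$-hardness is immediate: propositional satisfiability over $\{\wedge,\neg\}$ is already $\NP$-hard and embeds syntactically into $\formsat{\Sfive}{\Box}{1}{\wedge,\neg}$. For $\NP$-membership, I would establish a polynomial model property for \Sfive. Since \Sfive-frames are equivalence relations, I may assume the underlying frame is a single cluster on which $R$ is the universal relation. A standard filtration through the set of subformulas of the input $\varphi$ of size $n$ shows that if $\varphi$ is \Sfive-satisfiable, it is satisfiable on a cluster with at most $n$ worlds, and each world can be described by an assignment to the $O(n)$ propositional variables and boxed subformulas. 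Guessing such a model and verifying $M,w\models\varphi$ in polynomial time puts the problem in $\NP$.

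For part~2, the $\PSPACE$ upper bound is Ladner's tableau algorithm, which I would adapt uniformly to each $\mathcal F\in\{\K,\KD,\Kfour,\T,\Sfour\}$. The procedure builds a (possibly exponential) tree of worlds by recursing on formulas of the form $\Diamond\psi$ that must be witnessed at the current world, but it only needs to keep the current root-to-leaf branch in memory, so the whole recursion runs in space polynomial in $|\varphi|$. Reflexivity, seriality, and transitivity are handled by small modifications (propagating appropriate boxed subformulas when generating successors, or using the current world as its own successor). The only subtlety is \Kfour\ and \Sfour, where transitivity forces each $\Box\psi$ to propagate both $\psi$ and $\Box\psi$ to all successors, and termination has to be argued by bounding the size of ``types'' that can occur along a branch by $2^{O(|\varphi|)}$.

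The main obstacle is the $\PSPACE$ lower bound, for which I would follow Ladner's reduction from $\prblemname{QBF}$ to $\formsat{\K}{\Box}{1}{\wedge,\neg}$. Given a QBF $\psi=Q_1x_1\cdots Q_nx_n\,\chi$, I construct a modal formula whose models are depth-$n$ trees in which the branches at level $i$ encode the choices for $x_i$. Universally quantified variables force a branching on both truth values (using conjunctions of $\Diamond$-formulas), existentially quantified variables correspond to a single chosen successor, and a ``level counter'' is maintained by auxiliary propositional variables synchronized across successors via $\Box$-formulas. The reduction pushes through for $\K$; for $\T,\Sfour,\Kfour$ one relativizes using a fresh variable that marks the ``genuine'' successors, and for $\KD$ one pads irrelevant branches with sink worlds. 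This handles part~2 completely once combined with the trivial reductions of Proposition~\ref{prop:trivial reductions}.

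For part~3, $\PSPACE$-membership follows from a routine generalization of Ladner's tableau to $k$ independent accessibility relations $R_1,\dots,R_k$. For hardness in the multi-modal \Sfive\ case, which is the only one not already covered by part~2, I would use the Halpern--Moses technique: two equivalence relations whose composition is unrestricted can simulate an arbitrary $\K$-tree, because the common refinement of two universal relations on disjoint clusters is powerful enough to encode parent--child and sibling structure. Plugging this simulation into the reduction of the previous paragraph yields $\PSPACE$-hardness for $\formsat{\Sfive}{\Box}{k}{\wedge,\neg}$ for every $k\ge 2$, completing the proof.
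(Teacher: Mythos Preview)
The paper does not give its own proof of this theorem: it is stated purely as a citation of prior results from \cite{lad77} and \cite{hamo92} (``Initial complexity results can be found in the literature; we state them in our notation''). Your proposal is a faithful outline of exactly those classical arguments---Ladner's filtration/polynomial-model argument for \Sfive, his tableau procedure for the \PSPACE\ upper bounds, his QBF reduction for the lower bounds, and the Halpern--Moses two-relation simulation for multi-modal \Sfive---so there is nothing to compare against and nothing to correct.
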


In \cite{hem01}, Hemaspaandra examined the complexity of $\formsat{\K}{M}{1}{B}$ 
for all $M\subseteq\{\Box,\Diamond\}$ and $B\subseteq\{\wedge,\vee,\neg,0,1\}.$ In this paper, we generalize this result in several ways: We classify the complexity of modal satisfiability for all finite sets of Boolean functions (in particular, we determine the complexity of an infinite number of problems), and we consider multi-modal logic as well. Further, we also consider the case of circuits instead of formulas, and study different frame classes. Finally, we also consider the validity problem.

\subsection{Clones and Post's Lattice}

The notion of clones is very helpful to bring structure to this infinite set of problems. We introduce the necessary definitions, and some important properties of Boolean functions. An $n$-ary function $f$ is a \emph{projection function} if there is some $i$ such that for all $\alpha_1,\dots,\alpha_n\in\set{0,1}$, $f(\alpha_1,\dots,\alpha_n)=\alpha_i$. A set $B$ of Boolean functions is called a \emph{clone} if it is closed under \emph{superposition}, that is, $B$ contains all projection functions and is closed under permutation of variables, identification of variables, and arbitrary composition. It is easy to see that the set of clones forms a lattice. Post determined the complete set of clones, as well as their inclusion structure \cite{pos41}. A graphical presentation of the lattice of clones, also known as Post's Lattice, can be found in Figure~\ref{fig:lattice}. For a set $B$ of Boolean functions, let $\clone B$ be the smallest clone containing $B.$ 

We briefly define the clones that arise in our complexity classification. The smallest clone contains only projections and is named $\mathtext{I}_2.$ Further, $\mathtext{I}_1 = \clone{\{1\}}$. The largest clone $\mathtext{BF}=\clone{\{\wedge, \neg\}}$ is the set of all Boolean functions. The set of all monotone functions forms a clone denoted by $\mathtext{M}=\clone{\{\vee,\wedge,0,1\}}.$ \cD\ consists of all {\em self-dual} functions, i.e., $f\in\cD$ if and only if $f(x_1,\dots,x_n)=\neg f(\overline x_1,\dots,\overline x_n).$ $\cL=\clone{\{\oplus,1\}}$ is the set of all linear Boolean functions (where $\oplus$ is the Boolean exclusive or). The clone of all Boolean functions that can be written using only disjunction and constants is called $\mathtext{V}=\clone{\{\vee,1,0\}}$; further, $\mathtext{V}_0=\clone{\{\vee,0\}}$ and $\mathtext{V}_2=\clone{\{\vee\}}$. Similarly, the clone $\mathtext{E}=\clone{\{\wedge,0,1\}}$ contains the Boolean functions that can be written as conjunctions of variables and constants; $\mathtext{E}_0=\clone{\{\wedge, 0\}}$ and $\mathtext{E}_2=\clone{\{\wedge\}}.$ $\mathtext{R}_1$ is built from all \emph{1-reproducing} functions, i.e., all functions $f$ satisfying $f(1,\dots,1)=1.$ The clone $\cN = \clone{\{\neg, 1\}}$ consists of the projections, their negations, and all constant Boolean functions. $\cS_1 = \clone{\{x \wedge \overline{y}\}}$ and $\mathtext{S}_{11}=\mathtext{S}_1\cap\mathtext{M}.$

\begin{figure}
  \begin{tabular}{lc|cl}
    $\cBF$ & \hspace*{2.5mm} & \hspace*{2.5mm} &  All Boolean functions \\
    \hline
    $\cS_1$ &&& $\clone{x\wedge\overline{y}}$ \\
    \hline
    $\cM$ &&& Monotone functions \\
    \hline
    $\cS_{11}$ &&& $\cM\cap\cS_1$ \\
    \hline
    $\cR_1$ &&& $f$ with $f(1,\dots,1)=1$ \\
    \hline
    $\cD$ &&& Self-dual functions \\
    \hline
    $\cL$ &&& Linear functions \\
    \hline
    $\cV$ &&& Multi-ary OR and constants $0$, $1$ \\
    \hline
    $\cV_0$ &&& Multi-ary OR and constant $0$ \\
    \hline
    $\cV_2$ &&& Multi-ary OR \\
    \hline
    $\cE$ &&& Multi-ary AND and constants $0$, $1$ \\
    \hline
    $\cE_0$ &&& Multi-ary AND and constant $0$ \\
    \hline
    $\cE_2$ &&& Multi-ary AND \\
    \hline
    $\cN$ &&& Negation, idendity, and constants \\
    \hline
    $\cI$ &&& Identity and constants
  \end{tabular}

\end{figure}

\begin{figure}
  \hspace*{-0.525cm}\resizebox{150mm}{190mm}{
  \includegraphics{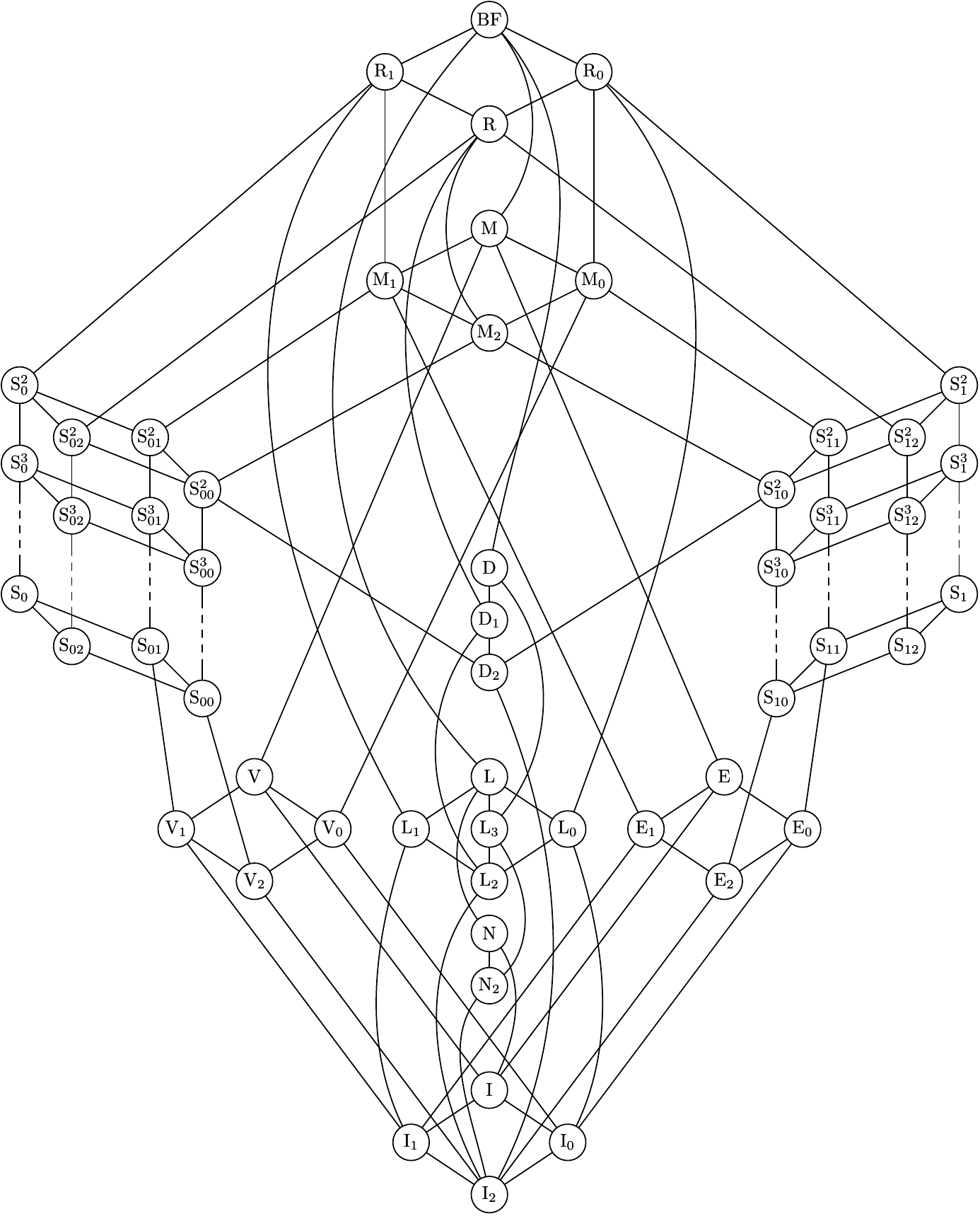}}
  \caption{Post's lattice}
  \label{fig:lattice}
\end{figure}

 If we interpret Boolean formulas as Boolean functions, then
\clone B consists of all propositional formulas that are
equivalent to a formula built with variables and operators from $B$.
Therefore, this framework can be used to
investigate problems related to Boolean formulas depending on which
connectives are allowed. Several problems have been studied in this context: Lewis proved that the
satisfiability problem for Boolean formulas with connectives from $B$
is \NP-complete if $\mathtext{S}_1\subseteq \clone B$ and in \PTIME\
otherwise \cite{lew79}. Another example is the classification of the
equivalence problem given by Reith: Deciding whether
two formulas with connectives from $B$ are equivalent is in \LOGSPACE\ if 
$\clone B \subseteq \mathtext{V}$ or $\clone B \subseteq \mathtext{E}$ 
or $\clone B \subseteq \mathtext{L}$, and \CONP-complete in all other
cases \cite{rei01}. Dichotomy results for counting the solutions 
of formulas \cite{rewa99-appeared}, finding the minimal solutions of formulas
\cite{revo00}, and learnability of Boolean formulas and circuits \cite{dal00} 
were achieved as well. After presenting our results in \cite{bhss05b}, analogous classifications have been achieved by Bauland et al.\ in the context of temporal logics \cite{bsssv07,bmsssv07:toappear}.

Post's Lattice has also been a helpful tool in the constraint satisfaction
context. It can be used to obtain a very easy proof of Schaefer's Theorem \cite{sch78} and related complexity classifications. This is surprising, because constraint satisfaction problems are not related
to Post's Lattice by definition, but clones appear indirectly 
through a Galois connection~\cite{jecogy97}. For more information about the use of Post's Lattice in complexity classifications of propositional logic, see, for example,~\cite{bcrv03,bcrv04}. Finally, the notion of clones is not restricted to the Boolean case, but has been studied for arbitrary domains. The monograph \cite{lau06:clonebook} is an excellent survey of clone theory.

The structure given by Post's Lattice enables us to compare the complexity of our circuit-related problems for the cases in which the corresponding clones are comparable. For circuits, we get a stronger result than Proposition~\ref{prop:trivial reductions}: The complexity of our problems does not depend on the actual set $B$ of Boolean functions, but just on the clone $\clone B$ generated by it. Again, an analogous result holds for the tautology problem.

\begin{lemma}\label{lemma:classes work multi-modal}
  Let $B_1,B_2$ be finite sets of Boolean functions, $\mathcal F$ a class of frames, $k\ge 1$, and $M\subseteq\set{\Diamond,\Box}$. If $B_1\subseteq\clone {B_2},$ then $\circsat{\mathcal F}{M}{k}{B_1}\redlogm\circsat{\mathcal F}{M}{k}{B_2}.$
\end{lemma}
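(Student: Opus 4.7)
The plan is to exploit the fact that $B_1$ is finite in order to replace each $B_1$-gate in an input circuit by a fixed-size $B_2$-gadget, leaving variable gates, constant gates, and modal gates untouched. Since $B_1\subseteq\clone{B_2}$, for every $f\in B_1$ of arity $n_f$ there is a propositional Boolean circuit $D_f\in\multcirc{\emptyset}{k}{B_2}$ with $n_f$ designated input gates whose formula expansion equals $f(x_1,\dots,x_{n_f})$; projections in the $\clone{B_2}$-term for $f$ are absorbed into the wiring. Because $B_1$ is finite, the collection $\{D_f \mid f\in B_1\}$ can be hard-coded into the reduction, and every $D_f$ has constant size $s_f$.

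Given an input circuit $C=(V,E,\alpha,\beta,\mathit{out})\in\multcirc{M}{k}{B_1}$, the reduction produces $C'\in\multcirc{M}{k}{B_2}$ as follows. Traverse the gates of $C$. For every variable, constant, or modal gate $v$, copy $v$ unchanged into $C'$. For every Boolean gate $v$ with $\beta(v)=f\in B_1$, insert a fresh copy of $D_f$ into $C'$ and identify the $i$-th designated input of this copy with the gate in $C'$ corresponding to the $i$-th predecessor of $v$ in $C$ (as prescribed by $\alpha$); the output of the copy then plays the role of $v$ when wiring up successors. The output gate of $C'$ is the output of the copy of $D_{\beta(\mathit{out})}$ (or just $\mathit{out}$ itself if $\mathit{out}$ is a variable, constant, or modal gate).

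Correctness follows because each $D_f$ computes exactly $f$, so the formula expansion $\varphi_{C'}$ is semantically equivalent (at every world of every model) to $\varphi_{C}$; hence $C$ is $\mathcal F_k$-satisfiable if and only if $C'$ is. For the size bound, $|C'|\le s\cdot|C|+O(1)$ with $s=\max_{f\in B_1}s_f$, a constant depending only on $B_1$ and $B_2$.

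The main obstacle is not correctness but verifying that the construction is computable in logarithmic space rather than merely polynomial time. This is handled by observing that the gadgets $D_f$ are hard-coded and of constant size, so the reduction only needs to (i) scan the gates of $C$ one at a time, (ii) for each scanned gate emit either the gate itself or a constant-size copy of the corresponding $D_f$, and (iii) maintain pointers of length $O(\log|C|)$ in order to rename gate identifiers consistently and to look up the predecessors of each gate through $\alpha$. All three tasks are standard logspace operations, which gives $\circsat{\mathcal F}{M}{k}{B_1}\redlogm\circsat{\mathcal F}{M}{k}{B_2}$ as required.
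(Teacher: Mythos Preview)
Your proposal is correct and follows the same approach as the paper, which simply says to replace each $B_1$-gate by an equivalent $B_2$-circuit and notes that the result is $\mathcal F$-equivalent. You supply considerably more detail than the paper's one-sentence argument, in particular the explicit logspace analysis; one small wrinkle is that a $0$-ary constant in $B_1$ need not literally lie in $B_2$, so ``copy $v$ unchanged'' should really fall under the gadget-replacement case as well, but this is handled by your general $D_f$ construction.
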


\begin{proof}
This reduction is achieved by replacing every occurring gate representing a function from $B_1$ with the appropriate $B_2$-circuit computing the same function. The resulting circuit obviously is $\mathcal F$-equivalent to the original circuit.
\end{proof}

It is worth noting that an analogous result for formulas cannot be obtained in such an easy way, as the following example illustrates: Consider the sets $B_1=\set{\oplus}$ and $B_2=\set{\wedge,\vee,\neg}$ of Boolean functions. Since every Boolean function can be represented using only AND, OR, and negation gates, it is obvious that $B_1\subseteq\clone{B_2}$ holds. However, a reduction from $\formsat{\K}{\emptyset}{0}{B_1}$ to $\formsat{\K}{\emptyset}{0}{B_2}$ cannot be achieved in a straightforward manner, as a formula transformation analogous to the proof of Lemma~\ref{lemma:classes work multi-modal} would replace a subformula $\varphi_1\oplus\varphi_2$ with the formula $\left(\varphi_1\wedge\neg\varphi_2\right)\vee\left(\neg\varphi_1\wedge\varphi_2\right),$ and repeated application of this transformation leads to exponential size for nested formulas. However, we will see that in the cases arising in this paper, the complexity of a problem $\formsat{\mathcal F}{M}{k}{B}$ also only depends on the clone generated by $B.$

\section{The Satisfiability Problem}\label{section:satisfiability}

Our main results are the classification theorems which we will present now. A graphical presentation of these results can be found in Figures~\ref{fig:k and both modal operators} and~\ref{fig:kd and k with one modal operator}.
For the most general problem of \K-satisfiability, we get the following trichotomy:

\begin{figure}
  \hspace*{-0.525cm}\resizebox{150mm}{190mm}{
  \includegraphics{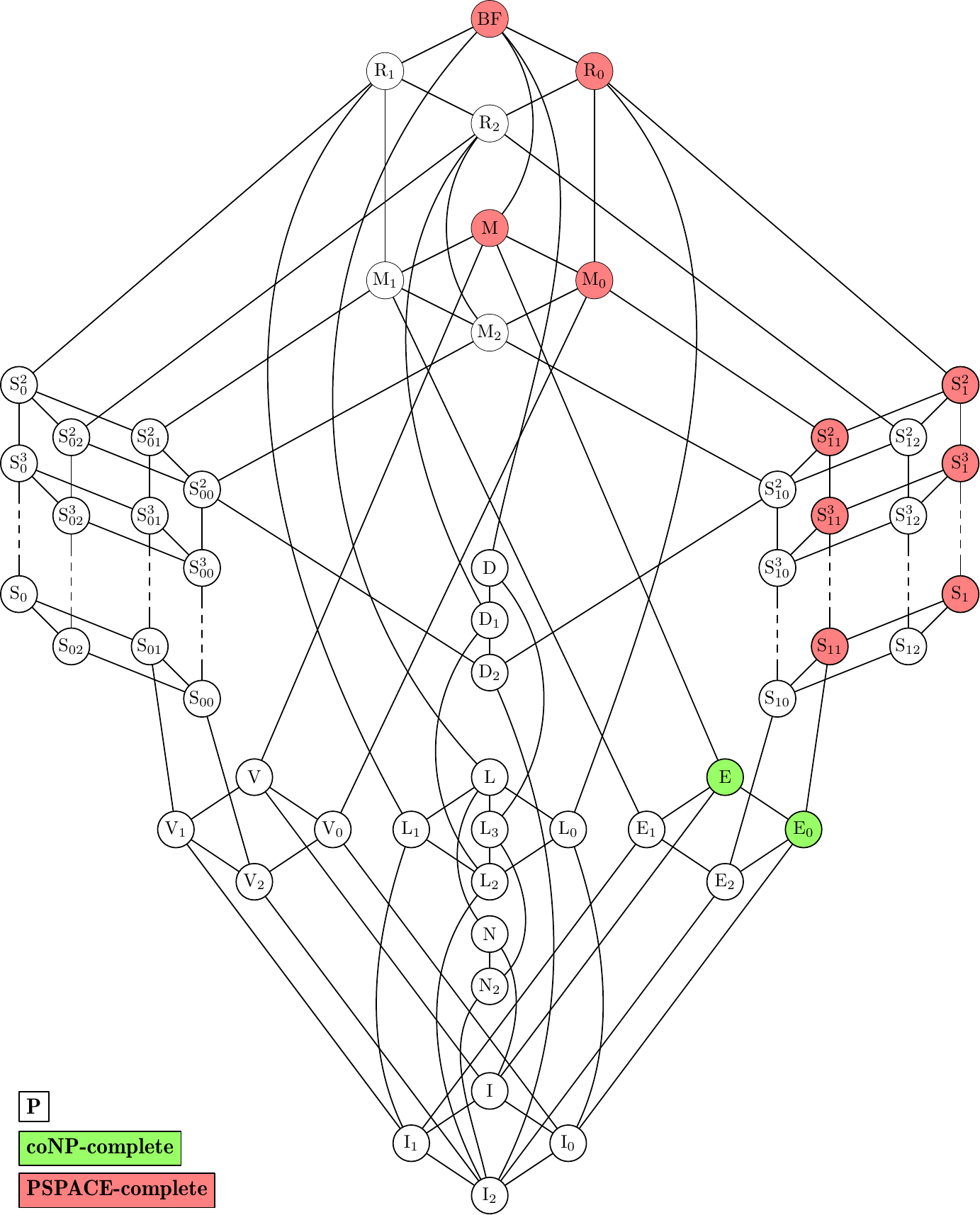}}
  \caption{The complexity of $\formsat{\K}{\Box,\Diamond}{k}{B}$ for $k\ge 1$ and $\circsat{\K}{\Box,\Diamond}{k}{B}$.}
  \label{fig:k and both modal operators}
\end{figure}

\begin{figure}
  \hspace*{-0.525cm}\resizebox{150mm}{190mm}{
  \includegraphics{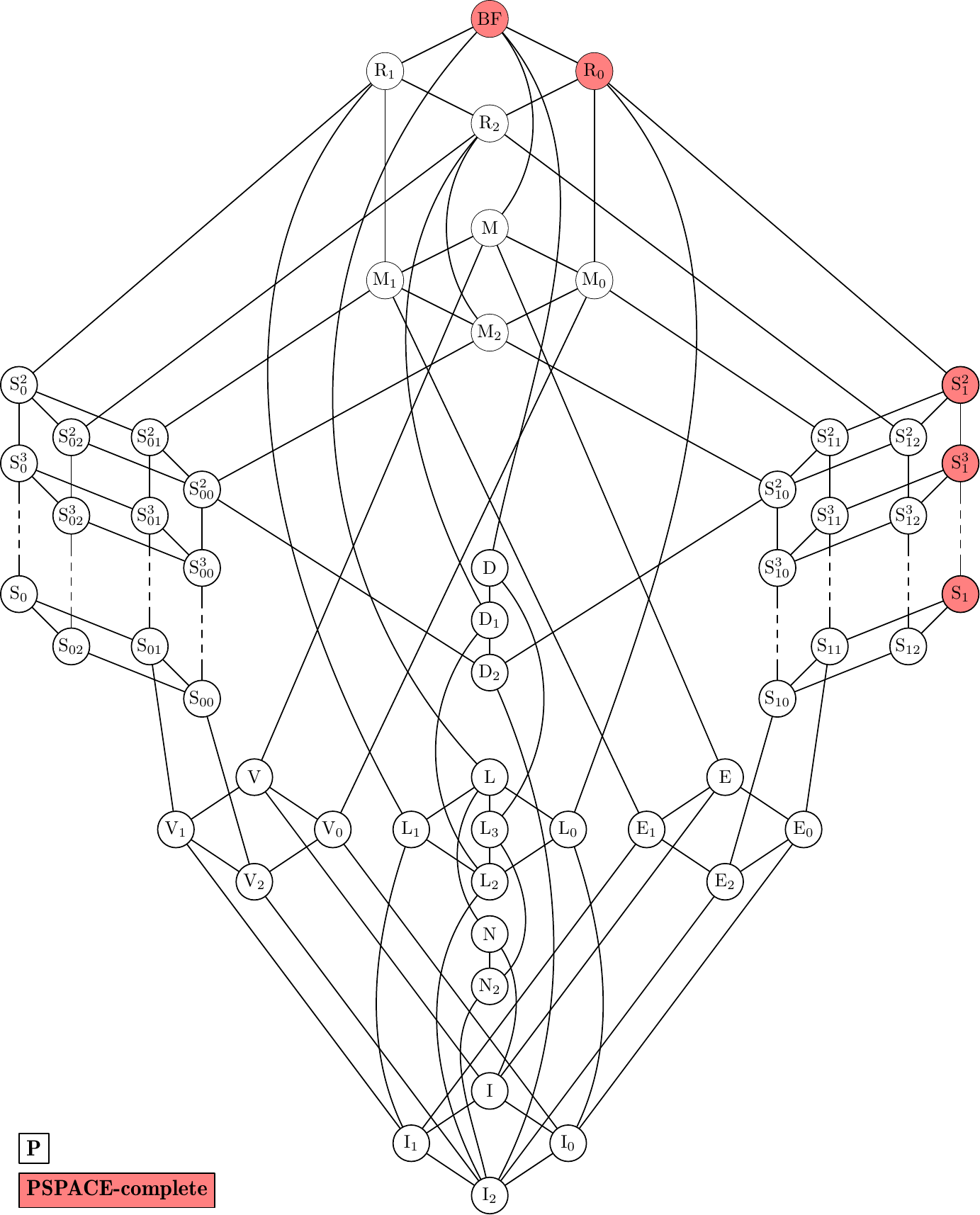}}
  \caption{The complexity of 
  $\formsat{\KD}{M}{k}{B}$ and $\circsat{\KD}{M}{k}{B}$ for any $\emptyset\neq M\subseteq\set{\Box,\Diamond}$ and $\formsat{\K}{\Diamond}{k}{B}$, $\formsat{\K}{\Box}{k}{B}$, $\circsat{\K}{\Diamond}{k}{B}$, and $\circsat{\K}{\Box}{k}{B}$.}
  \label{fig:kd and k with one modal operator}
\end{figure}

\begin{theorem} 
Let $B$ be a finite set of Boolean functions, $k\ge 1$, and $\emptyset\neq M\subseteq\set{\Box,\Diamond}$. Then the following holds:
\begin{itemize}
 \item If $B\subseteq \cR_1,\cD,\cV,$ or $\cL,$ then $\formsat{\K}{M}{k}{B},\circsat{\K}{M}{k}{B}\in\PTIME$ (Corollary~\ref{corollary:r1 and d satisfiable and in p}, Theorem~\ref{theorem:subsets on v are in p}, Theorem~\ref{theorem:xor in p for k and kd}).
 \item If $\cE_0\subseteq\clone B\subseteq\cE,$ then $\formsat{\K}{M}{k}{B},\circsat{\K}{M}{k}{B}\in\PTIME$ if $\card M\leq 1$, and are \CONP-complete otherwise (Section~\ref{sect:conp completeness}, Theorem~\ref{theorem:monotone in p for one modal operator and k or k4}).
\item if $\cS_{11}\subseteq\clone B\subseteq \cM,$ and $\formsat{\K}{M}{k}{B}$ and $\circsat{\K}{M}{k}{B}$ are \PSPACE-complete if $M=\set{\Box,\Diamond}$, and in $\PTIME$ otherwise (Corollary~\ref{cor:s11 formulas pspace complete}, Theorem~\ref{theorem:monotone in p for one modal operator and k or k4}).
\item Otherwise, $\cS_1\subseteq\clone B$ and $\formsat{\K}{M}{k}{B}$ and $\circsat{\K}{M}{k}{B}\in\PTIME$ are \PSPACE-complete (Corollary~\ref{cor:s11 formulas pspace complete}).
\end{itemize}
\end{theorem}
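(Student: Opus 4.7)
The plan is to assemble the four already-cited results into the stated trichotomy by means of two preliminary steps: a clone-theoretic reduction via Lemma~\ref{lemma:classes work multi-modal} (so that only the generated clone $\clone B$ matters for the circuit problem), together with an exhaustiveness check against Post's lattice (Figure~\ref{fig:lattice}) showing that every finite $B$ falls into exactly one of the four listed cases. For the formula problems, the cited hardness results already yield formula-level lower bounds for a minimal generating set, and the circuit upper bounds transfer down by Proposition~\ref{prop:trivial reductions}.

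For exhaustiveness I would inspect Figure~\ref{fig:lattice}. If $\clone B$ is contained in any of $\cR_1$, $\cD$, $\cV$, or $\cL$, the first case applies. Otherwise $\clone B$ must contain a function that is neither $1$-reproducing nor self-dual nor representable by $\vee$ and constants nor linear; tracing the minimal clones with this property and their successors in the lattice, one finds that such $\clone B$ either lies in the interval $[\cE_0,\cE]$, or in the interval $[\cS_{11},\cM]$, or contains $\cS_1$. The three cases are pairwise disjoint: $\cS_1\not\subseteq\cM$ because $x\wedge\overline{y}$ is non-monotone, separating the third and fourth cases, and the only clones that sit in both $[\cE_0,\cE]$ and $[\cS_{11},\cM]$ would have to lie below both $\cE$ and $\cM$, which by Post's lattice collapses them into the easy region already covered by the first bullet.

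With exhaustiveness in hand, each case is dispatched directly to its cited proof. In the first case, for $B\subseteq\cR_1$ the all-$1$ valuation on the reflexive singleton satisfies every formula, and for $B\subseteq\cD$ an analogous trivial satisfiability is exploited (Corollary~\ref{corollary:r1 and d satisfiable and in p}), while $B\subseteq\cV$ and $B\subseteq\cL$ call on the nontrivial polynomial-time algorithms of Theorem~\ref{theorem:subsets on v are in p} and Theorem~\ref{theorem:xor in p for k and kd}. In the $\cE$-interval, Theorem~\ref{theorem:monotone in p for one modal operator and k or k4} supplies the polynomial-time upper bound when $\card M\leq 1$, matched by \CONP-completeness for $M=\{\Box,\Diamond\}$ via Section~\ref{sect:conp completeness}. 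In the $\cM$-interval, the same monotone theorem yields \PTIME\ for a single modality, while Corollary~\ref{cor:s11 formulas pspace complete} gives \PSPACE-completeness when both $\Box$ and $\Diamond$ are available; the fourth case is then subsumed by the same corollary combined with the standard modal \PSPACE\ upper bound from~\cite{lad77}.

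The main obstacle is the clone-theoretic bookkeeping in the exhaustiveness step: Post's lattice is intricate, and one must verify carefully that no clone escapes the four cases and that the interval boundaries align precisely with the complexity jumps, in particular at the borderline between $\cS_{11}$ (still in $\cM$, hence \PTIME\ for $\card M=1$) and $\cS_1$ (already forcing \PSPACE-hardness). A secondary subtlety is the formula-versus-circuit issue: Lemma~\ref{lemma:classes work multi-modal} is stated for circuits because rewriting a formula over one generating set into a formula over another can cause an exponential blow-up, so the formula-level upper bounds must be read off directly from the cited proofs, which are phrased for arbitrary $B$ in the relevant clone rather than for a fixed canonical generating set.
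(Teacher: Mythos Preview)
Your proposal is correct and matches the paper's approach: the paper does not give a separate proof of this theorem at all, but treats it as a summary whose justification is precisely the list of forward references embedded in the statement, together with the implicit exhaustiveness argument via Post's lattice that you spell out. Your explicit bookkeeping (Lemma~\ref{lemma:classes work multi-modal} for circuits, Proposition~\ref{prop:trivial reductions} to pass between formulas and circuits, and the case split over the lattice) is exactly the glue the paper leaves to the reader.

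One small correction in your disjointness argument: the claim that a clone in both $[\cE_0,\cE]$ and $[\cS_{11},\cM]$ ``collapses into the easy region'' is not quite the right reason, since $\cE\subseteq\cM$ already. The actual reason the intervals are disjoint is that $\cS_{11}\not\subseteq\cE$ (for instance $\cS_{11}$ contains functions depending essentially on more than one variable in a non-conjunctive way), so no clone contained in $\cE$ can contain $\cS_{11}$. This is a cosmetic fix and does not affect the structure of your argument.
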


For the logic \KD, we get the following complete classification:

\begin{theorem}
Let $B$ be a finite set of Boolean functions, $k\ge 1$, and $\emptyset\neq M\subseteq\set{\Box,\Diamond}$. Then the following holds:
\begin{itemize}
 \item If $B\subseteq \cR_1,\cD,\cM,$ or $\cL,$ then $\formsat{\KD}{M}{k}{B},\circsat{\KD}{M}{k}{B}\in\PTIME$ (Corollary~\ref{corollary:r1 and d satisfiable and in p}, Theorem~\ref{theorem:monotone formulas in p for classes below kd}, Theorem~\ref{theorem:xor in p for k and kd}).
 \item Otherwise, $\cS_{1}\subseteq\clone B,$ and $\formsat{\KD}{M}{k}{B}$ and $\circsat{\KD}{M}{k}{B}$ are \PSPACE-complete (Corollary~\ref{cor:s1 formulas pspace complete}).
\end{itemize}
\end{theorem}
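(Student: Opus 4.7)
The plan is to structure the proof along Post's Lattice, dispatching the tractable cases by citation and the hard case by reduction from a minimal generating set of $\cS_1$.

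For the tractable side I would invoke each cited result directly: $B\subseteq\cR_1$ and $B\subseteq\cD$ by Corollary~\ref{corollary:r1 and d satisfiable and in p} (such circuits are always $\KD$-satisfiable, so the decision problem is trivial); $B\subseteq\cM$ by Theorem~\ref{theorem:monotone formulas in p for classes below kd}, which exploits seriality of $\KD$-frames to build a canonical satisfying model for any monotone circuit; and $B\subseteq\cL$ by Theorem~\ref{theorem:xor in p for k and kd}, the substantive linear-case algorithm highlighted in the introduction.

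For the hard side I begin with a Post-Lattice observation: if $B$ is contained in none of $\cR_1,\cD,\cM,\cL$, then $\cS_1\subseteq\clone B$. This is a lookup in the lattice, since the precomplete clones of $\cBF$ are $\cR_0,\cR_1,\cM,\cD,\cL$, only $\cR_0$ contains $\cS_1$, and every sub-clone of $\cR_0$ that fails to contain $\cS_1$ is already contained in $\cM$, $\cD$, or $\cL$. Given this, what remains is to establish \PSPACE-completeness of $\formsat{\KD}{M}{k}{B}$ and $\circsat{\KD}{M}{k}{B}$ whenever $\cS_1\subseteq\clone B$. The formula version is exactly Corollary~\ref{cor:s1 formulas pspace complete}; the circuit version follows either from the same corollary via Proposition~\ref{prop:trivial reductions}, or by fixing a finite $B_0$ with $\clone{B_0}=\cS_1$ and applying Lemma~\ref{lemma:classes work multi-modal} to obtain $\circsat{\KD}{M}{k}{B_0}\redlogm\circsat{\KD}{M}{k}{B}$. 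The matching \PSPACE\ upper bound is standard: reduce to $\circsat{\KD}{M}{k}{\cBF}$ via Proposition~\ref{prop:trivial reductions} and invoke a circuit-aware variant of Ladner's tableau algorithm.

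The real technical content lies in the cited ingredients rather than in this meta-proof: Theorem~\ref{theorem:xor in p for k and kd}, whose $\oplus$-algorithm is, as the introduction stresses, far more delicate than its propositional counterpart because nested modalities interact non-trivially with linear arithmetic over $\mathrm{GF}(2)$; and Corollary~\ref{cor:s1 formulas pspace complete}, which must encode full modal $\KD$-reasoning at the formula level using only the asymmetric connective $x\wedge\overline y$, with no built-in negation or constants to lean on. Everything else is bookkeeping in Post's Lattice.
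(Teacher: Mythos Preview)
Your proposal is correct and matches the paper's approach: the theorem is a summary result whose proof is precisely the collection of cited ingredients, and you dispatch each case to the right one. Your explicit Post-Lattice verification that the ``otherwise'' branch forces $\cS_1\subseteq\clone B$ is a welcome addition (the paper simply asserts it), and for the \PSPACE\ upper bound you should point to Theorem~\ref{th:circuit-to-formula} rather than an unspecified ``circuit-aware Ladner,'' since that is exactly the result the paper proves for this purpose.
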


This dichotomy is a natural analog of Lewis's result that the satisfiability problem for Boolean formulas with connectives from $B$
is \NP-complete if $\mathtext{S}_1\subseteq\clone B$ and in \PTIME\ otherwise \cite{lew79}.

From these theorems, we conclude that using the more succinct representation of modal circuits does not increase the polynomial degree of the complexity of these satisfiability problems (for two problems $A$ and $B$, we write $A\redeqpm B$ if $A\redpm B$ and $B\redpm A$).

\begin{corollary}
Let $B$ be a finite set of Boolean functions, $\mathcal F\in\set{\K,\KD}$, $k\ge 1$, and let $M\subseteq\set{\Box,\Diamond}$. Then
$\circsat{\mathcal F}{M}{k}{B} \redeqpm \formsat{\mathcal F}{M}{k}{B}.$
\end{corollary}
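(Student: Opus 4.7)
The forward direction $\formsat{\mathcal F}{M}{k}{B} \redpm \circsat{\mathcal F}{M}{k}{B}$ is immediate from Proposition~\ref{prop:trivial reductions}, since a formula is syntactically a special case of a circuit (in fact, this reduction is even in logspace). So the only task is to establish the reverse direction $\circsat{\mathcal F}{M}{k}{B} \redpm \formsat{\mathcal F}{M}{k}{B}$.

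The plan is to exploit the two classification theorems already stated in this section: both of them assert identical complexities for $\formsat{\mathcal F}{M}{k}{B}$ and $\circsat{\mathcal F}{M}{k}{B}$. I would split into the cases coming from those theorems. If $B\subseteq\cR_1$, $B\subseteq\cD$, $B\subseteq\cL$, or (for $\mathcal F=\KD$) $B\subseteq\cM$, or (for $\mathcal F=\K$ with $\card M\le 1$) $\clone B\subseteq\cE$, or (for $\mathcal F=\K$ with $M\neq\set{\Box,\Diamond}$) $\clone B\subseteq\cM$, both problems lie in \PTIME. In this situation the reduction is trivial: the reduction algorithm simply solves the given circuit instance in polynomial time and then outputs a fixed yes-instance or fixed no-instance of the formula problem (both of which exist). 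In all remaining cases the classification theorems state that both the formula and the circuit problem are complete for the same class $\mathcal C\in\{\CONP,\PSPACE\}$: the formula problem is hard for $\mathcal C$, while the circuit problem is contained in $\mathcal C$. Composing membership of circuit-SAT in $\mathcal C$ with the hardness of formula-SAT for $\mathcal C$ yields the required polynomial-time many-one reduction.

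The only thing to verify is that this bootstrapping argument is legitimate, i.e.\ that citing the classification theorems here is not circular. It is not, because the classification theorems establish upper bounds for $\circsat{\mathcal F}{M}{k}{B}$ directly from algorithmic arguments for circuits, and establish lower bounds for $\formsat{\mathcal F}{M}{k}{B}$ directly by reducing standard complete problems to formula-SAT; neither direction is proved by invoking the equivalence we are now claiming. So the corollary really is a formal consequence of the two theorems together with the trivial direction from Proposition~\ref{prop:trivial reductions}.

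The main ``obstacle'' is purely presentational: one must check that every case appearing in the statements of the two theorems indeed places both problems inside the same complexity class, so that no case falls through the cracks. Once the case analysis is laid out, the argument is a one-line appeal to completeness (or to the trivial reduction in the \PTIME\ cases), and no new algorithmic work is needed.
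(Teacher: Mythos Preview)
Your proposal is correct and matches the paper's approach exactly: the paper states this corollary without proof, immediately after the two classification theorems, with the words ``From these theorems, we conclude that\ldots'', so the intended argument is precisely the bootstrapping-via-completeness that you spell out. One tiny imprecision: in the trivially satisfiable clones (e.g., $B\subseteq\cR_1$ or $B\subseteq\cD$) there is no no-instance of the formula problem at all, so your ``output a fixed yes- or no-instance'' phrasing needs the observation that in those cases every circuit is satisfiable too, whence mapping everything to a fixed yes-instance suffices---but this is a cosmetic point, not a gap.
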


The following is our classification for the logics $\T$ and $\Sfour,$ which gives a complete classification except for the cases where $\clone B$ is one of the clones $\cL$ or $\cL_0.$

\begin{theorem}
 Let $B$ be a finite set of Boolean functions, $\mathcal F\in\set{\T,\Sfour}$, $k\ge 1$, and $\emptyset\neq M\subseteq\set{\Box,\Diamond}$.

\begin{itemize}
 \item If $B\subseteq \cR_1,\cD,\cN$ or $\cM,$ then $\circsat{\mathcal F}{M}{k}{B}\in\PTIME$ (Corollary~\ref{corollary:r1 and d satisfiable and in p}, Theorem~\ref{theorem:monotone formulas in p for classes below kd}, Theorem~\ref{theorem:negation in polynomial time})
 \item If $\cS_1\subseteq\clone B,$ then $\circsat{\mathcal F}{M}{k}{B}$ is \PSPACE-complete.
\item Otherwise, $\clone B\in\set{\cL,\cL_0}.$
\end{itemize}
\end{theorem}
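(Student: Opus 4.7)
The plan is to address the three bullets in turn, largely by reducing to previously established ingredients. For the first bullet, each subcase is already a direct consequence of the cited result once we check that the witness frames used there lie in $\T$ and $\Sfour$. For $B \subseteq \cR_1$ and $B \subseteq \cD$, Corollary~\ref{corollary:r1 and d satisfiable and in p} exhibits a satisfying model on the reflexive singleton, which belongs to both frame classes, so every circuit is satisfiable and the problem is trivial. For $B \subseteq \cM$, Theorem~\ref{theorem:monotone formulas in p for classes below kd} yields satisfiability via the all-ones assignment on the reflexive singleton; the same witness is an $\mathcal F$-model for $\mathcal F \in \set{\T,\Sfour}$ because monotone functions preserve~$1$. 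For $B \subseteq \cN$, Theorem~\ref{theorem:negation in polynomial time} applies uniformly.

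For the second bullet, the \PSPACE\ upper bound is Ladner's procedure applied on top of a lazy unfolding of the input circuit, so that the working tape stays polynomial in the circuit size rather than in the exponentially larger formula expansion. The lower bound reduces from $\formsat{\mathcal F}{M}{k}{\wedge,\neg}$, which is \PSPACE-complete for $\mathcal F \in \set{\T,\Sfour}$ and nonempty $M$. Since $\cS_1 \subseteq \clone B$ provides $x \wedge \overline{y}$ as a log-space computable $B$-circuit, I would introduce two fresh propositional variables $t$ and $f$ and attach an iterated \emph{pin} gadget, using whichever modal operator is present in $M$, that forces $t=1$ and $f=0$ at every world reachable from the root within the modal depth of the source formula. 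With the constants $0$ and $1$ available at every relevant world, each $\wedge$ and $\neg$ in the source instance can be simulated by $x \wedge \overline{y}$, producing an equisatisfiable $B$-circuit of polynomial size. This mirrors the construction used in Corollary~\ref{cor:s1 formulas pspace complete}.

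For the third bullet, I turn to Post's Lattice (Figure~\ref{fig:lattice}). Any clone $\clone B$ avoiding the first bullet must satisfy $\clone B \not\subseteq \cR_1, \cD, \cN, \cM$; any clone avoiding the second bullet must satisfy $\cS_1 \not\subseteq \clone B$. Reading off the lattice, the only clones meeting both constraints simultaneously are $\cL$ and $\cL_0$: the other linear clones lie inside $\cR_1$, $\cD$, or $\cN$, and every clone above $\cS_1$ is excluded by the second condition. Hence the ``otherwise'' case is exactly $\clone B \in \set{\cL,\cL_0}$.

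The main obstacle will be the pin gadget in the \PSPACE-hardness reduction of the second bullet. In $\T$ and $\Sfour$, reflexivity and transitivity generate additional accessibility edges not present in $\K$, so one must verify that the iterated pin still propagates its constraint to every world actually used by the Ladner reduction, and that the overall construction remains log-space uniform. This is the only place where the proof departs meaningfully from the earlier \K- and \KD-arguments; once the gadget is in place for $\T$ (and hence for $\Sfour$, since every $\Sfour$-frame is a $\T$-frame), the rest of the classification follows from the case analysis sketched above.
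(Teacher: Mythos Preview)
Your proposal is essentially correct and follows the paper's structure, but two points deserve comment.

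First, your ``main obstacle'' is a non-issue. The paper's reduction (Theorem~\ref{theorem:s1 implementation}) is stated and proved for an \emph{arbitrary} class of frames $\mathcal F$; nothing in the argument depends on $\mathcal F$ being $\K$ or $\KD$. The pin $\bigwedge_{i=0}^{\smd\varphi}\Box_1^i t$ forces $t$ at every world reachable from the root in at most $\md\varphi$ steps, and this holds in any frame whatsoever. Reflexivity and transitivity do not create worlds that evade the pin---they only shorten paths, so if anything the gadget becomes \emph{stronger} (in $\Sfour$, already $\Box_1 t$ suffices). There is therefore no ``departure from the earlier $\K$- and $\KD$-arguments''; Corollary~\ref{cor:s1 formulas pspace complete}(1), which covers every $\mathcal F$ with $\Sfour\subseteq\mathcal F\subseteq\K$, applies directly. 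Also, the paper pins only the constant~$1$ (via the fresh variable $t$), not both constants: since $\clone{\cS_1\cup\{1\}}=\cBF$, adding~$1$ alone already yields short $B\cup\{1\}$-formulas for $\wedge$, $\vee$, and $\neg$ by Lemma~\ref{lemma:lewis complete and negation}, so a second pinned variable $f$ is unnecessary.

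Second, your \PSPACE\ upper bound (``Ladner's procedure on a lazy unfolding of the circuit'') would work, but the paper takes a cleaner route: Theorem~\ref{th:circuit-to-formula} gives a log-space reduction from circuit satisfiability to formula satisfiability by introducing a fresh propositional variable for each gate and asserting, via $\bigwedge_i E^i\,f'(C,g)$, that each gate-variable agrees with its defining equation at every relevant world. This turns the circuit into a polynomial-size formula and then invokes the known \PSPACE\ bound for formulas, avoiding any need to re-engineer the tableau algorithm.
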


The logic \Sfive\ behaves differently: It is well known that the satisfiability problem for this logic can be solved in \NP, as long as only one modality is present \cite{lad77}. As soon as at least two modalities are involved, the problem becomes \PSPACE-complete \cite{hamo92}. We show that, in a similar way to the other logics with \PSPACE-complete satisfiability problems that we considered, the problem is hard for this complexity class as soon as the propositional functions we allow in the formulas and circuits can express the crucial function $x\wedge\overline y$, which corresponds to clones that are supersets of $\cS_1$.

\begin{theorem}
 Let $B$ be a finite set of Boolean functions, $k\ge1$, and $\emptyset\neq M\subseteq\set{\Box,\Diamond}$. Then the following holds:
\begin{itemize}
 \item If $B\subseteq \cR_1,\cD,\cN$ or $\cM,$ then $\circsat{\Sfive}{M}{k}{B}\in\PTIME$ (Corollary~\ref{corollary:r1 and d satisfiable and in p}, Theorem~\ref{theorem:monotone formulas in p for classes below kd}, Theorem~\ref{theorem:negation in polynomial time})
 \item If $\cS_1\subseteq\clone B,$ then $\circsat{\Sfive}{M}{k}{B}$ is \PSPACE-complete if $k\ge 2$, and \NP-complete if $k=1$ (Corollary~\ref{cor:s1 formulas pspace complete}).
\item Otherwise, $\clone B\in\set{\cL,\cL_0}.$
\end{itemize}
\end{theorem}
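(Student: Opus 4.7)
The plan is to reduce the theorem to three pieces: polynomial-time upper bounds for the four tractable clones, matching PSPACE/NP lower bounds when $\cS_1\subseteq\clone B$, and a Post's-lattice case analysis confirming that the remaining clones are exactly $\cL$ and $\cL_0$. All complexity-theoretic work is already done in the cited results; the proof is essentially a verification that these results cover every point of Post's lattice for the $\Sfive$-class.

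For the polynomial-time cases, each line is a direct appeal to a pre-established result whose algorithm works uniformly across frame classes containing the reflexive singleton. Corollary~\ref{corollary:r1 and d satisfiable and in p} handles $B\subseteq\cR_1$ (every circuit is satisfied by the reflexive singleton with all variables true, since functions in $\cR_1$ preserve $1$) and $B\subseteq\cD$ (self-dual functions take opposite values on complementary inputs, so one of the two constant singletons works). Theorem~\ref{theorem:negation in polynomial time} handles $B\subseteq\cN$, and Theorem~\ref{theorem:monotone formulas in p for classes below kd} handles $B\subseteq\cM$. I would explicitly note that the reflexive singleton is an $\Sfive$ frame so all these upper bounds transfer.

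For the hardness direction when $\cS_1\subseteq\clone B$, Corollary~\ref{cor:s1 formulas pspace complete} supplies PSPACE-hardness. For $k\ge 2$ this matches the classical Halpern--Moses PSPACE upper bound for $\Sfive$ over the full Boolean basis, which covers our case because $\clone B\subseteq\cBF$ and Lemma~\ref{lemma:classes work multi-modal} lets us replace each $B$-gate with its $\{\wedge,\neg\}$-realisation without a size blowup at the circuit level. For $k=1$, Ladner's classical $\NP$ upper bound for $\Sfive$-satisfiability applies for the same reason, and $\NP$-hardness follows from Lewis's propositional $\NP$-hardness for $\cS_1$ by regarding a purely propositional $B$-circuit as a $k=1$ modal circuit in which no modal gate appears.

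The final piece is the structural claim that any clone $\clone B$ not contained in $\cR_1,\cD,\cN,$ or $\cM$ and not containing $\cS_1$ is either $\cL$ or $\cL_0$. This is a finite inspection of Post's lattice using Figure~\ref{fig:lattice}: the proper sublinear clones $\cL_1,\cL_2,\cL_3$ and the minimal clones lie inside one of $\cR_1$, $\cD$, $\cR_1\cap\cR_0$, or $\cN$, so they fall under the first bullet, while every clone above $\cL$ in the lattice either lies under $\cM$, lies under $\cD$, or contains $\cS_1$. The main obstacle is simply the bookkeeping of this lattice traversal; no new technical idea beyond assembling the cited results and verifying Post's diagram is required, and the theorem honestly leaves $\cL$ and $\cL_0$ isolated precisely because none of the tractability arguments for $\cR_1,\cD,\cN,\cM$ nor the $\cS_1$-hardness argument applies there.
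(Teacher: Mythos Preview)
Your proposal follows the same assembly-of-citations approach as the paper, and the structure is correct: the polynomial-time cases are handled by the cited results (all of which apply to $\Sfive$ since the reflexive singleton is an $\Sfive$-frame and $\Sfive\subseteq\KD$), the Post's-lattice case analysis isolating $\cL$ and $\cL_0$ is accurate, and the $\NP$-hardness for $k=1$ via Lewis is exactly what the paper does.

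There is one genuine gap in your upper-bound argument. For $k\ge 2$ you write that Halpern--Moses gives the $\PSPACE$ upper bound, and for $k=1$ that Ladner gives the $\NP$ upper bound, ``because Lemma~\ref{lemma:classes work multi-modal} lets us replace each $B$-gate with its $\{\wedge,\neg\}$-realisation without a size blowup at the circuit level.'' But Lemma~\ref{lemma:classes work multi-modal} only reduces $B$-\emph{circuits} to $\{\wedge,\neg\}$-\emph{circuits}; the Halpern--Moses and Ladner results are for \emph{formulas}, and a $\{\wedge,\neg\}$-circuit can be exponentially smaller than any equivalent $\{\wedge,\neg\}$-formula. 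You need the circuit-to-formula reduction of Theorem~\ref{th:circuit-to-formula}, which is precisely what establishes $\circsat{\Sfive}{M}{k}{B}\in\PSPACE$ for all $k$ and $\circsat{\Sfive}{M}{1}{B}\in\NP$. In the paper this is already packaged: Corollary~\ref{cor:s1 formulas pspace complete}(3) states $\PSPACE$-\emph{completeness} (not just hardness) for $k\ge 2$, with the upper bound coming from Theorem~\ref{th:circuit-to-formula}; for $k=1$ the $\NP$ upper bound is Theorem~\ref{th:circuit-to-formula} directly. Once you replace your appeal to Halpern--Moses/Ladner by Theorem~\ref{th:circuit-to-formula}, the argument goes through.
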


The above classifications leave open the cases where the set $B$ generates one of the clones $\cL$ and $\cL_0$. We will discuss these open issues in Section~\ref{sec:P}. Note that in the above theorem, the \NP-hardness results are immediate from the previously mentioned results in \cite{lew79}: It directly follows from his result that for any non-empty class $\mathcal F$ of frames, the problem $\formsat{\mathcal F}{\emptyset}{0}{B}$ is \NP-hard if $\cS_1\subseteq\clone B$.

The rest of this section is devoted to proving these theorems. As mentioned before, if suffices to prove upper bounds for circuits and lower bounds for formulas.

\subsection{General Upper Bounds}

It is well known that the ${\cal F}$-satisfiability problem for modal
formulas using the operators $\Box,\wedge$, and $\neg$ is
solvable in \PSPACE\ for a variety of classes $\mathcal F$ of frames for both the uni-modal case
\cite{lad77} and the general multi-modal setting \cite{hamo92}. The following theorem shows that the circuit case can be reduced to
the formula case, thus putting the circuit problems in \PSPACE\ as well.

The intuitive reason why the complexity of our satisfiability problems does not
increase significantly when considering circuits instead of formulas is that for many
algorithms in modal logic, the complexity depends on the number of appearing
subformulas more than on the length of the formula.

\begin{theorem}
  \label{th:circuit-to-formula}
  Let $B$ be a finite set of Boolean functions, $\cal F\in\{\K,\KD,\T,\Sfour,\Sfive\}$, $k\ge1$, and $M\subseteq\set{\Box,\Diamond}$. Then $\circsat{\mathcal F}{M}{k}{B}\in\PSPACE$ and $\circsat{\Sfive}{M}{1}{B}\in\NP$.
\end{theorem}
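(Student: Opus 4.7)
The plan is to adapt the tableau-style satisfiability algorithms of Ladner for the uni-modal case and Halpern--Moses for the multi-modal case so that they operate directly on the gates of the input circuit, never materialising its potentially exponential formula expansion. The key point, already hinted at in the intuition that precedes the theorem, is that these algorithms' resource use is governed by the number of distinct subformulas rather than by the syntactic length of the input. For a circuit $C \in \multcirc{M}{k}{B}$, the distinct subformulas $\varphi_v$ of $\varphi_C$ are in bijection with the gates $v \in V$, so their number is bounded by $|C|$.

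Concretely, I would describe a recursive nondeterministic procedure that checks satisfiability ``at the current world'' by guessing an assignment $\sigma\colon V \to \{0,1\}$ to all gates of $C$. The procedure first verifies locally that $\sigma$ respects every non-modal gate, that is, $\sigma(v) = \beta(v)(\sigma(v_1), \dots, \sigma(v_l))$ for each non-modal internal gate $v$ with ordered predecessors $v_1, \dots, v_l$, and that $\sigma(v)$ equals the prescribed constant for $0$-ary constant gates. It then processes the modal obligations: for each $i \in \set{1, \dots, k}$ and each $\Diamond_i$-gate $v$ with input $v'$ and $\sigma(v) = 1$ (respectively each $\Box_i$-gate $v$ with input $v'$ and $\sigma(v) = 0$), the procedure recursively verifies the existence of an $i$-successor world whose assignment $\sigma'$ satisfies $\sigma'(v') = 1$ (respectively $\sigma'(v') = 0$) together with $\sigma'(u') = 1$ for every $\Box_i$-gate $u$ with input $u'$ and $\sigma(u) = 1$. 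The frame-class conditions for $\KD$, $\T$, $\Sfour$, and $\Sfive$ are incorporated by the standard modifications (seriality forces an obligation at every world, reflexivity forces the current assignment to satisfy all local $\Box$-requirements, transitivity propagates $\Box$-obligations along paths, symmetry generates back-edges) familiar from the formula-based proofs in \cite{lad77,hamo92}.

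The complexity analysis is then routine. The recursion depth is bounded by the modal depth of $C$, which is at most $|C|$; at each level only the current assignment $\sigma$ and a constant amount of bookkeeping need to be stored, using $O(|C|)$ bits. Thus the whole procedure runs in nondeterministic polynomial space, and Savitch's theorem gives $\circsat{\mathcal F}{M}{k}{B} \in \PSPACE$ for every $\mathcal F \in \set{\K,\KD,\T,\Sfour,\Sfive}$.

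For the sharper bound $\circsat{\Sfive}{M}{1}{B} \in \NP$, I would exploit the polynomial-size model property of $\Sfive$ with a single modality: a satisfiable circuit has a model consisting of a single cluster of at most $|C|+1$ mutually accessible worlds, with one witness world per $\Diamond_1$-gate required to be true at the evaluation point. An $\NP$-algorithm guesses such a cluster together with a gate-assignment $\sigma_w$ for each world $w$ and then verifies in polynomial time that every $\sigma_w$ is locally consistent with the gate semantics and that the modal gates are correctly evaluated against the full cluster. The main subtlety I expect to have to address is that, because $B$ need not contain negation, $\Box_1$- and $\Diamond_1$-gates are no longer interdefinable, so both varieties of modal constraint must be guessed and verified explicitly; this is a bookkeeping issue that does not affect the asymptotic resource bounds.
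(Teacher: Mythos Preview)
Your approach is correct but genuinely different from the paper's. The paper does \emph{not} adapt the tableau procedures to circuits; instead it gives a polynomial reduction from circuit satisfiability to formula satisfiability and then invokes the known formula bounds of \cite{lad77,hamo92} as black boxes. Concretely, for each gate $g$ it introduces a fresh propositional variable (also called $g$) together with a local defining equivalence $f'(C,g)$ of the form $g\leftrightarrow\beta(g)(\ldots)$, and then outputs
\[
g_{\mathit{out}}\ \wedge\ \bigwedge_{g}\ \bigwedge_{i=0}^{\smd C} E^{i} f'(C,g),
\]
where $E\varphi$ abbreviates $\Box_1\varphi\wedge\dots\wedge\Box_k\varphi$. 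This is a modal Tseitin transformation: the nested $E^i$ prefixes push the gate definitions into every world reachable within $\md C$ steps, so the new variables faithfully track the subcircuit values everywhere they matter. The result is a genuine formula of size polynomial in $\card C$, and the existing \PSPACE\ (resp.\ \NP\ for uni-modal $\Sfive$) algorithms apply unchanged.

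Each approach has its merits. Yours is self-contained and makes transparent the ``subformula $=$ gate'' intuition, but it forces you to re-verify the termination and space bounds separately for every frame class; in particular your blanket claim that the recursion depth is bounded by $\md C$ is fine for $\K$, $\KD$, $\T$ but not literally for $\Sfour$, where persisting $\Box$-constraints mean one must bound depth by the number of $\Box$-gates instead (still polynomial, but a different argument). The paper's reduction sidesteps all of this: once the Tseitin formula is built, every frame-class-specific subtlety is inherited for free from the cited formula results, and the $\Sfive$/$\NP$ case needs no separate small-model argument.
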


\begin{proof}
 The main idea of the proof is to transform the given circuit in $\multcirc MkB$ into a modal formula using modal operators $\Box_1,\dots,\Box_k$, the modal operator $E$ (where $E\varphi$ is an abbreviation for $\Box_1\varphi\wedge\dots\wedge\Box_k\varphi$), and the propositional symbols $\wedge,\vee,\neg$. Satisfiability for these formulas for the classes $\mathcal F$ of frames that we consider can be solved in \PSPACE\ and the case where $\mathcal F=\mathtext{S}5$ and $k=1$ can be solved in \NP\ \cite{lad77,hamo92}. Note that their proofs do not cover the $E$-operator, but they work without any change if $E\varphi$ is always locally evaluated as its expansion $\Box_1\varphi\wedge\dots\wedge\Box_k\varphi$ in the algorithms presented in~\cite{hamo92}.

The reduction works as follows: Let $C$ be a circuit in $\multcirc MkB$ modal $B$-circuit with up to $k$ modalities. Due to Lemma~\ref{lemma:classes work multi-modal} and since $\PSPACE$ is closed under $\redlogm$-reductions, we can without loss of generality assume that $B=\set{\wedge,\neg}$. For every gate $g$ in $C$, define
  $f'(C,g)$ as follows:

\begin{itemize}
\item If $g$ is an input gate labeled $x_i$, then $f'(C,g) = g \leftrightarrow
  x_i$.
\item If $g$ is a $\neg$-gate, then $f'(C,g) = g \leftrightarrow \neg
  h$, where $h$ is the predecessor gate of $g$ in $C$.
\item If $g$ is an $\wedge$-gate, then $f'(C,g) = g \leftrightarrow
  (h_1 \wedge h_2)$, where $h_1, h_2$ are the predecessor gates of $g$
  in $C$.
\item If $g$ is a $\Box_i$-gate for some $1\leq i\leq k$, then $f'(C,g) = g \leftrightarrow \Box_i
  h$, where $h$ is the predecessor gate of $g$ in $C$.
\end{itemize}

  In this way, the gates of the circuit are represented by variables in the
  corresponding formula.  We will view $f'(C,g)$ as a formula over
  $\{\Box_1,\dots\Box_k, \wedge, \neg\}$, by viewing  ``$\varphi \leftrightarrow \psi$'' as 
  shorthand for
  ``$\neg (\varphi \wedge \neg \psi) \wedge \neg (\neg \varphi \wedge \psi)$.''
  Clearly, $f'$ is computable in logarithmic
  space (note that the $\leftrightarrow$ symbols do not occur nested).
  We now define the actual reduction as follows: For every
  circuit $C\in\multcirc Mk{\wedge,\neg}$ with output gate $g_{\mathtext{out}}$,
  \[f(C) = g_{\mathtext{out}} \wedge \bigwedge_{g \mbox{\scriptsize{ gate in
      }} C}\bigwedge_{i=0}^{\smd C} E^i f'(C,g).\] Here $E^i \varphi$ denotes $\underbrace{E\dots E}_{i\mathtext{ times}}\varphi$. Clearly, $f$ is
  computable in logarithmic space.
We will now show that $C$ is ${\cal F}_k$-satisfiable if and
only if $f(C)$ is ${\cal F}_k$-satisfiable.

First suppose that $C$ is ${\cal F}_k$-satisfiable. Let $M=(W,R_1,\dots,R_k,\pi)$ be an $\mathcal F_k$-model, and let $w_0 \in W$ be a world such that $M,w_0 \models C$. The model $M'$ is defined over the same set of worlds with the same successor relations, and inherits the truth assignment from $M$ for all variables appearing in $C$. For the new variables, the truth assignment $\pi'$ of $M'$ is defined as follows: For every
gate $g$ in $C$, $\pi'(g) = \{w \in W \ | \ M,w \models C_g\}$. Here
$C_g$ is the subcircuit of $C$ with output gate $g$. By definition of
$\pi'$, for every world $w \in W$ and for every gate $g \in C$, $M',w
\models g$ if and only if $M',w \models C_g$. It is easy to show
(see below) that for every world $w \in W$ and for every gate
$g\in C$, $M',w \models f'(C,g)$. This implies that $M',w_0 \models\bigwedge_{g
\mbox{\scriptsize{ gate in }} C}\bigwedge_{i=0}^{\smd C} E^i
f'(C,g)$. Since $M,w_0 \models C$ and $C = C_{g_{\mathtext{out}}}$,
it follows by the definition of $\pi'$ that $M',w_0 \models g_{\mathtext{out}}$.
It follows that $M',w_0 \models f(C)$, and thus $f(C)$ is ${\cal F}$-satisfiable.

To be complete, we will show that, as mentioned above, for every world $w \in W$ and for
every gate $g \in C$, $M',w \models f'(C,g)$. We make a case distinction.

\begin{itemize}
\item $g$ is an input gate $x_i$. By definition of $\pi'$, $M',w
  \models g$ if and only if $M',w \models x_i$. It follows that $M',w
  \models g \leftrightarrow x_i$.

\item $g$ is a $\neg$-gate.  Let $h$ be the predecessor gate of $g$.
  $M',w \models g$ if and only if $M',w \models C_g$. The latter holds
  if and only if $M',w \not \models C_h$. This holds if and only if
  $M',w \not \models h$. It follows that $M',w \models g
  \leftrightarrow \neg h$.
 
\item $g$ is an $\wedge$-gate.  Let $h_1$ and $h_2$ be the predecessor
  gates of $g$. $M',w \models g$ if and only if $M',w \models C_g$.
  The latter holds if and only if $M',w \models C_{h_1}$ and $M',w
  \models C_{h_2}$.  By definition of $\pi'$, $M',w \models C_{h_1}$
  if and only if $M',w \models h_1$ and $M',w \models C_{h_2}$ if and
  only if $M',w \models h_2$. It follows that $M',w \models g
  \leftrightarrow (h_1 \wedge h_2)$.

\item $g$ is a $\Box_i$-gate for some $i$.  Let $h$ be the predecessor gate of $g$.
  $M',w \models g$ if and only if $M',w \models C_g$. The latter holds
  if and only if $(\forall w' \in W)[wR_i w' \Rightarrow M',w' \models
  C_h]$. This holds if and only if $(\forall w' \in W)[wR_i w'
  \Rightarrow M',w' \models h]$. It follows that $M',w \models g
  \leftrightarrow \Box_i h$.
\end{itemize}

For the converse, suppose that $f(C)$ is ${\cal F}$-satisfiable. Let
$M$ be an ${\cal F}$-model, and let
$w_0 \in W$ be a world such that $M,w_0 \models f(C)$. We will prove
by induction on the structure of circuit $C_g$ that for every gate $g
\in C$ and for every world $w$ that is reachable from $w_0$ in at most
$\md C - \md{C_g}$ steps, $M,w \models C_g$ if and only if $M,w
\models g$. This clearly implies that $M,w_0 \models C$, and thus $C$
is ${\cal F}$-satisfiable.

\begin{itemize}
\item $g$ is an input gate $x_i$. Then $C_g$ is equivalent to $x_i$.
  Since $M,w \models g \leftrightarrow x_i$, it follows that $M,w
  \models C_g$ if and only if $M,w \models g$.
\item $g$ is a $\neg$-gate.  Let $h$ be the predecessor gate of $g$.
  Then $M,w \models C_g$ if and only if $M,w \not \models C_h$. By
  induction, the latter holds if and only if $M,w \not \models h$.
  Clearly, $M,w \not \models h$ if and only if $M,w \models \neg h$.
  Since $M,w \models g \leftrightarrow \neg h$, it follows that $M,w
  \models C_g$ if and only if $M,w \models g$, as required.

\item $g$ is an $\wedge$-gate.  Let $h_1$ and $h_2$ be the predecessor
  gates of $g$. Then $M,w \models C_g$ if and only if $M,w \models
  C_{h_1}$ and $M,w \models C_{h_2}$. By induction, the latter holds
  if and only if $M,w \models h_1$ and $M,w \models h_2$, and this
  holds if and only if $M,w \models h_1 \wedge h_2$. Since $M,w
  \models g \leftrightarrow (h_1 \wedge h_2)$, it follows that $M,w
  \models C_g$ if and only if $M,w \models g$, as required.

\item $g$ is a $\Box_i$-gate for some $i$. Let $h$ be the predecessor gate of $g$.
  Then $M,w \models C_g$ if and only if for all $w' \in W$ such that
  $wR_i w'$, it holds that $M,w' \models C_h$. Note that $\md{C_h} =
  \md{C_g} - 1$. Since $w$ is reachable from $w_0$ in at most $\md{C}
  - \md{C_g}$ steps, it follows that for every $w'$ such that $wR_i w'$, $w'$ is
  reachable from $w_0$ in at most $\md{C} - \md{C_g} + 1 = \md{C} -
  \md{C_h}$ steps. And so, by induction, it follows that (for all $w'
  \in W$ such that $wR_i w'$, it holds that $M,w' \models C_h$) if and
  only if (for all $w' \in W$ such that $wR_i w'$, it holds that $M,w'
  \models h$), and this holds if and only if $M,w \models \Box_i h$.
  Since $M,w \models g \leftrightarrow \Box_i h$, it follows that $M,w
  \models C_g$ if and only if $M,w \models g$, as required.
\end{itemize}
Finally note that the \KD\ case easily follows from the result for \K, since
a circuit $C$ is \KD-satisfiable if and only if
$C\wedge\bigwedge_{i=0}^{\smd{\varphi}}E^i\bigwedge_{j=1}^k\Diamond_j 1$ is \K-satisfiable.
\end{proof}

Note that in the uni-modal case, we do not have to introduce the $E$-operator as in the proof above. Therefore the construction of the proof directly implies that for any class $\mathcal F$ of frames, uni-modal satisfiability for circuits (using any set of propositional gates) is not more difficult than the satisfiability problem for $\set{\wedge,\neg}$-formulas for the same class of frames.

\begin{corollary}\label{corollary:circuit-to-formula}
  Let $B$ be a finite set of Boolean functions and $\mathcal F$ a class of frames. Then $\circsat{\mathcal F}{\Box,\Diamond}{1}{B}\redpm\formsat{\mathcal F}{\Box}{1}{\wedge,\neg}$.
\end{corollary}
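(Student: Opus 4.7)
The plan is to compose three logspace reductions, each already available in the material above; since composition of logspace reductions is polynomial-time, this yields the desired $\redpm$.

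Step~1: Since $B\subseteq\cBF=\clone{\wedge,\neg}$, Lemma~\ref{lemma:classes work multi-modal} reduces $\circsat{\mathcal F}{\Box,\Diamond}{1}{B}$ to $\circsat{\mathcal F}{\Box,\Diamond}{1}{\wedge,\neg}$ by replacing each gate labeled with $f\in B$ by a fixed $\{\wedge,\neg\}$-subcircuit computing $f$. The modal gates are left untouched, so the allowed modalities are unchanged.

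Step~2: Eliminate the $\Diamond$-gates. Using the semantic equivalence $\Diamond\varphi\equiv\neg\Box\neg\varphi$, each $\Diamond$-gate $g$ with predecessor $h$ is rewritten into three gates: a fresh $\neg$-gate $n_1$ with input $h$, a fresh $\Box$-gate $b$ with input $n_1$, and $g$ itself relabeled as a $\neg$-gate with input $b$. All outgoing wires of $g$ are preserved, so this local rewriting produces an $\mathcal F$-equivalent circuit in $\multcirc{\Box}{1}{\wedge,\neg}$ in logarithmic space. This step is needed because the construction in the proof of Theorem~\ref{th:circuit-to-formula} only contains explicit cases for $\Box_i$-gates.

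Step~3: Apply the reduction $f$ from the proof of Theorem~\ref{th:circuit-to-formula} to the resulting circuit. The crucial observation, already pointed out in the remark preceding the corollary, is that in the uni-modal case $k=1$ the auxiliary macro $E=\Box_1$ coincides with the single modality $\Box$, so no new connective is introduced and $f(C)$ lies in $\multform{\Box}{1}{\wedge,\neg}$. The theorem guarantees that $\mathcal F$-satisfiability is preserved. Composing Steps~1--3 yields a polynomial-time (in fact logspace) many-one reduction from $\circsat{\mathcal F}{\Box,\Diamond}{1}{B}$ to $\formsat{\mathcal F}{\Box}{1}{\wedge,\neg}$, as required. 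There is no genuine obstacle; the only thing to verify is that collapsing $E$ to $\Box$ in the uni-modal case is legitimate, which is immediate from the definition $E\varphi=\Box_1\varphi\wedge\cdots\wedge\Box_k\varphi$ with $k=1$.
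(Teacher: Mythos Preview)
Your proposal is correct and follows essentially the same route as the paper: reduce to $\{\wedge,\neg\}$ via Lemma~\ref{lemma:classes work multi-modal}, then apply the construction of Theorem~\ref{th:circuit-to-formula}, observing that for $k=1$ the operator $E$ collapses to $\Box$, so the output lands in $\multform{\Box}{1}{\wedge,\neg}$. Your explicit Step~2 eliminating $\Diamond$-gates is a helpful clarification (the paper's proof of Theorem~\ref{th:circuit-to-formula} only lists a case for $\Box_i$-gates), but it does not change the underlying argument.
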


\subsection{\PSPACE-completeness}

We now show how to express, in a satisfiability-preserving way, uni-modal formulas and circuits using a restricted set of Boolean connectives and one modal operator. This implies that our satisfiability problems for these restricted sets of formulas are as hard as the general case.

As mentioned in the discussion following Lemma~\ref{lemma:classes work multi-modal}, with many formula transformations, the size of the resulting formula can be exponential. A crucial tool in dealing with this situation is the following lemma showing that for certain sets $B,$ there are always short formulas representing the functions AND, OR, and NOT.  Part (1) is Lemma~1.4.5 from \cite{hschnoor07:phdthesis}, the result for the case $\clone B=\cBF$ is proven in \cite{lew79}. Part (2) follows directly from the proofs in \cite{lew79}.

\begin{lemma}\label{lemma:or formula}\label{lemma:lewis complete and
    negation} Let $B$ be a finite set of Boolean functions.
  \begin{enumerate}
  \item If $V\subseteq\clone{B}$ ($E\subseteq\clone{B}$, resp.), then there exists a $B$-formula $f(x,y)$ such that $f$ represents $x\vee y$ ($x\wedge y$, resp.) and each of the variables $x$ and $y$ occurs exactly once in $f(x,y)$.
  \item If $\mathtext{N}\subseteq\clone{B}$, then there exists a $B$-formula $f(x)$ such that $f$ represents $\overline x$ and the variable $x$ occurs in $f$ only once.
  \end{enumerate}
\end{lemma}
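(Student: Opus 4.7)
My plan is to prove each part by a minimality argument. All three hypotheses imply $0, 1 \in \clone B$ (since each of $V$, $E$, and $\mathtext{N}$ contains both constants, using $0 = \neg 1$ in the last case), so fix closed $B$-formulas $F_0, F_1$ representing the constants. The target function ($\vee$, $\wedge$, or $\neg$) also lies in $\clone B$, so among all $B$-formulas over the variables $\{x, y\}$ (or just $\{x\}$ in Part~2) that may use $F_0, F_1$ as subformulas and that represent the target, pick one, call it $f$, minimizing the total number of $x$- and $y$-occurrences. I will derive a contradiction whenever some variable occurs more than once, which gives the desired conclusion.

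For Part~2, let $k$ be the number of $x$-occurrences in the minimum $f$, and define $h \colon \{0,1\}^k \to \{0,1\}$ by evaluating $f$ with the $j$-th occurrence of $x$ replaced by the $j$-th input bit. Then $h(\vec 0) = \neg 0 = 1$ and $h(\vec 1) = \neg 1 = 0$, so along any monotone chain from $\vec 0$ to $\vec 1$ in the cube, $h$ must flip from $1$ to $0$ on some single-bit edge. This yields an index $i$ and constants $\vec c \in \{0,1\}^{k-1}$ for the other positions such that substituting $F_{c_j}$ at each $x$-leaf $j \neq i$ in $f$ produces a $B$-formula representing $\neg x$ in which $x$ occurs only at position $i$. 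If $k > 1$, this contradicts minimality, so $k = 1$.

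For Part~1 (OR case; the AND case is handled dually by swapping $0$ and $1$), let $a, b$ be the numbers of $x$- and $y$-occurrences in $f$. The induced map $h \colon \{0,1\}^{a+b} \to \{0,1\}$ satisfies $h(\vec 0, \vec 0) = 0$ while $h$ equals $1$ at the three other uniform corners $(\vec 1, \vec 0)$, $(\vec 0, \vec 1)$, $(\vec 1, \vec 1)$. I would exhibit a $2$-face of $\{0,1\}^{a+b}$ spanned by one $x$-coordinate $i$ and one $y$-coordinate $j$, with the other coordinates frozen to constants, on which $h$ exhibits the $\vee$-pattern $(0,1,1,1)$; freezing those constants via $F_0, F_1$ in $f$ then yields a shorter $B$-formula for $\vee$, contradicting minimality. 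The main obstacle is exhibiting such a $2$-face. My approach is to take a coordinate-wise maximal element $(\vec u^*, \vec v^*) \in h^{-1}(0)$ and pick $i, j$ with $u^*_i = v^*_j = 0$: maximality immediately forces $h = 1$ at $(\vec u^* + e_i, \vec v^*)$, $(\vec u^*, \vec v^* + e_j)$, and $(\vec u^* + e_i, \vec v^* + e_j)$, giving exactly the $\vee$-pattern. The subtle step is ensuring that a maximal element with both $\vec u^* \neq \vec 1$ and $\vec v^* \neq \vec 1$ exists, which requires an auxiliary path argument exploiting the known corner values of $h$ on the sub-cubes $\vec v = \vec 0$ and $\vec u = \vec 0$; this is where the detailed combinatorial analysis (as in Lemma~1.4.5 of the cited thesis) is needed.
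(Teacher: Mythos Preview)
The paper does not give its own proof of this lemma; it simply cites Lemma~1.4.5 of~\cite{hschnoor07:phdthesis} for Part~(1) and~\cite{lew79} for Part~(2). So there is nothing in the paper to compare against beyond those references, and your attempt already goes further than the paper itself. Your argument for Part~(2) is complete and correct: the monotone-chain argument in the $k$-cube cleanly isolates a single coordinate on which the occurrence function acts as negation.

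For Part~(1), the overall minimality strategy is sound, but the maximal-element mechanism you propose has a real flaw, not merely an unfinished detail. Take $a=b=2$ and
\[
h^{-1}(0)=\bigl\{(00,00),\ (11,01),\ (11,10),\ (01,11),\ (10,11)\bigr\}.
\]
This $h$ has the required diagonal values $h(00,00)=0$ and $h(11,00)=h(00,11)=h(11,11)=1$, yet \emph{every} maximal element of $h^{-1}(0)$ has $\vec u^{*}=(1,1)$ or $\vec v^{*}=(1,1)$, so there is no index pair $(i,j)$ available at any maximal point. The desired $\vee$-patterned $2$-face does exist here---at $(00,00)$, for any choice of $i$ and $j$---but $(00,00)$ is not maximal (it lies below $(11,01)$). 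Hence the obstacle is not that a suitable maximal element ``requires an auxiliary path argument'' to exhibit; such an element can fail to exist altogether, and the $2$-face must be located by some other means. The combinatorial core you defer to the thesis is therefore not a detail to be filled in along the lines you sketch---it is the substance of Part~(1), and your proposed route to it does not work as stated.
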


The proof of the following theorem uses a generalization of ideas from the proof for the main result in \cite{lew79}. This can be applied to an arbitrary class of frames, and in particular, it yields \PSPACE\ completeness results for \K\ and \KD.

\begin{theorem}\label{theorem:s1 implementation}
  Let $B$ be a finite set of Boolean functions such that $\mathtext{S}_1\subseteq\clone B$, ${\mathcal F}$ a class of frames, and $\emptyset\neq M\subseteq\{\Box,\Diamond\}.$ 
  Then the following holds: 
  \begin{itemize}
  \item $\formsat{\mathcal F}{\Box, \Diamond}{1}{\wedge,\neg}\redlogm\formsat{\mathcal F}{M}{1}{B},$ 
  \item  $\formsat{\Sfive}{\Box,\Diamond}{2}{\wedge,\neg}\redlogm\formsat{\Sfive}{M}{2}{B}$.
  \end{itemize}
\end{theorem}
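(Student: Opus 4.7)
The plan is to adapt Lewis's propositional reduction~\cite{lew79} to the modal setting in three layered stages: abstract each subformula into a fresh propositional variable, eliminate the modal operator that is not in $M$ via duality on complement variables, and finally implement $\{\wedge,\neg\}$ using the $\cS_1$-generator $x\wedge\bar y$. I describe the uni-modal case; the bi-modal $\Sfive$ case follows the same pattern with a small adjustment to propagation.

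Given $\varphi\in\multform{\Box,\Diamond}{1}{\wedge,\neg}$ of modal depth $d$, for each subformula $\psi$ of $\varphi$ introduce fresh variables $y_\psi,\bar y_\psi$ intended to carry the values of $\psi$ and $\neg\psi$ at the current world. Let $\delta$ be the conjunction over all subformulas of the definition axioms relating each $y_\psi$ and $\bar y_\psi$ to the structural components of $\psi$: for instance $(y_{\alpha\wedge\beta}\leftrightarrow y_\alpha\wedge y_\beta)\wedge(\bar y_{\alpha\wedge\beta}\leftrightarrow\bar y_\alpha\vee\bar y_\beta)$ for conjunction, $(y_{\Box\alpha}\leftrightarrow\Box y_\alpha)\wedge(\bar y_{\Box\alpha}\leftrightarrow\Diamond\bar y_\alpha)$ for $\Box$, and symmetric versions for $\neg$ and $\Diamond$. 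Then $\varphi_0:=y_\varphi\wedge\bigwedge_{i=0}^{d}\Box^i\delta$ is $\mathcal F$-equisatisfiable with $\varphi$: any $\mathcal F$-model of $\varphi$ extends by assigning each $y_\psi$ the value of $\psi$ at each world, and conversely propagating $\delta$ to depth $d$ forces $y_\varphi$ to agree with $\varphi$ at the root.

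If $M\neq\{\Box,\Diamond\}$, eliminate the missing operator using the enforced equivalence $\bar y_\alpha\equiv\neg y_\alpha$ together with modal duality. For $M=\{\Box\}$, replace each occurrence of $\Diamond y_\alpha$ inside $\delta$ by $\neg\Box\bar y_\alpha$ and each $\Diamond\bar y_\alpha$ by $\neg\Box y_\alpha$; for $M=\{\Diamond\}$, apply the symmetric replacements and additionally rewrite the outer $\Box^i\delta$ as $(\neg\Diamond\neg)^i\delta$. The result is an $\mathcal F$-equisatisfiable $\varphi_1\in\multform{M}{1}{\wedge,\neg}$. Next, fix a $B$-formula $\phi(x,y)$ computing $x\wedge\bar y$, which exists since $\cS_1\subseteq\clone B$, and an extra variable $t$. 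Transform $\varphi_1$ by replacing every $\neg\alpha$ with $\phi(t,\alpha)$, every $\alpha\wedge\beta$ with $\phi(\alpha,\phi(t,\beta))$, and every $\Box\alpha$ (respectively $\Diamond\alpha$) with $\Box\,\phi(t,\phi(t,\alpha))$ (respectively $\Diamond\,\phi(t,\phi(t,\alpha))$); each substitution correctly computes its target operator whenever $t=1$, and the modal substitutions additionally enforce $t=1$ at the accessed worlds. Wrap the resulting formula $\tilde\varphi$ as $\varphi':=\phi(t,\phi(t,\tilde\varphi))$, which evaluates to $t\wedge\tilde\varphi$ and therefore forces $t=1$ at the root world. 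The final $\varphi'$ lies in $\multform{M}{1}{B}$, has size polynomial in $|\varphi|$ (with factor $d$ from propagation and constant factors from $\phi$), and is logspace constructible.

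For the bi-modal $\Sfive$ case, the same construction applies with propagation adjusted: replace $\bigwedge_{i=0}^{d}\Box^i\delta$ by $\bigwedge_{i=0}^{d}E^i\delta$ where $E\delta:=\Box_1\delta\wedge\Box_2\delta$, so that $\delta$ holds at every world reachable via any combination of $R_1$- and $R_2$-steps up to depth $d$. The main obstacle is verifying that the complement variables $\bar y_\psi$ stay in lockstep with $\neg y_\psi$ at every world relevant to evaluating $\varphi$ at the root, since both the duality-based modal elimination and the $B$-translation depend on it; this is a double induction on subformula structure and on reachable depth, using the propagation of $\delta$ through $d$ modal steps to pin down the value of every subformula at every world that contributes to the root's truth value.
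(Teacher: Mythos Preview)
The paper's argument is considerably more direct than yours. Since $\cS_1\subseteq\clone B$ implies $\clone{B\cup\{1\}}=\cBF$, Lemma~\ref{lemma:lewis complete and negation} provides $(B\cup\{1\})$-formulas $f_\wedge,f_\vee,f_\neg$ in which \emph{each variable occurs exactly once}. After first rewriting $\varphi$ to use only modalities from $M$ (via $\Box\equiv\neg\Diamond\neg$), the paper simply replaces each connective by the corresponding $f$, obtaining an equivalent $(B\cup\{1\})$-formula of linear size; then it substitutes a fresh variable $t$ for the constant~$1$ and conjoins $\bigwedge_{i=0}^{d}\Box_1^i t$, organised as a balanced $\wedge$-tree using $\wedge\in\clone B$. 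No Tseitin variables, no complement variables, no double induction.

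Your construction has a genuine gap in the size analysis. You fix ``a $B$-formula $\phi(x,y)$ computing $x\wedge\bar y$'' with no control over how often $x$ and $y$ occur in $\phi$; for arbitrary $B$ with $\clone B\supseteq\cS_1$ there is no read-once guarantee. Each application of your rules $\neg\alpha\mapsto\phi(t,\alpha)$, $\alpha\wedge\beta\mapsto\phi(\alpha,\phi(t,\beta))$, $\Diamond\alpha\mapsto\Diamond\phi(t,\phi(t,\alpha))$ multiplies the size of the rewritten argument by the number $b$ of occurrences of $y$ in $\phi$. But $\varphi_1$ is not shallow: for $M=\{\Diamond\}$ the prefixes $(\neg\Diamond\neg)^i$ already yield $\Theta(i)$ nested rule applications, and even for $M=\{\Box\}$ the conjunction $\bigwedge_{i=0}^d\Box^i\delta$ together with the conjunction inside $\delta$ has $\wedge$-depth linear in $|\varphi|$ unless you explicitly balance it (which you do not mention). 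Hence the output can have size $b^{\Theta(d)}$, not ``polynomial with constant factors from $\phi$.'' The single-occurrence property of Lemma~\ref{lemma:lewis complete and negation} is precisely what prevents this blow-up in the paper --- and it holds for $B\cup\{1\}$, which is why the paper works over $B\cup\{1\}$ first and eliminates the constant afterwards.

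A secondary issue: your claim that the modal wrappers ``enforce $t=1$ at the accessed worlds'' only holds when the wrapped modality sits in a positive position. Inside the biconditionals of $\delta$ every modality also occurs negatively, and for instance $\neg\Box\phi(t,\phi(t,\alpha))$ is equivalent to $\Diamond(\bar t\vee\bar\alpha)$, which is witnessed equally well by a successor with $t=0$; nothing then forces $t$ there. You acknowledge the need for a double induction but do not carry it out. The paper sidesteps this entirely by conjoining $\bigwedge_i\Box^i t$ as an explicit, top-level conjunct.
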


\begin{proof} 
First consider the uni-modal case. Let $\varphi\in\multform{\Box,\Diamond}{1}{\wedge, \neg}.$ Without loss of generality, assume that $\varphi$ contains only modal operators from $M$ (use the identity $\Box\equiv\neg\Diamond\neg$ otherwise). Let $B':=B\cup\{1\}$. Then Figure~\ref{fig:lattice} shows that $\clone{B'}=\rm BF$ (since ${\rm I}_1$ is the smallest clone containing 1, and {\rm BF} is the smallest clone containing ${\rm I}_1$ and ${\rm S}_1$). It follows from Lemma~\ref{lemma:lewis complete and negation} that there is a $B'$-formula $f_{\neg}(x)$ that represents $\overline x$, and $x$ occurs in $f_{\neg}(x)$ only once, and there exist $B'$-formulas $f_\wedge(x,y)$ and $f_\vee(x,y)$ such that $f_\wedge$ represents $\wedge$, $f_\vee(x,y)$ represents $\vee$, and $x$ and $y$ occur exactly once in $f_\wedge(x,y)$ and exactly once in $f_\vee(x,y)$. In $\varphi$, replace every occurrence of $\wedge$ with $f_\wedge,$ every occurrence of $\vee$ with $f_\vee$, and every occurrence of $\neg$ with $f_\neg.$ Call the resulting formula $\varphi'$. Clearly, $\varphi'$ is a formula in $\multform{M}{1}{B'},$ and  $\varphi'$ is equivalent to $\varphi.$ By choice of $f_\vee$, $f_\wedge$, and $f_\neg$, $\varphi'$ is computable in polynomial time.

Now replace every occurrence of the constant $1$ with a new variable $t$ and force $t$ to be $1$ in every relevant world by adding $\wedge\bigwedge_{i=0}^{\smd \varphi}\Box_1^i t.$ This is a conjunction of linearly many terms (since $\md{\varphi}\leq\card{\varphi}$). We insert parentheses in such a way that we get a tree of $\wedge$'s of logarithmic depth. Now express the $\wedge$'s in this tree with the  equivalent $B$-formula (which exists, since $\clone B\supseteq {\rm S}_{1}\supset{\rm E}_2 = \clone{\wedge}$) with the result only increasing polynomially in size. It is obvious that this formula is satisfiable if and only if the original formula is.

Now for the bimodal case and the logic $\Sfive,$ we use the same construction as above, except that to force the variable $t$ to true in all relevant worlds, we use the formula $(\Box_1\Box_2)^{\smd\varphi}t.$ Due to the reflexivity of both successor relations in $\Sfive_2$-models, this forces $t$ to be true in all relevant worlds.
\end{proof}

The following theorem implies that for the logic \K, \PSPACE-completeness already holds for a lower class in Post's Lattice. The proof is nearly identical to the one for the above Theorem~\ref{theorem:s1 implementation}: Note that $\clone{\mathtext{S}_{11}\cup \{1\}}=\mathtext{M}$, and apply Lemma~\ref{lemma:lewis complete and negation} for the class $\mathtext{M}$. Then follow the construction above. (We can represent $\wedge$ by a $B$-formula since ${\rm S}_{11}\supseteq{\rm E}_2=\clone\wedge$, and we can represent $0$ by a $B$-formula since $0\in{\rm S}_{11}$.)

\begin{theorem}\label{theorem:s11 implementation}
  Let $B$ be a finite set of Boolean functions such that $\mathtext{S}_{11}\subseteq\clone{B}$, ${\mathcal F}$ a class of frames, $k\ge 1$, and $M\subseteq\{\Box,\Diamond\}$. Then $\formsat{\mathcal F}{M}{1}{\wedge,\vee, 0}\redlogm\formsat{\mathcal F}{M}{1}{B}$.
\end{theorem}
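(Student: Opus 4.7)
The plan is to mirror the proof of Theorem~\ref{theorem:s1 implementation}, replacing the role of $\mathtext{S}_1$ by $\mathtext{S}_{11}$ and of $\mathtext{BF}$ by $\mathtext{M}$. The algebraic observation that drives the whole argument is $\clone{\mathtext{S}_{11}\cup\{1\}}=\mathtext{M}$, so for $B':=B\cup\{1\}$ the hypothesis $\mathtext{S}_{11}\subseteq\clone{B}$ gives $\clone{B'}\supseteq\mathtext{M}\supseteq\mathtext{V}\cup\mathtext{E}$. Applying Lemma~\ref{lemma:or formula}(1) to $B'$ then furnishes constant-size $B'$-formulas $f_\wedge(x,y)$ and $f_\vee(x,y)$ representing $\wedge$ and $\vee$ in which each of $x,y$ occurs exactly once; these formulas may mention the constant $1$.

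Given $\varphi\in\multform{M}{1}{\wedge,\vee,0}$ the reduction performs four successive rewritings. First, substitute every $\wedge$- and $\vee$-node of $\varphi$ by $f_\wedge$ and $f_\vee$; the single-occurrence property keeps the size linear. Second, replace every constant $0$ by a fixed $B$-formula computing $0$, which exists because $0\in\mathtext{S}_{11}\subseteq\clone{B}$. Third, replace every remaining occurrence of $1$ by a fresh propositional variable $t$ and conjoin the ``forcing'' part $\bigwedge_{i=0}^{\md{\varphi}}\Box_1^i t$, parenthesised as a balanced binary tree of $\wedge$'s of logarithmic depth. Fourth, rewrite each $\wedge$ used in that tree by a $B$-formula equivalent to $\wedge$ (available since $\wedge\in\mathtext{E}_2\subseteq\mathtext{S}_{11}\subseteq\clone{B}$); the constant size of that replacement combined with the logarithmic depth of the tree guarantees only polynomial blow-up. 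The result lies in $\multform{M}{1}{B}$ and the whole transformation is log-space computable.

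Correctness is the standard equi-satisfiability argument. If $\varphi$ is $\mathcal{F}$-satisfied at $w_0$, extending the model so that $t$ holds at every world satisfies the forcing conjunct for free, and with $t=1$ everywhere the substitution $1\mapsto t$ has no effect, so the rest evaluates exactly as $\varphi$. Conversely, any model of the constructed formula at $w_0$ must make $t$ true at $w_0$ and at every world reachable from $w_0$ in at most $\md{\varphi}$ modal steps, which is precisely the set of worlds that can influence the truth of $\varphi$ at $w_0$; restricted to this region the constructed formula is semantically identical to $\varphi$, so $\varphi$ is satisfied at $w_0$ as well.

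I expect the main obstacle to be the size analysis: the $B$-realization of $\wedge$ need not inherit the single-occurrence property of Lemma~\ref{lemma:or formula}(1), and plugging it naively into a right-deep conjunction of the $\md{\varphi}+1$ forcing clauses would yield an exponential formula. Balancing that conjunction as a logarithmic-depth binary tree before substitution keeps the blow-up polynomial, and once this book-keeping has been done the remainder of the proof is a direct transcription of the argument for Theorem~\ref{theorem:s1 implementation}.
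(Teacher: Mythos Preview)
Your proposal is correct and follows essentially the same approach as the paper. The paper's proof is a terse sketch that points to the same algebraic identity $\clone{\mathtext{S}_{11}\cup\{1\}}=\mathtext{M}$, invokes Lemma~\ref{lemma:or formula} for $\mathtext{M}$ to obtain short $B\cup\{1\}$-formulas for $\wedge$ and $\vee$, handles the constant $0$ via $0\in\mathtext{S}_{11}$, and then defers to the forcing-and-balanced-tree construction of Theorem~\ref{theorem:s1 implementation} (using $\wedge\in\mathtext{E}_2\subseteq\mathtext{S}_{11}$ for the tree); your write-up simply spells these steps out in full.
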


The above theorems give the following corollary.

\begin{corollary}\label{cor:s1 formulas pspace complete}\label{cor:s11 formulas pspace complete}
  Let $B$ be a finite set of Boolean functions, and let $\emptyset\neq M\subseteq\set{\Box,\Diamond}$.
  \begin{enumerate}
     \item If $\clone B\supseteq\cS_1$, and $\mathcal F$ is a class of frames such that $\Sfour\subseteq\mathcal{F}\subseteq\K$, and $k\ge 1$, then $\formsat{\mathcal F}{M}{k}{B}$ and $\circsat{\mathcal F}{M}{k}{B}$ are \PSPACE-hard.
     \item If $\clone B\supseteq\cS_{11}$ and $k\ge 1$, then $\formsat{\K}{\Box,\Diamond}{k}{B}$ and $\circsat{\K}{\Box,\Diamond}{k}{B}$ are \PSPACE-complete.
     \item If $\clone B\supseteq\cS_1$ and $k\ge 2$, then $\formsat{\Sfive}{M}{k}{B}$ and $\circsat{\Sfive}{M}{k}{B}$ are \PSPACE-complete.
  \end{enumerate}
\end{corollary}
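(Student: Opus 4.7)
The plan is to assemble the three parts from the technical theorems proved earlier in this subsection, together with the general upper bound of Theorem~\ref{th:circuit-to-formula} and the classical \PSPACE-hardness results of Ladner~\cite{lad77} and Halpern--Moses~\cite{hamo92}. In each part, the hardness direction is obtained by an expressibility reduction along Post's lattice, while the matching upper bound for circuits (parts~2 and~3) is simply Theorem~\ref{th:circuit-to-formula}.

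For part~1, I would start from the classical fact that $\formsat{\mathcal F}{\Box}{1}{\wedge,\neg}$ is \PSPACE-hard for every frame class $\mathcal F$ with $\Sfour\subseteq\mathcal F\subseteq\K$; this range covers exactly the standard cases handled by Ladner's tree-based construction. By Proposition~\ref{prop:trivial reductions} the same is true of $\formsat{\mathcal F}{\Box,\Diamond}{1}{\wedge,\neg}$. Then the first item of Theorem~\ref{theorem:s1 implementation} gives $\formsat{\mathcal F}{\Box,\Diamond}{1}{\wedge,\neg}\redlogm\formsat{\mathcal F}{M}{1}{B}$ for every non-empty $M$ whenever $\cS_1\subseteq\clone B$, so $\formsat{\mathcal F}{M}{1}{B}$ is \PSPACE-hard. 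Proposition~\ref{prop:trivial reductions} then extends the hardness to any $k\ge 1$ and from formulas to circuits.

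For part~2 the hypothesis is weaker ($\cS_{11}\subseteq\clone B$) and we also claim a matching upper bound. The hardness side would be obtained from Theorem~\ref{theorem:s11 implementation}, which reduces $\formsat{\K}{\Box,\Diamond}{1}{\wedge,\vee,0}$ to $\formsat{\K}{\Box,\Diamond}{1}{B}$; the \PSPACE-hardness of the source problem is established separately in the paper (its proof genuinely needs both modal operators, since negation is unavailable to dualize between them). The \PSPACE\ upper bound for circuits follows from Theorem~\ref{th:circuit-to-formula}, and Proposition~\ref{prop:trivial reductions} handles the extension to $k\ge 1$ and to the circuit setting. Part~3 proceeds in exactly the same way as part~1, but invoking the second item of Theorem~\ref{theorem:s1 implementation} (the $\Sfive$ version) and taking the source hardness from Halpern and Moses~\cite{hamo92}, who showed that multi-modal $\Sfive$-satisfiability is \PSPACE-complete whenever $k\ge 2$.

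The main obstacle is really just bookkeeping: verifying that the reductions supplied by Theorems~\ref{theorem:s1 implementation} and~\ref{theorem:s11 implementation} apply in the exact form needed (in particular that the restriction on the available modal operators is respected) and that the source problem of part~2 has indeed been shown \PSPACE-hard before the corollary is invoked. No new combinatorial content appears in the corollary itself; all of the difficult work has already been carried out in the referenced theorems.
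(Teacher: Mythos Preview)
Your proposal is correct and matches the paper's proof essentially line for line: upper bounds via Theorem~\ref{th:circuit-to-formula}, hardness in part~1 from Ladner~\cite{lad77} plus Theorem~\ref{theorem:s1 implementation}, hardness in part~2 from the \PSPACE-hardness of $\formsat{\K}{\Box,\Diamond}{1}{\wedge,\vee,0}$ plus Theorem~\ref{theorem:s11 implementation}, and hardness in part~3 from Halpern--Moses~\cite{hamo92} plus the bimodal clause of Theorem~\ref{theorem:s1 implementation}. One small correction: the source hardness for part~2 is not proved in the present paper but is imported from~\cite[Theorem~6.5]{hem01}.
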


\begin{proof}
  The upper bounds follow from Theorem~\ref{th:circuit-to-formula}.
  \begin{enumerate}
  \item {In \cite{lad77}, it is shown that for every class of frames $\mathcal F$
         such that $\Sfour\subseteq\mathcal{F}\subseteq\K,$
         the problem $\formsat{\mathcal F}{M}{1}{\wedge,\neg}$ is \PSPACE-hard. Therefore this follows from \cite{lad77} and Theorem~\ref{theorem:s1 implementation}.}
  \item {In~\cite[Theorem 6.5]{hem01}, it is shown that
      $\formsat{\K}{\Box,\Diamond}{1}{\wedge,\vee,0}$ is \PSPACE-hard. Thus
      the result follows from Theorem~\ref{theorem:s11
        implementation}.}
  \item {In \cite{hamo92}, it is shown that $\formsat{\mathtext{S}5}{\Box,\Diamond}{2}{\wedge,\neg}$ is \PSPACE-hard. Therefore, the result follows from Theorem~\ref{theorem:s1 implementation}.}
  \end{enumerate}
\end{proof}

\subsection{\CONP-completeness}\label{sect:conp completeness}

In \cite{hem01}, the analogous result of the following lemma was shown
for uni-modal formulas. We prove that this \CONP\ upper bound also holds for circuits.

\begin{lemma}\label{lemma:e in conp}
  Let $k\geq 1$. Then $\circsat{{\rm K}}{\Box,\Diamond}{k}{\wedge,0,1}\in\CONP.$
\end{lemma}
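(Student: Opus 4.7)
The plan is to reduce satisfiability of such a circuit to a recursive search over polynomial-size \emph{states}, and to observe that the resulting recursion admits a linear refutation chain, which gives a polynomial-size certificate for unsatisfiability. Since only $\wedge$, $0$, $1$, $\Box_i$, and $\Diamond_i$ appear (all monotone in their arguments), I would first establish a monotonicity lemma by structural induction on the gates of $C$: if $M,w \models C$ then $M',w \models C$, where $M'$ agrees with $M$ on the frame but sets every propositional variable to true in every world. Thus it suffices to decide satisfiability under the all-true assignment, and we may treat every variable input of $C$ as the constant $1$.

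Next I would introduce the bookkeeping. Call a gate of $C$ \emph{atomic} if its gate-type is not $\wedge$, and for each gate $g$ let $\mathrm{Atoms}(g) \subseteq V$ be the set of atomic gates reachable from $g$ by traversing only $\wedge$-gates; then $\varphi_g$ is equivalent to $\bigwedge_{a \in \mathrm{Atoms}(g)} \varphi_a$ and $|\mathrm{Atoms}(g)| \leq |C|$. A \emph{state} is a set $S \subseteq V$ of atomic gates, representing the conjunction $\bigwedge_{a \in S} \varphi_a$; let $D_i(S) = \set{h : \Diamond_i h \in S}$ and $B_i(S) = \set{h : \Box_i h \in S}$. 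The key recursive lemma I would prove is: $S$ is $\K_k$-satisfiable iff $0 \notin S$ and, for each modality $i \in \set{1,\ldots,k}$ and each $h \in D_i(S)$, the child state
\[
  S' \;=\; \mathrm{Atoms}(h) \cup \bigcup_{h' \in B_i(S)} \mathrm{Atoms}(h')
\]
is $\K_k$-satisfiable. The forward direction is immediate: any $i$-successor realizing $\Diamond_i h$ must simultaneously satisfy every $\Box_i$-conjunct of $S$. The backward direction builds a tree model by grafting, for each pair $(i,h)$ with $h \in D_i(S)$, a fresh $i$-successor realizing the child state---and, crucially, when $D_i(S) = \emptyset$ giving the world no $i$-successors, so the $\Box_i$-conjuncts hold vacuously.

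From this lemma, $C$ is unsatisfiable iff the initial state $S_0 = \mathrm{Atoms}(\mathit{out})$ is unsatisfiable, and the contrapositive of the recursion unfolds into a chain: there is a sequence $(S_0,i_0,h_0),(S_1,i_1,h_1),\ldots,(S_{l-1},i_{l-1},h_{l-1}),S_l$ with each $S_{j+1}$ equal to the child of $S_j$ via $(i_j,h_j)$ and $0 \in S_l$. Because $\mathrm{Atoms}(h)$ and $\mathrm{Atoms}(h')$ consist of gates lying strictly below $\Diamond_i h$ or $\Box_i h'$ in the DAG, the modal depth of every gate in $S_{j+1}$ is strictly less than the maximum modal depth in $S_j$, so $l \leq \md{C} \leq |C|$. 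Each $S_j$ fits in $O(|C|)$ bits, so the whole certificate is polynomial, and verification---computing $\mathrm{Atoms}(\cdot)$ by a DAG traversal, reconstructing each transition, and confirming $0 \in S_l$---runs in polynomial time. This puts unsatisfiability in $\NP$, giving $\circsat{\K}{\Box,\Diamond}{k}{\wedge,0,1} \in \CONP$.

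The main obstacle will be the backward direction of the recursive lemma: one must verify that routing each $\Diamond_i$-conjunct through a separate, independent $i$-successor does not spoil any other conjunct of $S$, which is exactly where the absence of $\vee$ and $\neg$ is decisive, since no disjunctive or negative dependency links the different successor choices. The other delicate point is handling circuits rather than formulas: the monotonicity reduction together with the fact that $\mathrm{Atoms}(g)$ is a \emph{set} (not a multiset) of at most $|C|$ gates is what keeps the certificate polynomial even though the formula expansion of $C$ can be exponential in $|C|$.
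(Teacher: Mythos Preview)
Your proposal is correct and follows essentially the same approach as the paper: both rest on the recursive characterization that a conjunction of atomic subcircuits is $\K$-satisfiable iff it contains no false propositional conjunct and, for each $\Diamond_i$-conjunct $\Diamond_i h$, the conjunction of $h$ with all $\Box_i$-conjunct bodies is satisfiable; both then observe that an unsatisfiability witness is a single chain of such transitions of length bounded by the depth of the circuit, with each state a polynomial-size set of gates (your $\mathrm{Atoms}(\cdot)$ is precisely the paper's $\wpred(\cdot)$). The only notable difference is cosmetic: you first invoke monotonicity to replace all variables by $1$, so that the propositional check collapses to ``$0\notin S$,'' whereas the paper keeps the variables and appeals to the polynomial-time decidability of propositional $\{\wedge,0,1\}$-satisfiability at each step.
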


\begin{proof}
The proof for the analogous statement for uni-modal formulas is based on 
the following fact: Let $\varphi$ be a formula of the form 
$\varphi=\bigwedge_{i\in I}\Box\varphi^{\Box}_i \wedge
      \bigwedge_{j\in J}\Diamond\varphi^{\Diamond}_j \wedge
      \psi$, where $I$ and $J$ are finite sets of 
indices, $\varphi^{\Box}_i$ and $\varphi^{\Diamond}_j$ are modal formulas 
for all $i\in I$, $j\in J$, and $\psi$ is a propositional formula.
Then $\varphi$ is satisfiable if and only if $\psi$ is satisfiable and for 
every $j\in J$,
$\bigwedge_{i\in I}\varphi^{\Box}_i \wedge\varphi^{\Diamond}_j$ is satisfiable  \cite{lad77}.

This generalizes to multi-modal formulas from $\multform{\Box,\Diamond}{k}{\wedge,0,1}$ in the following way: let $$\varphi=\bigwedge_{i\in I_1}\Box_1\varphi^{\Box_1}_i \wedge \dots \wedge \bigwedge_{i\in I_k}\Box_k\varphi^{\Box_k}_i\wedge
      \bigwedge_{j\in J_1}\Diamond_1\varphi^{\Diamond_1}_j \wedge\dots\wedge\bigwedge_{j\in J_k}\Diamond_k\varphi^{\Diamond_k}_j\wedge
      \psi,$$ 

for finite sets of indices $I_1,\dots,I_k,J_1,\dots,J_k$, formulas $\varphi^{\Box_l}_i,\varphi^{\Diamond_l}_j\in\multform{\Box,\Diamond}{k}{\wedge,0,1}$, and a propositional $\set{\wedge,0,1}$-formula $\psi$. Then $\varphi$ is satisfiable if and only if for every $1\leq l\leq k$ and
every $j\in J_l$ it holds that $\psi$ and
$\bigwedge_{i\in I_l}\varphi^{\Box_l}_i \wedge\varphi^{\Diamond_l}_j$ are satisfiable. Since every formula from $\multform{\Box,\Diamond}{k}{\wedge,0,1}$ can be written in the above form and since
satisfiability for the propositional part $\psi$ can be tested in polynomial time according to \cite{lew79}, this leads to a recursive
$\NP$-algorithm for the question if $\varphi$ is unsatisfiable.

We give an analogous proof for multi-modal circuits. Let $C$ be a circuit from
$\multcirc{\Box,\Diamond}{k}{\wedge,0,1}$ with output-gate $\mathit{out}$. If $\mathit{out}$ is a $\Box_i$-gate for some $1\leq i\leq k$, 
then $\varphi$ is satisfied in every world without a successor, if  $\mathit{out}$ is a 
$\Diamond_i$-gate for some $1\leq i \leq k$, then $C$ is satisfiable if and only if the circuit obtained from $C$ by using 
the predecessor of $\mathit{out}$ as output-gate is satisfiable, and finally if $\mathit{out}$ is an 
input-gate or a constant gate, then satisfiability can be tested
trivially. Therefore we assume without 
loss of generality $\mathit{out}$ to be an $\wedge$-gate. For a 
set of gates $G$ we define $\pred(G)$ to be the set of all direct predecessor gates of 
gates in $G$ and $\wpred(G)$ to be the set of all non $\wedge$-gates $g$ which 
are connected to $G$ by a path from $g$ to a gate $g'\in G$ where all gates on 
the path excluding $g$ (but including $g'$ if $g\neq g'$)
are $\wedge$-gates. 

For $1\leq i\leq k$ let $G_{\Box_i}$ be the set of all 
$\Box_i$-gates in $C$, $G_{\Diamond_i}$ the set of all $\Diamond_i$-gates in $C$ and 
$G$ the set of all propositional gates in $C$. Then, due to the equivalence above,  
$C$ is satisfiable if and only if 
$$\bigwedge_{g\in \wpred(\{\mathit{out}\}) \cap G} \! \!\!\!\! \!\varphi_g  \ \ \ 
\text{ and }\  \bigwedge_{g\in \pred(\wpred(\{\mathit{out}\}) \cap G_{\Box_i})} 
\!\!\! \!\!\!\!\!\varphi_g \wedge \bigwedge_{g \in \pred(\{g_{\Diamond_i}\})} \! \! 
\varphi_g$$ are satisfiable for every $1\leq i\leq k$ and every $g_{\Diamond_i}\in \wpred(\{\mathit{out}\}) \cap 
G_{\Diamond_i}$, where for a gate $g$, the formula $\varphi_g$ is defined as in the 
definition for modal circuits, i.e., $\varphi_g$ is the formula represented by the 
sub-circuit with output-gate $g$. Note that due to the definition of $\wpred$, the first of these formulas is a propositional formula.

More generally, a formula of the form 
$ \varphi= \bigwedge_{g\in H} \varphi_g$ for a set $H$ of gates from $C$ is 
satisfiable if and only if 
$$\psi:=\!\!\!\bigwedge_{g\in \wpred(H) \cap G} \! 
\!\!\!\! \!\varphi_g  \ \ \ \text{ and }\  
\varphi^{g_{\Diamond_i}}:=\!\!\!\bigwedge_{g\in 
\pred(\wpred(H) \cap G_{\Box_i})} \!\!\! \!\!\!\!\!\varphi_g \wedge 
\bigwedge_{g \in \pred(\{g_{\Diamond_i}\})} \! \! \varphi_g$$ are satisfiable for every $1\leq i\leq k$ and every 
$g_{\Diamond_i}\in \wpred(H) \cap G_{\Diamond_i}$.

Note that $\psi$ is a conjunction of constants and variables, therefore satisfiability of $\psi$ can be tested in polynomial time. It is obvious that constructing the sets $\pred(H)$ and 
$\wpred(H)$ needs only polynomial time as well.

For testing if a formula $\varphi$ represented by $H$ is unsatisfiable it suffices to check if $\psi$ is unsatisfiable, and, if this is not the case, to guess a 
$g_{\Diamond_i}\in \wpred(H) \cap G_{\Diamond_i}$ for some $1\leq i\leq k$ and to recursively test 
unsatisfiability of $\varphi^{g_{\Diamond_i}}$, which is represented by the set 
$\pred(\wpred(H) \cap G_{\Box_i})\cup\pred(\{g_{\Diamond_i}\})$. Since in every 
recursion the length of the longest path between an input-gate and a gate in $H$ 
decreases, the algorithm stops after at most $|C|$ recursions.

Hence, starting with $H=\{\mathit{out}\}$ we get an NP-algorithm for testing 
unsatisfiability of $C$.
\end{proof} 

In~\cite{hem05}, it is shown that $\formsat{\K}{\Box,\Diamond}{1}{\wedge,0}$ is \CONP-hard. Applying Lemma~\ref{lemma:or formula}, we obtain the following result.

\begin{lemma}\label{lemma:e0 conp hard all bases}
Let $B$ be a finite set of Boolean functions such that $\cE\supseteq \clone B\supseteq\cE_0$, and $k\ge 1$. Then $\formsat{\K}{\Box,\Diamond}{k}{B}$ is \CONP-hard.
\end{lemma}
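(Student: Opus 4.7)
The plan is to give a polynomial-time many-one reduction from the known \CONP-hard problem $\formsat{\K}{\Box,\Diamond}{1}{\wedge, 0}$ (from~\cite{hem05}) to $\formsat{\K}{\Box,\Diamond}{k}{B}$. Given an input formula $\varphi \in \multform{\Box,\Diamond}{1}{\wedge, 0}$, I will construct an equisatisfiable $\varphi' \in \multform{\Box,\Diamond}{k}{B}$ by replacing each $\wedge$-gate and each $0$-leaf of $\varphi$ with short $B$-formulas computing $\wedge$ and $0$.

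Since $\cE_0 \subseteq \clone B$, both $\wedge$ and $0$ lie in $\clone B$, so suitable $B$-formulas exist; the key issue is size. To avoid exponential blow-up under nested substitution inside a deep $\wedge$-tree, the $B$-formula $f_\wedge(x,y)$ representing $\wedge$ must contain each of $x$ and $y$ \emph{exactly once}. In the subcase $\clone B = \cE$ this is delivered directly by Lemma~\ref{lemma:or formula}. For the constant $0$, any bounded-size $B$-term representing $0$ suffices, and I can use fresh auxiliary variables at each occurrence, which is harmless because the term evaluates to $0$ under every assignment. Performing these substitutions gives a polynomial-size $\varphi'$ that is semantically equivalent to $\varphi$ and therefore equisatisfiable, completing the reduction in this subcase.

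The main obstacle is the remaining subcase $\clone B = \cE_0$: here $1 \notin \clone B$, so the hypothesis of Lemma~\ref{lemma:or formula} for $\wedge$ fails (for a base such as $B = \{\wedge_n, 0\}$ with $n \geq 3$, no single-occurrence $B$-formula for $\wedge$ in only the variables $x, y$ exists, because every $B$-term is a non-trivial conjunction and the constant $1$ is unavailable as a filler for inessential positions). I would circumvent this by adapting the global-truth-variable trick used in the proof of Theorem~\ref{theorem:s1 implementation}: introduce a fresh propositional variable $t$ and conjoin $\bigwedge_{i=0}^{\smd{\varphi}} \Box_1^i\, t$ to $\varphi'$, which forces $t$ to be true in every world reachable from the evaluation world within the modal depth of $\varphi$. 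With $t$ then serving as a global constant-$1$, any essentially binary $B$-term $f$ (which exists since $\wedge \in \clone B$) yields $\wedge(a,b) \equiv f(a,b,t,\dots,t)$, a representation in which $a$ and $b$ each occur only once. Substituting this for each $\wedge$-gate, and a bounded-size $B$-term for each $0$-leaf, yields a polynomial-size $\varphi' \in \multform{\Box,\Diamond}{k}{B}$ that is equisatisfiable with $\varphi$, as required.
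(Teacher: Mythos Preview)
Your proposal is correct and follows essentially the same approach as the paper: reduce from $\formsat{\K}{\Box,\Diamond}{1}{\wedge,0}$, obtain a single-occurrence representation of $\wedge$ via Lemma~\ref{lemma:or formula}, and handle the missing constant $1$ with the fresh variable $t$ forced true by $\bigwedge_{i=0}^{\smd\varphi}\Box_1^i t$ as in Theorem~\ref{theorem:s1 implementation}. The only organizational difference is that the paper avoids your case split by uniformly setting $B':=B\cup\{1\}$ (so that $\clone{B'}=\cE$ and Lemma~\ref{lemma:or formula} applies directly), and then eliminating the introduced $1$'s afterward.
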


\begin{proof}
 It obviously suffices to consider the case $k=1.$ We use a similar construction as in the proof for Theorem~\ref{theorem:s1 implementation}. Let $B':=B\cup\{1\}.$ From the structure of Post's Lattice, it follows that $\clone {B'}=\cE.$ Hence, by Lemma \ref{lemma:or formula}, we have a short $B'$-formula for AND, and can convert
  $\multform{\Box,\Diamond}{1}{\wedge,0}$-formulas into equivalent formulas from $\multform{\Box,\Diamond}{1}{B'}.$ We remove the occurrences of $1$ as in Theorem~\ref{theorem:s1 implementation}: Introduce a variable $t$ and force it to be $1$ with the logarithmic tree construction. The \CONP-hardness then follows from the above-mentioned result from \cite{hem05}.
\end{proof}

\subsection{Polynomial Time}\label{sec:P}

We now give our polynomial-time algorithms. We will see that in many of those cases where the restriction of the propositional operators to a certain set $B$ leads to a polynomial-time decision procedure in the propositional case, the same is true for the corresponding modal problems. One notable exception is the case of monotone formulas: For propositional monotone formulas, satisfiability can easily be tested, since such a formula is satisfiable if and only if it is satisfied by the constant $1$-assignment. For modal satisfiability, we have seen in Corollary~\ref{cor:s11 formulas pspace complete} that the corresponding problem is as hard as the standard satisfiability problem for modal logic. The other exception concerns formulas using only conjunction and constants: As a special case of monotone formulas, satisfiability testing is easy for propositional logic. However, Section~\ref{sect:conp completeness} showed that the problem is \CONP-complete for modal logic.

\begin{lemma}\label{lemma:if varphi is satisfiable then varphi satisfiable in reflexive singleton}
Let $B$ be a finite set of Boolean functions, $k\ge 1$, and $\varphi\in\multform{\Box,\Diamond}{k}{B}$. If the formula $\varphi^\mathtext{id},$ which is obtained by changing every modal operator in $\varphi$ to the identity, is satisfiable, then $\varphi$ is satisfiable in the reflexive singleton.
\end{lemma}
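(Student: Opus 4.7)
The plan is to observe that in the reflexive singleton, each modal operator behaves exactly like the identity, so the problem reduces immediately to finding a propositional satisfying assignment for $\varphi^{\text{id}}$.

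First I would fix a propositional assignment $\alpha\colon X\to\{0,1\}$ that satisfies $\varphi^{\text{id}}$ (viewed as a propositional $B$-formula over the variables of $\varphi$) and build a model $M=(W,R_1,\dots,R_k,\pi)$ on the reflexive singleton by setting $W=\{w\}$, $R_i=\{(w,w)\}$ for every $i\in\{1,\dots,k\}$, and $\pi(x)=\{w\}$ if $\alpha(x)=1$ and $\pi(x)=\emptyset$ otherwise. The key semantic observation is that since $w$ is the unique successor of itself under every $R_i$, we have for any modal formula $\psi$ that $M,w\models\Diamond_i\psi$ iff $M,w\models\psi$, and likewise $M,w\models\Box_i\psi$ iff $M,w\models\psi$.

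Then I would prove by structural induction on $\varphi$ that $M,w\models\varphi$ iff $\alpha\models\varphi^{\text{id}}$. The base case for variables is immediate from the definition of $\pi$, and the constant case is trivial. For a propositional gate $f(\psi_1,\dots,\psi_n)$ with $f\in B$, both sides reduce to $f$ applied to the inductive hypotheses on the subformulas, using the fact that $(f(\psi_1,\dots,\psi_n))^{\text{id}}=f(\psi_1^{\text{id}},\dots,\psi_n^{\text{id}})$. For a modal gate $\Diamond_i\psi$ or $\Box_i\psi$, the semantic observation above tells us that the left-hand side equals $M,w\models\psi$, which by induction equals $\alpha\models\psi^{\text{id}}$; and by definition of the transformation, $(\Diamond_i\psi)^{\text{id}}=\psi^{\text{id}}$ and $(\Box_i\psi)^{\text{id}}=\psi^{\text{id}}$, so the right-hand side is also $\alpha\models\psi^{\text{id}}$.

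Applying this equivalence to $\varphi$ itself and using $\alpha\models\varphi^{\text{id}}$ yields $M,w\models\varphi$, witnessing the $\mathcal{F}$-satisfiability of $\varphi$ in the reflexive singleton. I do not anticipate any genuine obstacle here; the content of the lemma is essentially the collapse of $\Diamond_i$ and $\Box_i$ on the reflexive singleton, and the induction is entirely routine once this is noted.
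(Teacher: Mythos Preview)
Your proposal is correct and follows exactly the paper's approach: take a propositional model of $\varphi^{\mathrm{id}}$, install it on the reflexive singleton, and use the fact that $\Diamond_i$ and $\Box_i$ collapse to the identity there. The only difference is that you spell out the structural induction explicitly, whereas the paper leaves it as the one-line observation that ``the operators are equivalent to the identity function.''
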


\begin{proof}
Let $I$ be a propositional assignment satisfying $\varphi^{\mathtext{id}}.$ Let $M$ be the model consisting of the reflexive singleton, where each variable is true if and only if it is true in $I$. Since in this model, every modal operator can only refer to the same single world in the model, the operators are equivalent to the identity function, implying the result.
\end{proof}

It is obvious that every propositional $B$-formula for $B\subseteq \cR_1$ or $B\subseteq \cD$ is satisfiable (\cite{lew79}): In the first case, the all-$1$-assignment always is a model. In the second case, exactly one of the two constant assignments is. Hence, Lemma~\ref{lemma:if varphi is satisfiable then varphi satisfiable in reflexive singleton} immediately gives the following complexity result:

\begin{corollary}\label{corollary:r1 and d satisfiable and in p}
Let $B$ be a finite set of Boolean functions such that $B\subseteq\cR_1$ or $B\subseteq\cD$, $\mathcal F$ a class of frames containing the reflexive singleton, and $k\ge 1$. Then every formula from $\multform{\Box,\Diamond}{k}{B}$ is $\mathcal F$-satisfiable. In particular, $\circsat{\mathcal F}{\Box,\Diamond}{k}{B}\in\PTIME$ for $\mathcal F\in\set{\K,\KD,\Kfour,\T,\Sfour,\Sfive}.$
\end{corollary}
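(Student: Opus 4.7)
The plan is to reduce both cases to Lemma~\ref{lemma:if varphi is satisfiable then varphi satisfiable in reflexive singleton}, so it suffices to show that whenever $B\subseteq\cR_1$ or $B\subseteq\cD$, the purely propositional formula $\varphi^{\mathtext{id}}$ (obtained from $\varphi\in\multform{\Box,\Diamond}{k}{B}$ by replacing every $\Box_i$ and $\Diamond_i$ with the identity) is satisfiable. Note that the identity function lies in both $\cR_1$ and $\cD$, so $\varphi^{\mathtext{id}}$ is built entirely from functions in $\cR_1$ (resp.\ $\cD$) applied to variables and $0$-ary functions from $B$.

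For the case $B\subseteq\cR_1$: since every $f\in\cR_1$ satisfies $f(1,\dots,1)=1$, and the identity is $1$-reproducing, a straightforward induction on the structure of $\varphi^{\mathtext{id}}$ shows that the constant all-$1$ assignment satisfies $\varphi^{\mathtext{id}}$. (Any $0$-ary function in $B$ must equal $1$ because $\cR_1$ contains no $0$-constant.) For the case $B\subseteq\cD$: self-duality forbids both constants $0$ and $1$, so every input gate of $\varphi^{\mathtext{id}}$ is labelled by a variable. An easy induction then shows that $\varphi^{\mathtext{id}}$ computes a self-dual function $g(x_1,\dots,x_n)$; such a $g$ cannot be constantly $0$, because otherwise $g(0,\dots,0)=0=\neg g(1,\dots,1)=\neg 0=1$, a contradiction. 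Hence some propositional assignment satisfies $\varphi^{\mathtext{id}}$.

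In either case, Lemma~\ref{lemma:if varphi is satisfiable then varphi satisfiable in reflexive singleton} then gives a model of $\varphi$ on the reflexive singleton, which is an $\mathcal F$-model by the assumption that $\mathcal F$ contains the reflexive singleton. Thus every formula in $\multform{\Box,\Diamond}{k}{B}$ is $\mathcal F$-satisfiable. For the circuit satisfiability claim, the same argument applies to the formula expansion $\varphi_C$ of any circuit $C\in\multcirc{\Box,\Diamond}{k}{B}$, because each gate type of $C$ lies in $\cR_1$ or $\cD$, respectively, and so does the identity. Hence every circuit is $\mathcal F$-satisfiable as well, and the trivial algorithm that always answers \emph{yes} decides $\circsat{\mathcal F}{\Box,\Diamond}{k}{B}$ in constant time for any $\mathcal F\in\set{\K,\KD,\Kfour,\T,\Sfour,\Sfive}$, all of which contain the reflexive singleton.

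There is no serious obstacle here; the only subtlety is making sure the two small structural observations (that $B\subseteq\cR_1$ forbids the $0$-constant but allows the $1$-constant as a $0$-ary gate, and that $B\subseteq\cD$ forbids both constants as $0$-ary gates) are handled so that the inductive arguments on $\varphi^{\mathtext{id}}$ actually go through in the circuit setting as well.
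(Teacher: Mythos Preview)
Your proposal is correct and follows essentially the same route as the paper: show that $\varphi^{\mathtext{id}}$ is propositionally satisfiable (all-$1$ assignment for $\cR_1$, a constant assignment for $\cD$) and then invoke Lemma~\ref{lemma:if varphi is satisfiable then varphi satisfiable in reflexive singleton}. The paper phrases the $\cD$ case slightly more concretely---exactly one of the two constant assignments satisfies a self-dual function---but your ``not constantly $0$'' argument is equivalent, and your explicit treatment of the $0$-ary gates and the circuit case just fills in details the paper leaves implicit.
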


While \K-satisfiability for variable-free formulas using constants, the Boolean connectives $\wedge$ and $\vee$, and both modal operators is complete for \PSPACE~\cite{hem01}, this problem (even with variables) is solvable in polynomial time if we look only at frames in which each world has a successor. 

\begin{theorem}\label{theorem:monotone formulas in p for classes below kd}
Let $B$ be a finite set of Boolean functions such that $B\subseteq \cM$, $\mathcal F$ a class of frames such that $\mathcal F\subseteq\KD$, and $k\ge 1$. Then $\circsat{\mathcal F}{\Box,\Diamond}{k}{B}\in\PTIME.$ In particular, $\circsat{\KD}{\Box,\Diamond}{k}{B},\circsat{\T}{\Box,\Diamond}{k}{B},\circsat{\Sfour}{\Box,\Diamond}{k}{B},\circsat{\Sfive}{\Box,\Diamond}{k}{B}\in\PTIME.$
\end{theorem}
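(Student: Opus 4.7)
The plan is to show that a circuit $C\in\multcirc{\Box,\Diamond}{k}{B}$ with $B\subseteq\cM$ is $\mathcal F$-satisfiable (for any nonempty $\mathcal F\subseteq\KD$) if and only if its \emph{propositional collapse} evaluates to $1$ under the all-true assignment, where the collapse reads every modal gate as the identity function and every variable as the constant $1$. Since the collapse is evaluated gate-by-gate in linear time, this yields the desired polynomial-time algorithm. Concretely, I would define $v(g)\in\set{0,1}$ for every gate $g$ of $C$ by $v(x)=v(1)=1$, $v(0)=0$, $v(f(g_1,\dots,g_n))=f(v(g_1),\dots,v(g_n))$ for $f\in B$, and $v(\Box_i g')=v(\Diamond_i g')=v(g')$, and accept iff $v(\mathit{out})=1$.

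The first ingredient is a routine induction establishing that the full logic is monotone in the valuation: if $M=(W,R_1,\dots,R_k,\pi)$ and $M'=(W,R_1,\dots,R_k,\pi')$ share the same underlying frame and $\pi(x)\subseteq\pi'(x)$ for every variable $x$, then $M,w\models\varphi$ implies $M',w\models\varphi$. This uses monotonicity of each $f\in B$ together with the trivial monotonicity of $\Box_i$ and $\Diamond_i$ in $\varphi$.

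The key lemma is that in the \emph{all-true model} $M^*=(F,\pi^*)$ on an arbitrary frame $F\in\mathcal F$ (where $\pi^*(x)=W$ for every variable $x$), every gate $g$ of $C$ evaluates to $v(g)$ at every world of $F$. I would prove this by induction on subcircuits. The cases for variables, constants, and $B$-gates are immediate, since by the inductive hypothesis the predecessor values are already constant across worlds. For $\Box_i g'$: if $v(g')=1$ then by induction $g'$ is true at every world, so $\Box_i g'$ holds trivially; if $v(g')=0$ then since $\mathcal F\subseteq\KD$ every world has some $i$-successor that falsifies $g'$, so $\Box_i g'$ fails everywhere. The $\Diamond_i$-case is symmetric, with the KD property supplying a true witness when $v(g')=1$.

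Combining the two ingredients yields both directions of the equivalence. If $v(\mathit{out})=1$, pick any $F\in\mathcal F$; then by the key lemma $M^*$ satisfies $C$ at every world of $F$, so $C$ is $\mathcal F$-satisfiable. Conversely, if $C$ is satisfied at some world $w_0$ in a model on $F\in\mathcal F$, the monotonicity step lets us pass to $M^*$ on the same frame, so $M^*,w_0\models C$ and therefore $v(\mathit{out})=1$. The main technical point is the careful use of the KD property in both halves of the $\Box_i$ case: without a guaranteed successor, $\Box_i g'$ could be vacuously true at sink-worlds even when $v(g')=0$, which would break the ``constant value at every world'' property that underpins the reduction to a purely propositional evaluation.
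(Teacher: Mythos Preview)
Your proof is correct and follows essentially the same approach as the paper: the paper tests satisfiability by evaluating $C$ in the reflexive singleton $M_1$ with every variable set to $1$, which is exactly your gate-by-gate value $v(g)$ (since $\Box_i$ and $\Diamond_i$ act as the identity there), and then argues via monotonicity and the $\KD$ property that $C$ is $\mathcal F$-satisfiable iff $M_1\models C$. Your decomposition into a separate propositional-monotonicity step and a ``constant value in the all-true serial model'' lemma is slightly more explicit than the paper's two-line induction sketch, but the underlying argument is identical.
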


\begin{proof}
The claim is obvious if $\mathcal F$ is empty, hence assume that this is not the case. Let $M$ be an $\mathcal F$-model, let $w$ be a world from $M,$ and let $M_1$ be the multi-modal reflexive singleton with $k$ successor relations in which every variable is set to $1.$ It is easy to show by induction on the construction of any $C\in\multcirc{M}{k}{B}$ that if $M,w\models C,$ then $M_1,w\models C$ holds as well. On the other hand, if $M_1,w\models C,$ then $M',w\models C,$ where $M'$ is obtained from the model $M$ by setting every variable to true in every world. Hence, $C$ is $\mathcal F$-satisfiable if and only if $C$ is satisfied in $M_1.$ The latter condition can obviously be verified in polynomial time.
\end{proof}

In the case where all of our propositional operators are unary, we can use simple transformations to decide satisfiability, as the following theorem shows.

\begin{theorem}\label{theorem:negation in polynomial time}
Let $B$ be a finite set of Boolean functions such that $B\subseteq\cN$, $\mathcal F$ a class of frames such that $\mathcal F\in\set{\K,\KD,\Sfour,\Sfive,\Kfour,\T}$, and $k\ge 1$. Then $\circsat{\mathcal F}{\Box,\Diamond}{k}{B}\in\PTIME$.
\end{theorem}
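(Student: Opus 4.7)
The plan is to show that every circuit in $\multcirc{\Box,\Diamond}{k}{B}$ with $B\subseteq\cN$ is equivalent to a very simple ``modal literal'' expression, that this equivalent expression can be extracted in polynomial time, and that its $\mathcal F$-satisfiability can then be decided by inspection for each class $\mathcal F\in\set{\K,\KD,\Sfour,\Sfive,\Kfour,\T}$.

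The key observation is that every Boolean function in $\cN$ is either a constant, a projection, or the negation of a projection. So each propositional gate in the circuit outputs either a fixed constant, the value of one specific input, or the negation of one specific input. Starting at the output gate, I would therefore trace a single path through the circuit: at a propositional gate whose gate-type is a projection (or negated projection) on input $i$, move to its $i$th predecessor (recording a negation when appropriate); at a modal gate move to its unique predecessor and record the modal operator; at a gate whose gate-type is a constant function, halt and output that constant; at a variable input gate, halt and output the variable. The resulting path has length at most $\card C$ and gives a representation $\sigma_1\sigma_2\cdots\sigma_n b$, where each $\sigma_i$ is either $\neg$ or a modal operator $\Box_j$ or $\Diamond_j$ and $b$ is a variable or constant. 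Using the equivalences $\neg\Box_j\varphi\equiv\Diamond_j\neg\varphi$ and $\neg\Diamond_j\varphi\equiv\Box_j\neg\varphi$, all negations can be pushed past the modal operators to the base, yielding a normal form $\mu_1\mu_2\cdots\mu_m L$ where each $\mu_i$ is a modal operator and $L$ is a literal or a constant. Both steps are clearly polynomial time.

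To decide $\mathcal F$-satisfiability of such a chain, a short case analysis suffices. For $\mathcal F\in\set{\KD,\T,\Sfour,\Sfive}$, seriality or reflexivity of $\mathcal F_k$-frames forces every $\Box_j$-subformula to guarantee at least one $j$-successor in which the argument holds, so $\Box$ propagates satisfiability just like $\Diamond$: the chain is $\mathcal F$-satisfiable iff $L\ne 0$. For $\mathcal F\in\set{\K,\Kfour}$, any subformula of the form $\Box_j\varphi$ is satisfied at a world with no $j$-successors, so as soon as some $\mu_i$ is a $\Box$-operator the entire chain is satisfiable; otherwise the chain consists only of $\Diamond$'s and it is satisfiable iff $L\ne 0$. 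Each criterion is verified in both directions by explicit constructions: for satisfiable chains, a linear sequence of worlds realizing the operators $\mu_1,\dots,\mu_m$ (augmented with self-loops and dummy successors to satisfy reflexivity, transitivity, symmetry, or seriality across all $k$ modalities as required) serves as a model; for unsatisfiable chains, induction along the chain shows that the frame constraints force $L=0$ to hold at some world, which is impossible.

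The main obstacle is the last verification step: building the dummy successors in the multi-modal setting in a way that respects transitivity in $\Kfour$ and $\Sfour$, symmetry in $\Sfive$, and seriality of every modality in $\KD_k$ simultaneously — one must ensure that the padding needed to satisfy the frame conditions does not accidentally introduce constraints that clash with the chain itself. Once correctness of the satisfiability criterion is established, the whole algorithm (simplification to normal form, then applying the criterion) clearly runs in polynomial time, giving the claimed $\PTIME$ upper bound.
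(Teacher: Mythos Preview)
Your proposal is correct and follows essentially the same strategy as the paper: reduce the circuit to a chain $O_1\cdots O_n z$ of modal operators over a literal or constant by pushing negations inward, then decide satisfiability by a short case distinction on whether the frame class admits worlds without successors. The only cosmetic difference is that the paper first invokes Lemma~\ref{lemma:classes work multi-modal} to assume $B=\{\neg,0,1\}$, so that the circuit is literally a path rather than having to trace the single relevant predecessor at each $\cN$-gate as you do; your case analysis (including the $\Kfour$ case and the explicit model constructions) is in fact spelled out more carefully than the paper's, which simply declares the criterion ``obvious.''
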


\begin{proof}
Since the clone $\cN$ is generated by negation and the constants, we can, due to Lemma~\ref{lemma:classes work multi-modal}, assume that $B$ only contains these functions.

Now, let $B$ be a circuit from $\multcirc{M}{k}{B}.$ Since every function in $B$ is unary or constant, $C$ is a linear graph, and we can therefore regard $C$ as a formula. Using the equivalence $\Diamond_i \equiv \neg\Box_i\neg,$ we can move negations inward, until we have a formula of the form $O_1\dots O_nz,$ where the $O_i$ are modal operators, and $z$ is either a literal or a constant. It is obvious that this formula is satisfiable if and only if $z$ is not the constant $0,$ or if $\mathcal F=\K,$ and there is at least one $\Box$-operator present. The transformation obviously can be performed in polynomial time.
\end{proof}

For monotone functions and most classes of frames that we are interested in, we already showed that the satisfiability problem can be solved in polynomial time. For the most general class of frames K, this problem is \PSPACE-complete (Corollary~\ref{cor:s11 formulas pspace complete}), but a further restriction of the propositional base gives polynomial-time results here as well.

\begin{theorem}\label{theorem:subsets on v are in p}
Let $B$ be a finite set of Boolean functions such that $B\subseteq\cV$, $\mathcal F$ a class of frames such that $\mathcal F\in\set{\K,\KD,\Sfour,\Sfive,\Kfour,\T}$, and $k\ge 1$. Then $\circsat{\mathcal F}{\Box,\Diamond}{k}{B}\in\PTIME$.
\end{theorem}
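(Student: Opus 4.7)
The six frame classes listed split naturally into two groups. For $\mathcal F\in\{\KD,\T,\Sfour,\Sfive\}$, we have $\mathcal F\subseteq\KD$ and $\cV\subseteq\cM$, so the claim is already covered by Theorem~\ref{theorem:monotone formulas in p for classes below kd}. Hence the only cases requiring work are $\mathcal F=\K$ and $\mathcal F=\Kfour$: the reflexive-singleton evaluation argument of the monotone theorem breaks for these classes because they admit worlds without $i$-successors, which lets $\Box_i$ behave quite differently.

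For $\mathcal F\in\{\K,\Kfour\}$, by Lemma~\ref{lemma:classes work multi-modal} it suffices to treat $B=\{\vee,0,1\}$. I would compute, in a single bottom-up pass over the gates of the input circuit $C$, a Boolean value $\mathtext{sat}(g)$ with the intended meaning ``$\varphi_g$ is $\mathcal F$-satisfiable,'' according to the recurrence
\begin{itemize}
\item $\mathtext{sat}(g)=1$ if $g$ is a variable, a $1$-constant, or any $\Box_i$-gate;
\item $\mathtext{sat}(g)=0$ if $g$ is a $0$-constant;
\item $\mathtext{sat}(g)=\bigvee_{h}\mathtext{sat}(h)$ over predecessors $h$ of $g$ if $g$ is a $\vee$-gate;
\item $\mathtext{sat}(g)=\mathtext{sat}(h)$ if $g$ is a $\Diamond_i$-gate with unique predecessor $h$.
\end{itemize}
The algorithm returns $\mathtext{sat}(\mathit{out})$ and clearly runs in linear time.

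Correctness is proved by induction on $\varphi_g$, and the only substantive cases are the modal ones. For $\Box_i$, the formula $\Box_i\psi$ is satisfied vacuously at any world with no $i$-successors, and the single-world irreflexive multi-modal frame is both a $\K$- and a $\Kfour$-frame (transitivity is trivial when all successor relations are empty), so $\mathtext{sat}(\Box_i g)=1$ is indeed always correct. For $\Diamond_i$, if $\psi$ is satisfied at $w'$ in some $\mathcal F$-model $M$, I attach a fresh root $w_0$ to $M$ with a single $i$-edge to $w'$; for the $\Kfour$ case I additionally add $i$-edges from $w_0$ to every $i$-descendant of $w'$ in $M$ to restore transitivity, and this modification leaves the satisfaction of $\psi$ at $w'$ untouched. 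The converse direction is immediate in both cases.

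The only delicate point to keep in mind is that the circuit may share sub-circuits, so one might a priori worry that satisfiability of the whole requires coordinating witnesses across different usages of the same gate. However, because $\cV$ contains no conjunction, two sub-formulas in the formula expansion are only ever combined via disjunction or via a chain of modalities; for a disjunction it suffices to satisfy a single disjunct, and the two modal cases were just analyzed. This is precisely why a purely local, per-gate recurrence suffices and why the algorithm is correct --- the main obstacle, such as it is, is verifying this absence of hidden coordination together with the transitive-closure argument for $\Kfour$.
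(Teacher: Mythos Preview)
Your argument is correct and follows essentially the same strategy as the paper: reduce to $B=\{\vee,0,1\}$ via Lemma~\ref{lemma:classes work multi-modal} and decide satisfiability per gate by the rules ``$\vee$ is satisfiable iff a child is,'' ``$\Diamond_i$ inherits from its child,'' and ``$\Box_i$ is always satisfiable in $\K/\Kfour$.'' The only differences are presentational: you dispatch the serial classes $\KD,\T,\Sfour,\Sfive$ to Theorem~\ref{theorem:monotone formulas in p for classes below kd} rather than handling them by the uniform ``$\Box_i$ inherits from its child'' rule the paper uses, and you phrase the algorithm as an explicit bottom-up pass (which makes the polynomial bound for shared sub-circuits immediate) and spell out the fresh-root-plus-transitive-closure construction for $\Kfour$ that the paper leaves implicit.
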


\begin{proof}
Since the clone $\cV$ is generated by binary OR and the constants, we can, due to Lemma~\ref{lemma:classes work multi-modal}, assume that $B$ only contains these functions. We first consider the case $\mathcal F\in\set{\K,\Kfour}$.

Let $B$ be a circuit from $\multcirc{\Box,\Diamond}{k}{B}.$ If the output gate $g$ of $C$ is an $\vee$-gate, with predecessors $h_1$ and $h_2$ in $C,$ then $C$ is $\mathcal F$-satisfiable if and only if at least one of $C_{h_1}$ and $C_{h_2}$ is. If $g$ is a $\Diamond_i$-gate with predecessor $h,$ then $C$ is $\mathcal F$-satisfiable if and only if $C_h$ is. Finally, if $g$ is a $\Box_i$-gate, then $C$ is $\mathtext{K}$-satisfiable.

This gives a recursive polynomial-time procedure to decide the satisfiability problem. For the classes other than \K\ and \Kfour, we can use the same procedure, with one exception: here, if $g$ is a $\Box_i$-gate, then $C$ is satisfiable if and only if $C_h$ is satisfiable, where $h$ is the predecessor of $g$ in $C.$
\end{proof}

We now show that for the logics \K\ and \KD, the modal satisfiability problems for formulas having only $\oplus$ and constants in the propositional base are easy. For the propositional case, this holds because unsatisfiable formulas using only these connectives are of a very easy form: Every variable and the constant $1$ appear an even number of times (see, e.g., \cite{lew79}). In the modal case, unsatisfiable formulas over these connectives are of a similarly regular form, as we will soon see. The result also holds for modal circuits.

\begin{theorem}\label{theorem:xor in p for k and kd}
  Let $B$ be a finite set of Boolean functions such that  $B\subseteq\mathtext{L}$, $\mathcal F\in\set{\mathrm{K},\mathrm{KD}}$ a class of frames, and $k\ge 1$. Then $\circsat{\mathcal F}{\Box,\Diamond}{k}{B}\in\PTIME$.
\end{theorem}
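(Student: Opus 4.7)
The strategy is to process the circuit bottom-up and compute, for each gate $g$, a canonical normal form $F(g)$ that represents its $\mathcal F$-equivalence class; $C$ is then declared satisfiable iff $F(\mathit{out})$ is not the constant $0$. Each form $F(g)$ is a GF(2)-linear combination $c_0 \oplus \bigoplus_i a_i$ of \emph{atoms}, where an atom is either a propositional variable or a modal atom $\Diamond_j(F)$ whose argument $F$ is itself in canonical form. $\Box$-gates are eliminated via the classical duality $\Box_j F \equiv \Diamond_j(F \oplus 1) \oplus 1$, which holds in both \K\ and \KD\ and is essential for detecting equivalences such as $\Box_j 0 \equiv \Diamond_j 1 \oplus 1$. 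In addition we apply the semantic simplifications $\Diamond_j(0) \to 0$ in both \K\ and \KD\ (since $\Diamond_j 0$ is a contradiction), and $\Diamond_j(1) \to 1$ in \KD\ only (since every \KD-world has a successor). Two modal atoms are identified iff their argument-forms are equal; with hashing, each gate is processed in polynomial time, so the whole canonical form is produced in polynomial time.

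The easy direction of correctness---$F(\mathit{out}) = 0$ implies $C$ is unsatisfiable---is immediate because every rewriting rule we apply is $\mathcal F$-equivalence preserving. The converse is the crux of the argument and requires an inductive model construction. By induction on the modal depth of $F(\mathit{out})$, whenever $F(\mathit{out}) \neq 0$ I would build a satisfying model as follows. At the root $w_0$ pick values for each atom in $F(\mathit{out})$ so that the XOR sum evaluates to $1$; propositional variables are set directly, and for each modality $j$ the required values of $\Diamond_j$-atoms are realized by appropriate successors. A successor witnessing $\Diamond_j(F_l) = 1$ exists by the inductive hypothesis because $F_l \neq 0$; to enforce $\Diamond_j(F_l) = 0$ one takes no $j$-successors in \K, or in \KD\ restricts all $j$-successors to sub-models satisfying $F_l \oplus 1$, which exist because $F_l \neq 1$ by the \KD-simplification.

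The main technical obstacle is a subtle \emph{independence} lemma: the atoms surviving our simplifications are linearly independent as Boolean-valued functions on $\mathcal F$-models, so that no nontrivial GF(2)-combination of distinct atoms (plus a constant) is identically $0$ in $\mathcal F$. What makes this delicate is that distinct sub-forms $F_1,F_2,F_3$ may be GF(2)-linearly dependent---for instance $F_1 \oplus F_2 \oplus F_3 \equiv 0$---yet the induced atoms $\Diamond(F_1), \Diamond(F_2), \Diamond(F_3)$ remain independent because $\Diamond$ acts semantically as an OR over successors, not an XOR, and multi-successor worlds produce bilinear interactions that separate them. Concretely, at a world with two successors of ``complementary'' type---one realizing $F_1=1, F_2=0$ (so $F_3=1$) and one realizing $F_1=0, F_2=1$ (so $F_3=1$), both available by the inductive hypothesis---one computes $\Diamond(F_1) \oplus \Diamond(F_2) \oplus \Diamond(F_3) = 1 \oplus 1 \oplus 1 = 1$. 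Proving this in general, and verifying that the built-in simplifications $\Diamond_j(0) \to 0$ and (in \KD) $\Diamond_j(1) \to 1$ strip away exactly those atoms that would be semantically constant, is the bulk of the technical work.
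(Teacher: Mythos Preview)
Your proposal is essentially the paper's approach: reduce to $\Diamond$-only circuits over $\{\oplus,0,1\}$, compute a canonical form bottom-up with exactly your simplifications ($\Diamond_j 0 \to 0$ always, $\Diamond_j 1 \to 1$ in \KD), and accept iff the result is not the constant~$0$. Your ``independence lemma'' is the paper's key Lemma (stated there in contrapositive form: if $\Diamond_{i_1}D_1\oplus\dots\oplus\Diamond_{i_n}D_n\oplus\varphi_1\oplus\varphi_2$ is $\mathcal F$-unsatisfiable with all $D_i$ satisfiable, and not tautologies in \KD, then some $D_i\equiv_{\mathcal F}D_j$).

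One remark on how the paper actually proves that lemma, since your sketch leaves it open. Your inductive model construction, as written, does not quite work: requiring $\Diamond_j(F_1)=1$ and $\Diamond_j(F_2)=0$ simultaneously for the same $j$ needs a successor satisfying $F_1\wedge\neg F_2$, which your inductive hypothesis (only ``$F_l\neq 0$ implies $F_l$ satisfiable'') does not supply. Your three-term ``complementary successor'' example is suggestive but does not obviously generalize. The paper's device is to pick a \emph{minimal} $D_m$ among the pairwise inequivalent $D_i$ with respect to the $\mathcal F$-implication partial order. Minimality guarantees, for each $l\neq m$, a world $w_l$ with $D_l\wedge\neg D_m$; a fresh root whose successors are exactly the $w_l$ then makes $\Diamond D_m$ false and every other $\Diamond D_l$ true. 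Comparing this with the model where all $\Diamond D_l$ hold yields a parity flip, contradicting $\mathcal F$-constancy of the XOR. This minimal-element trick is exactly the missing ingredient that turns your independence lemma from a plausible claim into a proof.
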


To prove this theorem, we present a polynomial-time algorithm deciding the problem. Because of Lemma~\ref{lemma:classes work multi-modal}, we can restrict ourselves to circuits from $\multcirc{\Box,\Diamond}{k}{\oplus,0,1}.$ First note that using $\Diamond_i,\oplus$, and the constant $1$, we can express $\Box_i$, and therefore it is sufficient to consider circuits in which only $\Diamond_i$-operators occur, i.e., we only need to deal with circuits from $\multcirc{\Diamond}{k}{\oplus,0,1}$.

The algorithm $\algname$ presented below decides this problem in polynomial time by converting circuits into a normal form. For a circuit $C$, let ${\algname}(C)$ denote the output of the algorithm $\algname$ when given $C$ as input. A decision algorithm derived from $\algname$ accepts a circuit $C$ if and only if ${\algname}(C)$ is not the constant $0$-circuit.

The intuitive approach of the algorithm is to delete redundant data, i.e., extra $0$s and sub-circuits corresponding to formulas of the form $\varphi\oplus\varphi$, which obviously are equivalent to $0,$ and to arrange the gates of the circuit in a standard order, to get a unique representation for the input circuit. In the propositional formula case, the approach is quite simple: For a formula in which only the operator $\oplus$, variables and constants appear, we repeatedly delete every variable or constant that appears twice, and remove $0$s. If this produces the empty formula or the formula containing only the constant $0$,  then the formula is unsatisfiable, otherwise it is satisfiable. Surprisingly, the generalization to modal logic and circuits instead of formulas performs only operations of a similarly simple type---however, proving the correctness requires more work than in the propositional case.

In the statement of the algorithm, the term $\Diamond$-gate refers to any $\Diamond_i$-gate for some $i\in\set{1,\dots,k}.$

\medskip

\begin{algorithmic}
  \STATE{\algname({\bf Input}: $C\in\multcirc{\Diamond}{k}{\set{\oplus,0,1}}$)}
  \WHILE{there are unmarked $\Diamond$-gates or the output gate is not marked}
    \STATE{Let $g$ be an unmarked $\Diamond$-gate such that all $\Diamond$-gates with a path to $g$ are marked if such a gate exists, let $g$ be the output gate otherwise.}
    \algspace
    \STATE{Let $G$ be the set of propositional gates before $g$ which are connected to $g$ with a path consisting only of propositional gates ($G$ includes $g$ if $g$ is propositional).}
    \algspace
    \STATE{Let $D_1,\dots,D_m$ be the subcircuits whose output gates are the $\Diamond$-gates directly before $G$.}
    \algspace
    \STATE{Consider $G$ as a propositional circuit with output gate $g$ and input gates $d_1,\dots,d_m$ replacing the subcircuits $D_1,\dots,D_m$.}
    \algspace
    \STATE{Rewrite $G$ as formula $\varphi:=d_{i_1}\oplus d_{i_2}\oplus\dots\oplus d_{i_j}\oplus\varphi'$, where each $d_i$ occurs at most once and where $\varphi'$ does not contain $d_1,\dots,d_m$} \\
    \algspace
    \WHILE{changes in $\varphi$ still occur}
      \algspace
      \STATE{Order $\varphi$ lexicographically.}
      \algspace
      \STATE{If $D_i$ and $D_j$ are identical, replace $d_i\oplus d_j$ with $0$.} \\
      \algspace
      \STATE{If $\mathcal{F}=\rm KD$, then replace $\Diamond_i 1$ with $1$ for any $i.$}
      \algspace
      \STATE{For any $i,$ replace $\Diamond_i 0$ with $0$.}\\
      \algspace
      \STATE{Remove $0$s unless the formula becomes empty.}
      \algspace
      \STATE{For propositional variable $p$, replace $p \oplus p$ with 0.}
      \algspace
      \STATE{Replace $1 \oplus 1$ with 0.}
      \algspace
    \ENDWHILE
    \algspace
    \STATE{Reintegrate $\varphi$ into the circuit, using connections from the $D_i$ subcircuits instead of the $d_i$ variables.}
    \algspace
    \STATE{mark $g$}
  \ENDWHILE
  \algspace
  \STATE{Delete gates not connected to the output gate.} \\
\end{algorithmic}

\medskip

We now show that the algorithm works correctly---note that the following theorem implies the correctness of the decision procedure outlined above, since $\algname$ returns $0$ when given a circuit consisting just of a $0$-gate as input.

\begin{theorem}\label{theorem:xor algorithm equivalence}
Let $C_1$ and $C_2$ be circuits from $\multcirc{\Diamond}{k}{\set{\oplus,0,1}}$, and let $\mathcal{F}\in\{\rm K,\rm KD\}.$ Then ${\algname}(C_1)={\algname}(C_2)$ if and only if $C_1\equiv_{\cal F}C_2.$
\end{theorem}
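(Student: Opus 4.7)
The equivalence naturally splits into two directions, and I would address them separately.

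For the soundness direction---that $\algname(C_1)=\algname(C_2)$ implies $C_1\equiv_{\cal F}C_2$---it suffices to verify that each rewriting rule used by $\algname$ preserves $\mathcal F$-equivalence, so that $\algname(C)\equiv_{\cal F}C$ for every input $C$. The substantive rules are: replacing $d_i\oplus d_j$ by $0$ when the associated subcircuits $D_i$ and $D_j$ are identical (a modal instance of $\varphi\oplus\varphi\equiv 0$); $\Diamond_i 1\to 1$ in the $\logicname{KD}$-case (seriality of $\logicname{KD}$-frames); $\Diamond_i 0\to 0$ (valid in every frame class, since no world witnesses $0$); removal of $0$-summands; $p\oplus p\to 0$ and $1\oplus 1\to 0$; and lexicographic reordering (associativity and commutativity of $\oplus$). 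Composed, these imply $C_1\equiv_{\cal F}\algname(C_1)=\algname(C_2)\equiv_{\cal F}C_2$.

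For the completeness direction---that $C_1\equiv_{\cal F}C_2$ implies $\algname(C_1)=\algname(C_2)$---I would argue the contrapositive by induction on modal depth, showing that syntactically distinct normal forms are semantically distinguishable in some $\mathcal F$-model. At depth $0$, the normal form is a lexicographically ordered XOR of distinct variables and at most one constant, and equality of such $\mathbb{F}_2$-linear forms as Boolean functions coincides with syntactic equality. At modal depth $d>0$, every normal-form circuit has outermost shape $\bigoplus_{j=1}^{m}\Diamond_{i_j}X_j\oplus\bigoplus_{k}p_k\oplus c$, where the $X_j$ are normal-form subcircuits of depth $<d$, the pairs $(i_j,X_j)$ are pairwise distinct (ensured by lexicographic sorting together with the $d_i\oplus d_j\to 0$ deduplication rule), the $p_k$ are distinct variables, and $c\in\{0,1\}$. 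If two such normal forms differ only in some inner subcircuit $X_j^{(1)}\neq X_j^{(2)}$, the induction hypothesis produces a distinguishing model for those subcircuits, which lifts through a single $\Diamond_{i_j}$-edge into a distinguishing root world.

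The main obstacle is the remaining case, in which two top-level $\mathbb{F}_2$-linear forms themselves differ. I would resolve this via a \emph{realizability lemma} asserting that enough Boolean patterns over the appearing $\Diamond_{i_j}X_j$'s and $p_k$'s are jointly realizable at some root world of an $\mathcal F$-model to witness the difference of the linear forms. The construction takes disjoint witness models for the $X_j$'s (available because the $\Diamond_i 0\to 0$ rule forces every surviving $X_j$ to be satisfiable by induction) and glues them at a common fresh root via the appropriately labeled $i_j$-edges; for $\logicname{KD}$ an additional reflexive sink restores seriality. The delicate point is cross-talk: when several surviving subcircuits share a modality $i$, an edge added to witness one $\Diamond_i X_j$ might accidentally satisfy another $\Diamond_i X_{j'}$ and so perturb the parity of the top-level XOR. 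Sharpening the induction hypothesis to guarantee that pairwise inequivalent normal-form subcircuits can be simultaneously separated at a single world---together with a careful disjoint-union choice of witnesses---is where the bulk of the technical work will lie.
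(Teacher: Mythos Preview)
Your soundness argument matches the paper's. For completeness, your inductive skeleton is reasonable and in spirit close to the paper's minimal-counterexample argument, but you have correctly located---and left unresolved---the crux. The paper's Lemma~\ref{lemma:xor proof equivalent subcircuits} supplies exactly the idea needed for the cross-talk problem, and it is not the ``simultaneous separation'' strengthening you sketch.

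Rather than trying to realize many Boolean patterns over the $\Diamond_{i_j}X_j$, the paper observes that it suffices to realize \emph{two} patterns differing in a single bit: the all-ones pattern (each $X_j$ is satisfiable, so glue witnesses under a fresh root) and a pattern in which exactly one $\Diamond_{i_m}X_m$ is switched off. The device for the latter is to choose $X_m$ \emph{minimal in the $\mathcal F$-implication partial order} on the pairwise inequivalent $X_j$. Minimality guarantees that for every $l\neq m$ there is a world $w_l$ with $w_l\models X_l\wedge\neg X_m$; taking all the $w_l$ as the (appropriately labeled) successors of a fresh root yields a world at which every $\Diamond_{i_l}X_l$ with $l\neq m$ holds while $\Diamond_{i_m}X_m$ fails, since every successor refutes $X_m$. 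This dissolves the cross-talk worry without any sharpened induction hypothesis: one never needs to control which \emph{other} $X_{l'}$ a given witness $w_l$ happens to satisfy. Your proposal, by contrast, aims at a stronger joint-separability statement whose truth and proof are not evident.

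As a secondary point, the paper packages the argument slightly differently: instead of directly distinguishing two normal forms, it passes to $C_1\oplus C_2$ and shows this XOR cannot be $\mathcal F$-constant. This is convenient because the top level is again a single XOR of diamond-subcircuits plus a propositional part, so the lemma above applies uniformly.
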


First we show that the algorithm can be implemented to work in polynomial time, and observe a useful property.

\begin{lemma}\label{appendix:xor algorithm is polynomial time}
The algorithm $\algname$ runs in polynomial time and satisfies ${\algname}({\algname}(C))={\algname}(C)$ for every circuit $C\in\multcirc {\Diamond}k{0,1,\oplus}$ for all $k\ge 1$.
\end{lemma}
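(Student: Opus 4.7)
The plan is to separate the claim into two tasks: first, show that $\algname$ halts in polynomial time on every input $C$, and second, show that the output $\algname(C)$ is a fixed point of $\algname$.

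For the running-time analysis, I would argue as follows. The outer \textbf{while} loop marks one previously unmarked $\Diamond$-gate (or, in the last step, the output gate) per iteration and never unmarks anything, so it runs at most $\card{C}+1$ times. Within one iteration, identifying the set $G$ of propositional gates feeding $g$ through purely propositional paths, and the subcircuits $D_1,\dots,D_m$ at their boundary, amounts to a pair of DAG traversals in time polynomial in $\card{C}$. The step that deserves real attention is ``Rewrite $G$ as $\varphi$'', since a naive formula expansion of an $\oplus$-DAG can be exponential. However, because $\oplus$ is associative, commutative, and satisfies $x\oplus x=0$, the XOR normal form represented by $G$ is determined entirely by the parity, for each input atom $a$ of $G$ (i.e.\ some $d_i$, a propositional variable, or the constant $1$), of the number of directed paths in $G$ from $a$ to $g$. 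These parities can be computed bottom-up on the DAG $G$ by $\mathrm{GF}(2)$ arithmetic in polynomial time, so $\varphi$ is obtained directly in the compact shape demanded by the inner loop. Once $\varphi$ is built this way, each rewrite rule other than the lexicographic reordering strictly decreases the length of $\varphi$, so the inner loop terminates after polynomially many iterations, each itself polynomial-time.

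For idempotence, the plan is to inspect $C':=\algname(C)$ and check that a second run produces no change. The key invariants of $C'$ are: (i) every gate is reachable from the output gate; (ii) for each marked $\Diamond$-gate $g$ the associated propositional block has already been stored as an ordered XOR of pairwise distinct atoms (references $d_i$ to distinct subcircuits $D_i$, propositional variables appearing at most once, and at most one copy of the constant $1$), with no redundant $0$s; and (iii) no subformula $\Diamond_i 0$ (respectively, in the $\KD$ case, $\Diamond_i 1$) occurs anywhere. Running $\algname$ on $C'$ then traverses its $\Diamond$-gates in the same topological order, recomputes each $G$ and $\varphi$, and by the path-parity characterisation above obtains exactly the $\varphi$ already stored; the inner loop has nothing to rewrite, the lexicographic order is preserved, and reintegration plus the final pruning step are no-ops, giving $\algname(C')=C'$.

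The main obstacle I expect is making property (ii) rigorous, in particular tying the informal rule ``if $D_i$ and $D_j$ are identical, replace $d_i\oplus d_j$ with $0$'' to a concrete graph-level criterion satisfied by the output of $\algname$, so that the path-parity rewrite of $G$ yields a representation determined only by the earlier reductions and not by extraneous choices. Once this bookkeeping is pinned down, both the polynomial bound and the fixed-point property follow by induction on the topological order in which the outer loop processes the $\Diamond$-gates of $C$.
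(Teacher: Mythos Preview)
Your proposal is correct and follows essentially the same approach as the paper: bound the outer loop by the number of gates, bound the inner loop by noting each rewrite shortens $\varphi$, argue that the ``Rewrite $G$'' step is polynomial because an $\oplus$-circuit computes a linear function whose XOR normal form can be read off directly, and conclude idempotence from the observation that a block already in sorted XOR normal form is left unchanged. Your path-parity / $\mathrm{GF}(2)$ description of the rewrite step is just a more explicit version of what the paper calls ``determine, by simulation, which of the variables is relevant''; likewise, your invariants (i)--(iii) spell out in more detail what the paper compresses into a single sentence about $G$ already being a lexicographically ordered formula.
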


\begin{proof}
We show that the algorithm works in polynomial time. The outer WHILE loop is run at most once for every gate in the circuit. The inner WHILE loop shortens the formula by at least one character in each iteration except one (where only sorting is performed). Each step in the algorithm can clearly be performed in polynomial time, the only non-obvious case is the ``Rewrite $G$ as formula'' step. This can be performed in polynomial time because propositional circuits representing linear functions can easily be converted into formulas:
Determine, by simulation, which of the variables is relevant for the function calculated by the circuit. The resulting formula consists of an XOR of all these variables and output value of the circuit when given zeros as input. Note that not all of the variables $d_1,\dots,d_m$ necessarily appear in the formula. The formula constructed in this way is at most as large as the original circuit.

Note that if $G$ already is a formula connected only to the output-gate $g$ and the $d_i$-gates, and $\varphi$ is lexicographically ordered, then the algorithm does not perform any changes at this step. This implies that ${\algname}({\algname}(C))={\algname}(C)$.
\end{proof}

We now prove a lemma needed in the correctness proof for the algorithm. The lemma states that for two XOR-formulas to be equivalent, two of the arguments to the XOR operators already have to be equivalent, and this enables us to give an inductive proof for Theorem~\ref{theorem:xor algorithm equivalence}.

\begin{lemma}\label{lemma:xor proof equivalent subcircuits}
Let $\mathcal{F}\in\{\K,\KD\}$, $k\ge 1$, $n\ge 2$,  $D_1,\dots,D_n\in\multcirc{\Diamond}{k}{\set{\oplus,0,1}}$, let $\varphi_1,\varphi_2$ be propositional XOR-formulas, and let $\Diamond_{i_1} D_1\oplus\dots\oplus\Diamond_{i_n} D_n\oplus\varphi_1\oplus\varphi_2$ be not $\mathcal{F}$-satisfiable, where $i_1,\dots,i_n\in\set{1,\dots,k}$. Then
\begin{enumerate}
\item If $\mathcal{F}=\K$ and all $D_i$ are $\mathcal{F}$-satisfiable, then there exist $1\leq i\neq j\leq n$ such that $D_i\equiv_\mathcal{F}D_j$.
\item If $\mathcal{F}=\KD$ and all $D_i$ are $\mathcal{F}$-satisfiable and not $\mathcal{F}$-tautologies, then there exist $1\leq i\neq j\leq n$ such that $D_i\equiv_\mathcal{F}D_j$.
\end{enumerate}
\end{lemma}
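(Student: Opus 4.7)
The plan is to argue by contrapositive: assume the $D_j$'s are pairwise $\equiv_\mathcal F$-distinct and satisfy the extra hypotheses, and derive that the formula $F := \Diamond_{i_1} D_1 \oplus \dots \oplus \Diamond_{i_n} D_n \oplus \varphi_1 \oplus \varphi_2$ is $\mathcal F$-satisfiable, contradicting the premise. I first group the outer atoms by modality: for $i\in\set{1,\dots,k}$ let $J_i:=\set{j:i_j=i}$ and $S_i:=\bigoplus_{j\in J_i}\Diamond_i D_j$, so that $F=\bigoplus_i S_i\oplus\varphi_1\oplus\varphi_2$. At any pointed $\mathcal F$-model $(M,w)$, the value of $\varphi_1\oplus\varphi_2$ depends only on the propositional assignment at $w$, while each $S_i$ depends only on the $i$-subtree of $w$, and these pieces can be chosen independently (respecting seriality in $\KD$). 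Hence $F\equiv_\mathcal F 0$ forces each piece to be a constant: $\varphi_1\oplus\varphi_2\equiv_{\mathrm{prop}} c_{\mathrm{prop}}$ and $S_i\equiv_\mathcal F c_i$, with $\bigoplus_i c_i=c_{\mathrm{prop}}$. In the $\K$ case, taking $w$ with no $i$-successors forces $c_i=0$; in $\KD$ both values of $c_i$ remain a priori possible.

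Next, pick some modality $i^*$ with $J_{i^*}\ne\emptyset$ (which exists since $n\ge 2$) and convert ``$S_{i^*}\equiv_\mathcal F c_{i^*}$'' into algebraic constraints on the $D_j$'s. For a world $w$ whose $i^*$-successors realise pointed $\mathcal F$-models $u_1,\dots,u_m$, applying in $\mathbb F_2$ the inclusion-exclusion identity $\Diamond_{i^*}D_j(w) = \bigoplus_{\emptyset\ne T\subseteq[m]}\prod_{l\in T}D_j(u_l)$ and summing over $j\in J_{i^*}$ gives
\[ S_{i^*}(w) \;=\; \bigoplus_{\emptyset\ne T\subseteq[m]} g\bigl(\set{u_l:l\in T}\bigr), \qquad g(U) := \bigoplus_{j\in J_{i^*}}\prod_{u\in U}D_j(u). \]
Since $2^m-2$ is even for $m\ge 2$ and the singleton case $|U|=1$ is immediately available in both $\K$ and $\KD$, induction on $|U|$ yields $g(U)=c_{i^*}$ for every finite non-empty set $U$ of pointed $\mathcal F$-models.

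Finally, derive the contradiction. Using pairwise non-equivalence, $\mathcal F$-satisfiability, and (in the $\KD$ case) non-$\mathcal F$-tautology of the $D_j$'s with $j\in J_{i^*}$, pick a finite set $\Omega_0$ of pointed $\mathcal F$-models containing, for each pair $j\ne j'\in J_{i^*}$, a distinguishing model; for each $j$, a satisfying model; and, in the $\KD$ case, for each $j$, a falsifying model. Then the sets $T_j:=\set{u\in\Omega_0:u\models D_j}$ are pairwise distinct and non-empty, and are proper subsets of $\Omega_0$ in the $\KD$ case. Rewriting $g(U)\equiv\bigl|\set{j\in J_{i^*}:T_j\supseteq U}\bigr|\pmod 2$ and applying Möbius inversion on the Boolean lattice of the finite set $\Omega_0$ converts the identities $g(U)\equiv c_{i^*}\pmod 2$ (for all non-empty $U\subseteq\Omega_0$) into
\[ \bigl|\set{j\in J_{i^*}:T_j=T}\bigr| \;\equiv\; c_{i^*}\cdot[T=\Omega_0] \pmod 2 \]
for every non-empty $T\subseteq\Omega_0$, where $[T=\Omega_0]$ denotes the Iverson bracket. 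Setting $T=T_j$ the left-hand side equals $1$ by distinctness, while the right-hand side vanishes: either $c_{i^*}=0$ (as forced in $\K$), or $c_{i^*}=1$ but $T_j\subsetneq\Omega_0$ (as forced in $\KD$ by the chosen falsifier). This is the required contradiction.

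The hard part will be the middle step --- turning the qualitative fact that $S_{i^*}$ is an $\mathcal F$-constant function into the full family of multilinear moment identities $g(U)=c_{i^*}$ indexed by all finite non-empty sets $U$ of pointed $\mathcal F$-models. Once that is in hand, the finite Möbius inversion closes the argument uniformly across $\K$ and $\KD$; the case split reappears only at the very last step, where the $\KD$ non-tautology hypothesis is used solely to rule out the residual possibility $c_{i^*}=1$.
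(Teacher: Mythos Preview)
Your proof is correct and takes a genuinely different route from the paper. The paper, after showing that $\varphi_1\oplus\varphi_2$ is propositionally constant (and zero for $\K$), assumes pairwise inequivalence and picks a $D_m$ that is \emph{minimal} under $\mathcal F$-implication; minimality yields, for each $l\neq m$, a pointed model of $D_l\wedge\neg D_m$, and attaching these as the appropriate successors of a fresh root produces a world at which exactly $n-1$ of the $\Diamond_{i_l}D_l$ hold, contradicting constancy by a parity count. Your approach replaces this one-shot extremality trick with a systematic $\mathbb F_2$-expansion: the decomposition by modality isolates a single $S_{i^*}$, inclusion--exclusion plus induction converts its constancy into the full family of moment identities $g(U)=c_{i^*}$, and M\"obius inversion over the finite test set $\Omega_0$ finishes. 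The paper's argument is shorter and rests on one memorable idea; yours is heavier but handles the multi-modal indices more transparently (the paper is rather terse about which relation each successor is added under) and yields stronger intermediate structure. Incidentally, once your moment identities are in hand a shortcut avoids M\"obius inversion altogether: taking $U=\Omega_0$ forces $c_{i^*}=0$ in both logics (in \KD\ because no $T_j$ equals $\Omega_0$), and then taking $U$ equal to a \emph{maximal} $T_j$ gives $g(U)=|\{j':T_{j'}\supseteq T_j\}|=1$ directly --- an extremality step of the same flavor as the paper's minimality argument.
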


\begin{proof}
We first show that $\varphi_1$ is equivalent to $\varphi_2$ or to $\neg\varphi_2$. Consider an $\mathcal F$-model $M$ with a non-reflexive root world $w$. Changing truth assignments in $w$ only affects the propositional formulas $\varphi_1$ and $\varphi_2$. Since $\varphi_1\oplus\varphi_2$ is $\mathcal F$-equivalent to $\Diamond_{i_1} D_1\oplus\dots\oplus\Diamond_{i_n} D_n$, $\varphi_1\oplus\varphi_2$ must be constant. This only leaves these two choices for $\varphi_1,\varphi_2.$

Further, if $\mathcal{F}=\mathtext{K}$, then $\varphi_1\equiv\varphi_2$: Consider the frame $M$ with a world $w$ which does not have a successor. Since $\Diamond_{i_1} D_1\oplus\dots\oplus\Diamond_{i_n} D_n\oplus\varphi_1\oplus\varphi_2$ is not $\mathtext{K}$-satisfiable, this implies that $\varphi_1\oplus\varphi_2$ is not $\mathtext{K}$-satisfiable, thus $\varphi_1$ and $\varphi_2$ are $\mathtext{K}$-equivalent. Since these formulas are propositional, they are equivalent.

If $\varphi_1\equiv\varphi_2,$ then $\Diamond_{i_1} D_1\oplus\dots\oplus\Diamond_{i_n}D_n$ is not $\cal F$-satisfiable. If ${\cal F}=\mathtext{KD}$ and $\varphi_1\equiv\neg\varphi_2,$
then $\Diamond_{i_1} D_1\oplus\dots\oplus\Diamond_{i_n} D_n$ is an $\cal F$-tautology.
Assume that the $D_l$ are pairwise $\mathcal{F}$-inequivalent. Since all of the $D_l$ are $\cal F$-satisfiable, this implies that $n$ is even for $\varphi_1\equiv\varphi_2$, and odd for $\varphi_1\equiv\neg\varphi_2$: If this would not hold, we could construct a world which for every $D_i$ has an $i$-successor in which it holds, and this would satisfy $\Diamond_{i_1} D_1\oplus\dots\oplus\Diamond_{i_n} D_n\oplus\varphi_1\oplus\varphi_2.$

Let $D_m$ be a minimal element of $\{D_1,\dots,D_n\}$ with respect to $\cal F$-implication. This exists because $\cal F$-implication defines a partial order on the $D_i$ (the $\cal F$-inequivalence of the $D_i$ ensures the anti-symmetry). For each $l\neq k$, let $M_l$ be a model with a world $w_l$ such that $M_l,w_l\models D_l\wedge\neg D_m$. Let $M$ be a model containing a world $w$ which has all of the $w_l$ as successors. Then it holds that $\displaystyle M,w\models\neg \Diamond_{i_m} D_m\wedge\bigwedge_{l\neq k}\Diamond_{i_l} D_l,$ and thus $M,w$ satisfies an odd number of the $\Diamond_{i_m} D_m$ clauses if $n$ is even, and an even number if $n$ is odd. This model leads to a different truth value of the formula than the model where all of the $\Diamond_{i_m} D_m$'s are satisfied, which is a contradiction, because the formula is $\cal F$-constant.
\end{proof}

We now prove Theorem~\ref{theorem:xor algorithm equivalence}:

\begin{proof}
The \emph{propositional level} of a modal circuit $C$ with a propositional output gate is the set of propositional gates in $C$ that are connected to the output gate with a path having no gates representing modal operators.

Obviously, $C\equiv_{\cal F}{\algname}(C)$, and therefore ${\algname}(C_1)={\algname}(C_2)$ implies $C_1\equiv_{\cal F}C_2$. We now show the other direction.

Observe that the following holds when $\algname$ is given a circuit as input which on its propositional level is a formula (i.e., every gate in the propositional level has fan-out of at most $1$), which has circuits $D_i$ as inputs:

\begin{equation}\label{eq:l in p proof recursion} {\algname}(\Diamond_{i_1} D_1\oplus\dots\oplus\Diamond_{i_l} D_l)={\algname}(\Diamond_{i_1}{\algname}(D_1)\oplus\dots\oplus\Diamond_{i_l}{\algname}(D_l))
\end{equation}

Assume that the theorem does not hold, and let $C_1,C_2$ be $\cal F$-equivalent circuits, $l_i$ the number of diamonds in $C_i,$ such that ${\algname}(C_1)\neq{\algname}(C_2)$ and such that the pair $(C_1,C_2)$ is minimal with respect to $l_1+l_2$, and let $l_1\geq l_2.$ Because of Lemma~\ref{appendix:xor algorithm is polynomial time}, and since $\algname$ does not add diamonds, we can assume ${\algname}(C_1)=C_1$ and ${\algname}(C_2)=C_2.$

If the output gate of $C_1$ is propositional (without loss of generality, this is an $\oplus$-gate), then the algorithm converts the propositional level of the circuit to a formula over the variables corresponding
to the $\Diamond$-gates which are connected to the output gates with a non-modal path. Thus, since $\algname(C_1)=C_1$, we can consider the circuit as a formula $C_1=\Diamond_{i_1} D_1\oplus\dots\oplus\Diamond_{i_l} D_l\oplus\varphi_1,$ where the $D_l$ are the subcircuits starting before the highest diamonds. If the output gate of $C_1$ is modal, then $C_1$ is of the same form, with $k=l$ and $\varphi_1$ absent. In the same way, assume $C_2=\Diamond_{i_{l+1}} D_{l+1}\oplus\dots\oplus\Diamond_{i_n}
D_n\oplus\varphi_2$.

The circuits $D_1,\dots,D_l$ are pairwise $\cal F$-inequivalent: Assume $D_1\equiv_{\cal F} D_2$. Then, by minimality of $C_1,C_2,$ it holds that ${\algname}(D_1)={\algname}(D_2).$ Therefore, because the $D_j$ are lexicographically ordered (since ${\algname}(C_1)=C_1$), equation (\ref{eq:l in p proof recursion}) implies that $\Diamond_{i_1}{\algname}(D_1)\oplus\Diamond_{i_2}{\algname}(D_2)$ will be replaced with $0,$ which is a contradiction to ${\algname}(C_1)=C_1$. The same holds for $D_{l+1},\dots,D_n$. By an analogous argument, all of the $D_j$ are $\cal F$-satisfiable: $\algname$ converts unsatisfiable $D_j$ to $0$ and deletes them, since the $D_j$ have less diamonds than $l_1+l_2$. Additionally, if $\cal F=\rm KD$, we can assume that none of the $D_j$ is a KD-tautology, because $\Diamond_{i_j} 1$ is replaced by $1$.

Assume there exist $i,j$ such that $1\leq i\leq l<j\leq n,$ and $D_i\equiv_{\cal F} D_j$. By minimality of $l_1+l_2,$ it holds that ${\algname}(D_1)={\algname}(D_{l+1})$.  Define $E$ as $\algname(D_1)$.
Now, since we have

\medskip

\begin{tabular}{lcl}
$C_1={\algname}(C_1)$ & $=$ & $\phantom{\oplus}\Diamond_{i_1}{\algname}(D_1)\oplus\dots\oplus_{i_{i-1}}\Diamond{\algname}(D_{i_{i-1}})\oplus\Diamond_{i_1} E$ \\ && $\oplus\Diamond_{i_{{i+1}}} {\algname}(D_{i+1})\oplus\dots\oplus\Diamond_{i_l}{\algname}(D_{l})\oplus\left(\algname(\varphi_1)\right)$ \vspace*{2mm}\\ 
$C_2={\algname}(C_2)$ & $=$ & $\phantom{\oplus}\Diamond_{i_{l+1}}{\algname}(D_{l+1})\oplus\dots\oplus\Diamond_{i_{j-1}}{\algname}(D_{j-1})\oplus\Diamond E$ \\ && $\oplus\Diamond_{i_{j+1}}{\algname}(D_{j+1})\oplus\dots\oplus\Diamond_{i_n}{\algname}(D_n)\oplus\left(\algname(\varphi_2)\right)$,
\end{tabular}

\ \\
\noindent
we can replace $E$ with $0$ in $C_1$ and $C_2,$ and get a counter-example with less diamonds than $l_1+l_2,$ which is a contradiction. Therefore, all of the $D_j$ are pairwise $\cal F$-inequivalent and satisfiable. $C_1\oplus C_2=\Diamond_{i_1} D_1\oplus\dots\Diamond_{i_n} D_n\oplus\varphi_1\oplus\varphi_2$ is not $\cal
F$-satisfiable, since $C_1\equiv_{\cal F} C_2$. Thus, with Lemma~\ref{lemma:xor proof equivalent subcircuits} it follows that there exist $1\leq i\neq j\leq n$ such that $D_i\equiv_{\mathcal F} D_j$. This is a contradiction.

Thus, it follows that $n\leq 1$. First assume $n=1$. Then $C_1=\Diamond_{i_1} D_1\oplus\varphi_1\equiv_{\cal F}\varphi_2=C_2$. This is equivalent to $\varphi_1\equiv_{\cal F}\varphi_2$ ($\varphi_1\equiv_{\cal F}\neg\varphi_2$) and $D_1$ is not $\cal F$-satisfiable (an $\cal F$-tautology). Thus, $D_1$ is not $\cal F$-satisfiable (an $\cal F$-tautology), which is a contradiction to the above.

Therefore $n=0$, and both circuits are propositional (since any occurring $\Diamond$-gates that are not connected to the output-gate are removed by the algorithm), and $\algname(C_i)=C_i$. In this case, $\algname$ rewrites the input circuits as formulas, orders the appearing variables and constants, and deletes double occurrences. The result is a unique formula representation of the input circuit. Thus, the claim holds for $l=0$, and hence the theorem is proven.
\end{proof}

It is interesting to note that since the algorithm never adds a gate to a circuit, Theorem~\ref{theorem:xor algorithm equivalence} implies that for a given input circuit, the algorithm computes a smallest possible circuit representing the same function. This implies that minimization problems in this context can be decided in polynomial time as well. Note that the algorithm, when given a formula as input, also returns a formula. Therefore, it can be used to minimize both circuits and formulas.

The above proof does not generalize to other classes of frames. The main reason is that no analog of Lemma~\ref{lemma:xor proof equivalent subcircuits} seems to hold for classes of frames involving, for example, reflexivity or transitivity. While we conjecture that the corresponding problem for these classes of frames can still be solved in polynomial time, we mention that there are examples in the literature that behave differently---sometimes, restricting the class of frames increases the complexity of the modal satisfiability problem. For example, Halpern showed that when considering only formulas of bounded modal nesting degree, the complexity of the satisfiability problem for \K\ drops from \PSPACE-complete to \NP-complete. On the other hand, for the logic \Sfour, the problem remains \PSPACE-complete \cite{hal95}. Therefore syntactical restrictions that reduce the complexity of the general logic \K\ do not necessarily also reduce the complexity for logics defined over a restricted class of models.

Our results for linear propositional functions conclude our discussion about the modal satisfiability problem for the class of frames \K\ in the case that we allow both modal operators in our formulas and circuits: Figure~\ref{fig:lattice} shows that we have covered all clones, and hence reached a complete classification of this problem.

\subsection{Satisfiability With Only One Operator}

We now look at satisfiability problems with only one of type of operators $\Diamond$ or $\Box$ present. For sets $B$ such that $\clone B\supseteq {\rm S}_1$, we already established \PSPACE-completeness for the classes of frames we consider (Corollary~\ref{cor:s1 formulas pspace complete}). Since polynomial-time results for the case where we allow both $\Box$ and $\Diamond$ obviously carry over to the case where only one of them is allowed, the following theorem completes a full classification of the problem.

\begin{theorem}\label{theorem:monotone in p for one modal operator and k or k4}
  Let $B$ be a finite set of Boolean functions such that $B\subseteq\cM,$ let $k\ge 0$, and let $M=\set{\Box}$ or $M=\set{\Diamond}$, and let $\mathcal F\in\set{\K,\Kfour}.$ Then $\circsat{\K}{M}{k}{B}\in\PTIME$.
\end{theorem}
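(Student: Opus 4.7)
The plan is to reduce, using Lemma~\ref{lemma:classes work multi-modal}, to the case $B=\set{\wedge,\vee,0,1}$ (since $\clone B=\cM$), and then to handle the two choices for $M$ separately by exhibiting, in each case, a ``canonical best'' $\mathcal F$-model on which satisfiability can be decided by straightforward recursive evaluation of the circuit.

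For $M=\set{\Box}$, the canonical model is the irreflexive singleton $I$, which is an $\mathcal F$-frame for both $\mathcal F=\K$ and $\mathcal F=\Kfour$. I would show that $C\in\multcirc{\Box}{k}{B}$ is $\mathcal F$-satisfiable iff $C$ is true at $I$ under the all-true assignment. The corresponding algorithm computes one bit per gate: $v(g)=1$ for variables, constant-$1$ gates, and every $\Box_i$-gate (vacuously satisfied at a world with no successors); $v(0)=0$; and $v(\cdot)$ distributes over $\wedge$ and $\vee$. One direction of correctness is immediate (the evaluation at $I$ is a witness). For the other direction I would induct on a subcircuit $C_g$ to show that if some model satisfies $C_g$ at some world then $v(g)=1$; every case is trivial since the only nontrivial operator, $\Box_i$, is handled by the definition of~$v$.

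For $M=\set{\Diamond}$, the canonical model is the ``full true tree'' $T_d$ of depth $d=\md C$: a rooted tree in which every non-leaf has exactly one $R_i$-successor for each $i\in\set{1,\dots,k}$ and every variable holds at every world. I would compute, by dynamic programming in time $O(\card C\cdot\md C)$, a table $v(g,d')$ for each gate $g$ and each $d'\in\set{0,\dots,\md C}$, giving the value of $C_g$ at the root of $T_{d'}$: variables and $1$ give $1$; $0$ gives $0$; $\wedge,\vee$ distribute; and $v(\Diamond_i h,d')=v(h,d'-1)$ if $d'\geq 1$, $0$ otherwise. The algorithm accepts iff $v(\mathit{out},\md C)=1$.

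The main obstacle is correctness in the $\Diamond$-case, specifically showing that $\K$-satisfiability of $C_g$ implies $v(g,\md{C_g})=1$. I would prove this by induction on $C_g$, using as a lemma the fact that $v(g,\cdot)$ is monotone in its second argument (itself an easy induction on $C_g$ from the recursive definition). In the $\wedge$-case both conjuncts are satisfied at the same world, so the inductive hypothesis gives $v(h_i,\md{C_{h_i}})=1$, and monotonicity lifts this to the depth $\md{C_g}$; in the $\Diamond_i$-case the witnessing successor in the satisfying model supplies $\K$-satisfiability of the predecessor subcircuit, and induction applies. Finally, to extend the $M=\set{\Diamond}$ algorithm to $\Kfour$, I would observe that for monotone $\Diamond$-only circuits, $\K$- and $\Kfour$-satisfiability coincide: given a $\K$-model $M$ satisfying $C$, its transitive closure $M^+$ is a $\Kfour$-model, and a straightforward induction using monotonicity of $\Diamond$ in the successor relation shows that $M^+$ still satisfies $C$ at the same world. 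Combined with the trivial inclusion $\Kfour\subseteq\K$, this completes the proof for both frame classes.
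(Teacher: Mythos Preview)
Your proof is correct, and for $M=\set{\Box}$ it coincides with the paper's argument (irreflexive singleton, all variables true, induction on the subcircuit). For $M=\set{\Diamond}$, however, you work harder than necessary: the paper uses the \emph{reflexive} singleton with all variables true as the canonical model. In that model every $\Diamond_i$ collapses to the identity, so the same induction that handles $\Box$ goes through verbatim---in particular the $\wedge$-case needs no monotonicity-in-depth lemma, because both conjuncts are inductively satisfied in the very same one-world model. Since the reflexive singleton is already transitive, it is a $\Kfour$-frame as well, and your separate transitive-closure argument for $\Kfour$ becomes unnecessary.

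Your depth-indexed tree and dynamic program are sound and illustrate a general ``tree-model property'' viewpoint; they just cost an extra factor of $\md C$ in running time and two auxiliary lemmas that the paper avoids entirely. One small slip: your parenthetical ``since $\clone B=\cM$'' should read $\clone B\subseteq\cM$; this is what is actually needed (and given) to invoke Lemma~\ref{lemma:classes work multi-modal}.
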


\begin{proof}
Due to Lemma~\ref{lemma:classes work multi-modal}, we can assume that $B=\set{\wedge,\vee,0,1}.$ We now show that in the case $M=\set{\Diamond},$ a circuit $C\in\multcirc{\Diamond}{k}{B}$ is $\mathcal F$-satisfiable if and only if it is satisfied in the reflexive singleton where each variable is set to true, and in the case $M=\set{\Box},$ every $C\in\multcirc{\Box}{k}{B}$ is $\mathcal F$-satisfiable if and only if it is satisfied in the irreflexive singleton with every variable set to true (since both the reflexive and the irreflexive singleton are $\mathcal F$-models, the ``if'' direction of this claim is trivial). These conditions obviously can be tested in polynomial time.

We show the claim by induction on the structure of the formula expansion of the circuit. If $C$ is a single variable or a constant, then the claim obviously holds. Now assume that $C=C_1\vee C_2.$ If $C$ is satisfiable, then at least one of $C_1,C_2$ is satisfiable, and due to induction, they are satisfied in the reflexive resp. irreflexive singleton with every variable set to true. If $C=C_1\wedge C_2,$ and $C$ is satisfiable then both $C_1$ and $C_2$ are satisfiable. By induction, both of them are satisfied in the singleton with every variable set to true. Hence, $C$ is satisfied in this singleton as well.

For the modal operators, assume that $C=\Diamond_i D$ for some $i\in\set{1,\dots,k}.$ If $C$ is satisfiable, then obviously $D$ is satisfiable as well, and by induction, $D$ is satisfiable in the reflexive singleton with every variable set to true. For this case, $C$ obviously is satisfied in the same model.

Finally, if $C=\Box_i D$ for some $i\in\set{1,\dots,k},$ then by definition $C$ is satisfied in the irreflexive singleton with every variable set to true.
\end{proof}

\section{The Validity Problem}\label{section:validity}

Besides the satisfiability problem, another problem which often is of interest is the validity problem, i.e., the problem to decide whether a given formula is valid, or is a tautology in a given logic. Recall that in our context, a formula $\varphi$ is an $\mathcal F$-tautology if and only if $\varphi$ is $\mathcal F$-equivalent to $1$ (this is the case if and only if $\varphi$ holds in every world of every $\mathcal F$-model).

It is obvious that a formula $\varphi$ is a tautology if and only if $\neg\varphi$ is not satisfiable. With this easy observation, the complexity of the satisfiability problem and that of the validity problem often can be related to each other---they are ``duals'' of each other. However, in the case of restricted propositional bases, we cannot always express negation, which is necessary in order to do the transformation mentioned above directly. Therefore, we consider a more general notion of duality, which is closely related to the self-dual property defined for functions earlier: A function $f$ is self-dual if and only if $\dual f=f.$ 

\begin{definition}
Let $f$ be an $n$-ary Boolean function. Then $\dual f$ is the $n$-ary Boolean function defined as $\dual f(x_1,\dots,x_n)=\neg f(\overline{x_1},\dots,\overline{x_n}).$
\end{definition}

For a set $B$ of Boolean functions, $\dual B$ is defined as the set $\set{\dual f\ \vert\ f\in B}.$ Obviously, a similar duality exists between the modal operators $\Diamond$ and $\Box$: For a set $M\subseteq\set{\Box,\Diamond}$, we define $\dual M$ to be the set such that $\Box\in\dual M$ if and only if $\Diamond\in M$, and $\Diamond\in\dual M$ if and only if $\Box\in M$. For a clone $B,$ the dual clone $\dual B$ can easily be identified in Post's Lattice (see Figure~\ref{fig:lattice}), as it is simply the ``mirror class'' with regard to the vertical symmetry axis in the lattice. The following theorem shows that complexity classifications for the satisfiability problem immediately give dual classifications for the validity problem.

\begin{theorem}
Let $B$ be a finite set of Boolean functions, let $k\ge 0$, and let $\mathcal F$ be a class of frames, and let $M\subseteq\set{\Box,\Diamond}.$ Then the following holds:
\begin{enumerate}
\item $\circtaut{\mathcal F}{M}{k}{B}\eqlogm\overline{\circsat{\mathcal F}{\dual M}{k}{\dual B}}.$
\item $\formtaut{\mathcal F}{M}{k}{B}\eqlogm\overline{\formsat{\mathcal F}{\dual M}{k}{\dual B}}.$
\end{enumerate}
\end{theorem}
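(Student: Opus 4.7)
The plan is to give, for each circuit $C \in \multcirc{M}{k}{B}$, a \emph{dual circuit} $C^d \in \multcirc{\dual M}{k}{\dual B}$ obtained by (i) replacing every gate whose type is a function $f \in B$ with a gate of type $\dual f \in \dual B$, (ii) replacing every $\Diamond_i$-gate with a $\Box_i$-gate and every $\Box_i$-gate with a $\Diamond_i$-gate, and (iii) leaving variable-input gates untouched. Since $B$ (and thus $\dual B$) is a fixed finite set of Boolean functions, the table mapping $f \mapsto \dual f$ is of constant size, so this transformation is clearly computable in logarithmic space. The same construction restricts to formulas, so it gives both reductions uniformly. Since duality is involutive both on Boolean functions ($\dual{\dual f} = f$) and on $\set{\Box,\Diamond}$, we have $(C^d)^d = C$, which will supply the converse reductions.

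The technical heart of the proof is a semantic duality lemma. For a model $M = (W,R_1,\dots,R_k,\pi)$, let $\overline M = (W,R_1,\dots,R_k,\overline\pi)$ where $\overline\pi(x) := W \setminus \pi(x)$. Note that $\overline M$ has the same underlying frame as $M$, so $\overline M$ is an $\mathcal F_k$-model if and only if $M$ is. I claim that for every gate $v$ of $C$ and every world $w \in W$,
\[
  M,w \models \varphi_v \quad\Longleftrightarrow\quad \overline M, w \not\models \varphi_{v^d},
\]
where $v^d$ is the corresponding gate of $C^d$. The proof is a straightforward induction on the structure of the formula expansion. The variable and constant cases are immediate from the definition of $\overline\pi$ and the fact that $\dual 0 = 1$, $\dual 1 = 0$. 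For an internal $f$-gate, one uses the identity $\dual f(b_1,\dots,b_n) = \neg f(\neg b_1,\dots,\neg b_n)$ together with the inductive hypothesis applied to each predecessor. For the modal step, one observes directly from the semantics that
\[
  \overline M,w \models \Box_i \psi^d \iff (\forall w' \colon (w,w') \in R_i)\, \overline M,w' \models \psi^d,
\]
which by induction is equivalent to ``no $R_i$-successor of $w$ satisfies $\psi$ in $M$,'' i.e.\ $M,w \not\models \Diamond_i\psi$; the $\Box_i$-case is symmetric.

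Applying the lemma at the output gate, $C$ is an $\mathcal F_k$-tautology iff for every $\mathcal F_k$-model $M$ and every world $w$ we have $\overline M, w \not\models C^d$; as $M$ ranges over all $\mathcal F_k$-models, so does $\overline M$, so this says exactly that $C^d$ is not $\mathcal F_k$-satisfiable. Hence $C \mapsto C^d$ is a log-space many-one reduction from $\circtaut{\mathcal F}{M}{k}{B}$ to $\overline{\circsat{\mathcal F}{\dual M}{k}{\dual B}}$. The reverse reduction is obtained by applying the same construction to a circuit $D \in \multcirc{\dual M}{k}{\dual B}$, noting that $(D^d)^d = D$ and that $D^d \in \multcirc{M}{k}{B}$ by involutivity. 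The formula statement (2) is proved identically, since the construction preserves the out-degree-$\le 1$ property.

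The only mildly subtle point, which I expect to be the main obstacle to present cleanly, is keeping the two ``complementations'' straight: the switch to $\overline M$ on the semantic side and the replacement of each $f$ by $\dual f$ on the syntactic side must cancel exactly once, and variables must be handled as the ``self-dual'' identity rather than being negated. Once the semantic duality lemma is set up with the complemented valuation $\overline\pi$, the rest of the argument is mechanical, and the fact that the frame classes considered are defined purely in terms of the accessibility relations (so $M$ and $\overline M$ belong to the same class) completes the reduction without any additional hypothesis on $\mathcal F$.
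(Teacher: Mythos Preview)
Your proposal is correct and follows essentially the same approach as the paper: define the dual circuit by swapping each $f$-gate for a $\dual f$-gate and each $\Box_i$/$\Diamond_i$ for its counterpart, prove by structural induction the semantic duality lemma $M,w\models C \Leftrightarrow \overline M,w\not\models C^d$ with the complemented valuation, and use involutivity of the construction for the reverse reduction. Your write-up is in fact slightly tidier in explicitly noting that variable gates are left unchanged (as self-dual identities) and that the frame is untouched so membership in $\mathcal F$ is preserved.
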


\begin{proof}
Let $C$ be a circuit from $\multcirc{M}{k}{B}.$ We construct the circuit $\dual C$ by exchanging every $f$-gate for a function $f\in B$ with a $\dual f$-gate. Similarly, we replace every $\Box_i$-gate with a $\Diamond_i$-gate, and vice versa. It is obvious that this transformation can be performed in logarithmic space, and that the same transformation can be applied to formulas.

It remains to prove that $C$ is unsatisfiable if and only if $\dual C$ is a tautology. Since $\dual .$ is obviously injective, and $\dual{\dual C}=C,$ this also proves that $C$ is a tautology if and only if $\dual C$ is unsatisfiable, and hence proves the reduction.

Inductively, we show a more general statement: For any modal model $M,$ let $\neg M$ denote the model obtained from $M$ by reversing the propositional truth assignment, i.e., where a variable in a world is true if and only if the same variable is false in the same world in $M.$ We show that for any model $M$ and any world $w\in M,$ it holds that $M,w\models C$ if and only if $\neg M,w\nmodels\dual C.$ This obviously completes the proof, since $M$ is an $\mathcal F$-model if and only if $\neg M$ is.

We show the claim by induction on the structure of $C.$ First, assume that $C$ is equivalent to the variable $x_i.$ Then $M,w\models C$ if and only if $M,w\models x_i$ if and only if $\neg M,w\nmodels x_i.$ Since $\dual{x_i}=x_i,$ this proves the base step.

Now assume that the output gate $g$ of $C$ is an $f$-gate for an $n$-ary Boolean function $f\in B,$ and let $g_1,\dots,g_n$ be the predecessor gates of $g$ in $C.$ By induction, we know that for each $j\in\set{1,\dots,n},$ it holds that $M,w\models C_{g_j}$ if and only if $\neg M,w\nmodels\dual {C_{g_j}}$ (where $C_{g_j}$ is the subcircuit of $C$ with output gate $g_j$). For $j\in\set{1,\dots,n},$ let $\alpha_j$ be defined as $1$ if $M,w\models C_{g_j},$ and $0$ otherwise. By induction, we know that $\alpha_j$ is $1$ if and only if $\neg M,w\nmodels\dual C.$ Now observe that $M,w\models C$ if and only if $f(\alpha_1,\dots,\alpha_n)=1,$ if and only if $\dual{f}(\overline{\alpha_1},\dots,\overline{\alpha_n})=0,$ and this is the case if and only if $\neg M,w\nmodels\dual C.$

Now assume that the output gate $g$ of $C$ is a $\Diamond_i$-gate for some $i\in\set{1,\dots,k},$ and let $h$ be the predecessor gate of $g$ in $C.$ Then the following holds:

\smallskip

\begin{tabular}{rcl}
$M,w\models C$ & iff & there is a world $w'$ such that $(w,w')\in R_i$ and $M,w'\models C_h$ \\
               & iff & there is a world $w'$ such that $(w,w')\in R_i$ and $\neg M,w'\nmodels \dual{C_h}$\\
               & iff & $\neg M,w\nmodels\Box_i\dual{C_h}$\\
               & iff & $\neg M,w\nmodels\dual C.$
\end{tabular}

\smallskip

Finally, assume that the output gate $g$ of $C$ is a $\Box_i$-gate for some $i\in\set{1,\dots,k},$ and let $h$ be the predecessor gate of $g$ in $C.$ Then the following holds:

\smallskip

\begin{tabular}{rcl}
$M,w\models C$ & iff & for each world $w'$ such that $(w,w')\in R_i,$ $M,w'\models C_h$ \\
               & iff & for each world $w'$ such that $(w,w')\in R_i,$ $\neg M,w'\nmodels \dual{C_h}$\\
               & iff & $\neg M,w\nmodels\Diamond_i\dual{C_h}$\\
               & iff & $\neg M,w\nmodels\dual C.$
\end{tabular}

This concludes the induction, and therefore the proof.
\end{proof}

\section{Conclusion and Further Research}\label{section:conclusion}

We completely classified the complexity of the modal satisfiability and validity problems arising when restricting the allowed propositional operators in the formula for the logics \K\ and \KD. We showed that the more succinct representation of modal formulas as circuits does not have an effect on the complexity of these problems up to $\redpm$-degree. We also showed that for multi-modal logics, the results only depend on whether we have $0$, $1$, or $2$ modalities, adding more modal operators does not increase the complexity of the problems we studied. Note that in many cases, our results hold for more general classes of frames, as often, they are stated for any class containing the reflexive singleton, or similar conditions. This does not only apply to most of our polynomial-time results, but also for our circuit-to-formula construction in Corollary~\ref{corollary:circuit-to-formula}, and our implementation results in Theorem~\ref{theorem:s11 implementation} and the uni-modal version of Theorem~\ref{theorem:s1 implementation}.

The most obvious next question to look at is to complete our complexity classification for other classes of
frames. For $\mathcal F\in\{\mathtext{T},\mathtext{S}4,\mathtext{S}5\}$,
our proofs already give a complete classification with the exception of the
complexity of the problems $\formsat{\mathcal F}{M}{k}{B}$ and $\circsat{\mathcal F}{M}{k}{B}$ where $\clone B\in\set{\cL_0,\cL_1}$. We conjecture that these cases are solvable in polynomial time as well, however, to solve these cases different ideas from the ones used in the proof for \K\ and \KD\ are required. Another interesting question is the exact complexity of our polynomial cases, most notably the case where the propositional operators represent linear functions. 

There are many other interesting directions for future research. For example, one can look at other decision problems (e.g., global satisfiability and formula minimization), and one can try to generalize modal logic modally as well as propositionally.

\section{Acknowledgments}

We thank Michael Bauland for his work on the work presented in~\cite{bhss05b}, and Thomas Schneider and Heribert Vollmer for helpful discussions. We also thank the anonymous STACS referees for their helpful comments and suggestions, and Steffen Reith for providing the figure of Post's Lattice.

\newcommand{\etalchar}[1]{$^{#1}$}

\end{document}